\definecolor{violet}{RGB}{124,35,217}
\numberwithin{figure}{section}
\numberwithin{equation}{section}
\newcommand{\ER}{Erd\H{o}s-R\'enyi }
\DeclareMathOperator{\dv}{div}
\DeclareMathOperator{\RP}{RP}
\DeclareMathOperator{\RD}{RD}
\DeclareMathOperator{\proba}{Prob}
\DeclareRobustCommand{\o}[1]{{\cal o} \left(#1\right)}
\DeclareRobustCommand{\O}[1]{{\cal O} \left(#1\right)}
\DeclareRobustCommand{\T}[1]{{\Theta} \left(#1\right)}
\DeclareRobustCommand{\om}[1]{{\omega} \left(#1\right)}
\DeclareRobustCommand{\dbar}{{\overline{d_n}}}
\DeclareRobustCommand{\dbarp}{{\overline{d_{n+1}}}}
\DeclareRobustCommand{\E}[1]{\mathbb{E}\left[#1\right]}
\DeclareRobustCommand{\prob}[1]{\proba\left(#1\right)}
\newtheorem{Theorem}{Theorem}
\newtheorem{proposition}{Proposition}
\newtheorem{lemma}{Lemma}
\newtheorem{definition}{Definition}
\newtheorem{remark}{Remark}
\newtheorem{corollary}{Corollary}
\newcommand{\N}{\mathbb{N}}
\newcommand{\bA}{\bm{A}}
\newcommand{\bB}{\bm{B}}
\newcommand{\bD}{\bm{D}}
\newcommand{\bL}{\bm{L}}
\newcommand{\cG}{\mathcal{G}}
\newcommand{\hR}{\widehat{R}}
\newcommand{\DhR}{\Delta \hR}
\DeclareRobustCommand{\er}[1]{{\hR_{#1}}}
\newcommand{\hbR}{{\widehat{\bm{R}}}}
\newcommand{\eqdef}{\stackrel{\text{\tiny def}}{=}}
\title{Detecting Topological Changes in Dynamic Community Networks
\footnote{This work was supported by NSF DMS 1407340.}}
\author{Peter Wills and Fran\c{c}ois G. Meyer\\
  \url{peter.wills@colorado.edu}, \url{fmeyer@colorado.edu}\\
  ~\\
  Applied Mathematics \& Electrical Engineering\\
  University of Colorado at Boulder, Boulder CO 80305}
\begin{document}
\date{}
\maketitle
\begin{abstract}
  The study of time-varying (dynamic) networks (graphs) is of fundamental importance for computer
  network analytics. Several methods have been proposed to detect the effect of significant structural 
  changes in a time series of graphs. 

  The main contribution of this work is a detailed analysis of a dynamic community graph model. This
  model is formed by adding new vertices, and randomly attaching them to the existing nodes. It is a
  dynamic extension of the well-known stochastic blockmodel. The goal of the work is to detect the
  time at which the graph dynamics switches from a normal evolution -- where balanced communities
  grow at the same rate -- to an abnormal behavior -- where communities start merging.

  In order to circumvent the problem of decomposing each graph into communities, we use a metric to
  quantify changes in the graph topology as a function of time.  The detection of anomalies becomes
  one of testing the hypothesis that the graph is undergoing a significant structural change.

  In addition the the theoretical analysis of the test statistic, we perform Monte Carlo simulations
  of our dynamic graph model to demonstrate that our test can detect changes in graph topology.
\end{abstract}

\section{Introduction}
The study of time-varying (dynamic) networks (or graphs) is of fundamental importance for
computer network analytics and the detection of anomalies associated with cyber crime
\cite{hoque14,ide04,king14}. Dynamic graphs also provide models for social networks
\cite{adler07,gilbert11}, and are used to decode the functional connectivity in neuroscience
\cite{gollo14,hutt14,stam14} and biology \cite{bass13}. The significance of this research
topic has triggered much recent work \cite{akoglu14,lafond14,ranshous15}. Several methods have
been proposed to detect the effect of significant structural changes (e.g., changes in topology,
connectivity, or relative size of the communities in a community graph) in a time series of
graphs. We focus on networks that change over time, allowing both edges and nodes to be added or
removed. We refer to these as \emph{dynamic networks}.

A fundamental goal of the study of dynamic graphs is the identification of universal patterns that
uniquely couple the dynamical processes that drive the evolution of the connectivity with the
specific topology of the network; in essence the discovery of universal spatio-temporal patterns
\cite{kovanen13,karsai14}. In this context, the goal of the present work is to detect anomalous changes
in the evolution of dynamic graphs. We propose a novel statistical method, which captures the
coherence of the dynamics under baseline (normal) evolution of the graph, and can detect switching
and regime transitions triggered by anomalies.  Specifically, we study a mathematical model of
normal and abnormal growth of a community network. Dynamic community networks have recently been the
topic of several studies \cite{toivonen06,baingana16,chi07,fu09,kolar10,lin08,mitra12}. The simplest
incarnation of such models, a dynamic stochastic blockmodel
\cite{tang14,xing10,xu14,xu15,yang11,wilson16,pensky16,ho15}, is the subject of our study. These
graph models have a wide range of applications, ranging from social networks
\cite{ishiguro10,yu16,lin09,tantipathananandh07,palla07,greene10,heaukulani13,laurent15} to computer
networks \cite{pincombe07,sun10} and even biology and neuroscience \cite{lee17}.

In order to circumvent the problem of decomposing each graph into simpler structures (e.g.,
communities), we use a metric to quantify changes in the graph topology as a function of time.
The detection of anomalies becomes one of testing the hypothesis that the graph is undergoing a
significant structural change. Several notions of similarity have been proposed to quantify the
structural similitude without resorting to the computation of a true distance (e.g.,
\cite{baur05,koutra16} and references therein). Unlike a true metric, a similarity is
typically not injective (two graphs can be perfectly similar without being the same), and rarely
satisfies the triangle inequality. This approach relies on the construction of a feature vector
that extracts a signature of the graph characteristics; the respective feature vectors of the two
graphs are then compared using a distance, or a kernel. In the extensive review of Koutra et
al. \cite{koutra16}, the authors studied several graph similarities and distances. They concluded
that existing similarities and distances either fail to conform to a small number of well-founded
axioms, or suffer from a prohibitive computational cost.  In response to these shortcomings,
Koutra et al. proposed a novel notion of similarity \cite{koutra16}.   

Inspired by the work of \cite{koutra16}, we proposed in \cite{monnig16} a true metric that
address some of the limitations of the DeltaCon similarity introduced in \cite{koutra16}.  We
emphasize that it is highly preferable to have a proper metric, rather than an informal distance,
when comparing graphs; this allows one to employ proof techniques not available in the absence of
the triangle inequality.  Our distance, coined the \emph{resistance-perturbation distance}, can
quantify structural changes occurring on a graph at different scales: from the local scale formed
by the neighbors of each vertex, to the largest scale that quantifies the connections between
clusters, or communities. Furthermore, we proposed fast (linear in the number of edges) randomized
algorithms that can quickly compute an approximation to the graph metric, for which error bounds
are proven (in contrast to the {\tt DeltaCon} algorithm given in \cite{koutra16}, which has a
linear time approximate algorithm but for which no error bounds are given). 

The main contribution of this work is a detailed analysis of a dynamic community graph model, which
we call the dynamic stochastic blockmodel. This model is formed by adding new vertices, and randomly
attaching them to the existing nodes. The goal of the work is to detect the time at which the graph
dynamics switches from a normal evolution -- where two balanced communities grow at the same rate --
to an abnormal behavior -- where the two communities are merging. Because the evolution of the graph
is stochastic, one expects random fluctuations of the graph geometry. The challenge is to detect an
anomalous event under normal random variation.  We propose an hypothesis test to detect the abnormal
growth of the balanced stochastic blockmodel.  In addition to the theoretical analysis of the test
statistic, we conduct several experiments on synthetic networks, and we demonstrate that our test
can detect changes in graph topology.

The remainder of this paper is organized as follows.  In the next section we introduce the main
mathematical concepts and corresponding nomenclature. In section \ref{rp-section} we recall the
definition of the resistance perturbation distance. We provide a straightforward extension of the
metric to graphs of different sizes and disconnected graphs. In section \ref{the-models} we formally
define the problem, we introduce the dynamic \emph{balanced two-community stochastic blockmodel}. We
describe the main contributions and the line of attack in Section \ref{the-main-results}. In Section
\ref{experiments} we present the results of experiments conducted on synthetic dynamic networks,
followed by a short discussion in Section \ref{discussion}.
\section{Preliminaries and Notation}
We denote by $G = (V,E)$ an undirected, unweighted graph, where $V$ is the vertex set of size $n$,
and $E$ is the edge set of size $m$. We will often use $u,$ $v,$ or $w$ to denote vertices in
$V$. For an edge $e\in E$, we denote by $\text{endpoints} \,(e)$ the subset of $V$ formed by the two
endpoint of $e$.

We use the standard asymptotic notation; see Appendix \ref{appx:asymptotic} for details.  Given
a family of probability spaces $\Omega = \left\{\Omega_n,\proba_n\right\}$, and a sequence of events
$E=\left\{E_n\right\}$, we write that $\Omega$ has the property \emph{with high probability}
(``w.h.p.''), if $\lim_{n\rightarrow \infty} \prob{E_n} =1 $.\\

\noindent When there is no ambiguity, we use the following abbreviated summation notation,
\begin{equation*}
  \label{eq:36}
  \sum_{u\leq n} \text{ is short for } \sum_{u=1}^n \text{ and } \sum_{u<v\leq n} \text{ is short for } \sum_{u=1}^n \sum_{v=u+1}^n.
\end{equation*}

Table \ref{tab:notations} in Appendix \ref{the-notations} provides a list of the main notations used in the paper.
\subsection{Effective Resistance}
We briefly review the notion of effective resistance \cite{klein1993,doyle1984,ghosh2008,ellens2011}
on a connected graph. The reader familiar with the concept can jump to the next section. There are
many different ways to present the concept of effective resistance. We use the electrical analogy,
which is very standard (e.g., \cite{doyle1984}). Given an unweighted graph $G=(V,E)$, we transform
$G$ into a resistor network by replacing each edge $e$ by a resistor with unit resistance.
\begin{definition}[Effective resistance \cite{klein1993}]
  The effective resistance $\er{uv}$ between two vertices $u$ and $v$ in $V$ is defined as the
  voltage applied between $u$ and $v$ that is required to maintain a unit current through the
  terminals formed by $u$ and $v$.\\

  \noindent We denote by $\hbR$ the $n\times n$ matrix with entries $\er{uv}, \, u,v =1,\ldots,n$.
  \label{eff-res}
\end{definition}
The relevance of the effective resistance in graph theory stems from the fact that it provides a
distance on a graph \cite{klein1993} that quantifies the connectivity between any two vertices, not
simply the length of the shortest path. Changes in effective resistance reveal structural changes
occurring on a graph at different scales: from the local scale formed by the neighbors of each
vertex, to the largest scale that quantifies the connections between clusters, or communities.

\section{Resistance Metrics\label{rp-section}}
\subsection{The Resistance Perturbation Metric}
The effective resistance can be used to track structural changes in a graph, and we use it to define
a distance between two graphs on the same vertex set \cite{monnig16} (see also \cite{sricharan14}
for a similar notion of distance). Formally, we define the Resistance Perturbation Distance as
follows.
\begin{definition}[Resistance Perturbation Distance]
  Let $G^{(1)} = (V,E^{(1)})$ and $G^{(2)} = (V,E^{(2)})$ be two connected, unweighted, undirected
  graphs on the same vertex set, with respective effective resistance matrices, $\hbR^{(1)}$ and
  $\hbR^{(2)}$ respectively.  The {\em RP-p distance} between $G^{(1)}$ and $G^{(2)}$ is defined as
  the element-wise p-norm of the difference between their effective resistance matrices.  For
  $1 \leq p < \infty$,
  \begin{equation}
    \RP_p(G^{(1)},G^{(2)}) = \left\lVert \hbR^{(1)} - \hbR^{(2)} \right\rVert_{p} =  \left[
      \sum_{i,j \in V} \left\lvert \er{ij}^{(1)} - \er{ij}^{(2)} \right\rvert^p
    \right]^{1/p}\mspace{-24mu}. 
    \label{rp_dist_def_eqn}
  \end{equation}
\end{definition}
In this paper, we will restrict our attention to the $\RP_1$ distance (we will omit the
subscript $p=1$), because it is directly analogous to the Kirchhoff index.
\subsection{Extending the Metric to Disconnected Graphs}
The resistance metric is not properly defined when the vertices are not within the same connected
component. To remedy this, we use a standard approach. Letting $\er{uv}$ denote the effective
resistance between two vertices $u$ and $v$ in a graph, then the conductivity
$C_{u\, v} = \er{uv}^{-1}$ can be defined to be zero for vertices in disconnected
components. Considering the conductivity as a similarity measure on vertices, a distance is given by
the quantity $(1+C_{u\, v})^{-1}$. Note that $(1+C_{u\, v})^{-1} = \hR_{u\, v}/(\hR_{u\, v}+1)$,
and so we can define this new quantity relative to the effective resistance without any reference to
the conductance. We refer to the resulting quantity as the \emph{renormalized effective resistance}.

\begin{definition} [Renormalized Effective Resistance]
  Let $G=(V,E)$ be a graph (possibly disconnected).  

  We define the renormalized effective resistance between any two vertices $u$ and $v$ to be
  \begin{equation}
    R_{u\, v} = 
    \begin{cases}
      \hR_{u\, v}/(\hR_{u\, v}+\beta) & \text{if $u$ and $v$ are connected},\\
      1 & \text{otherwise,}
    \end{cases}
    \label{eq:renorm-res-defn} 
  \end{equation} 
  where $\hR_{u\, v}$ is the effective resistance between $u$ and $v$, and $\beta >0$ is an arbitrary
  constant.

\end{definition}
We now proceed to extend the notion of resistance perturbation distance. 
\begin{definition} [Renormalized Resistance Distance]
  Let $G^{(1)}=(V^{(1)},E^{(1)})$ and $G^{(2)}=(V^{(2)},E^{(2)})$ be two graphs (with possibly different vertex sets).  We
  consider $V = V^{(1)}\cup V^{(2)}$, and relabel the union of vertices using $[n]$, where $n=|V|$. Let
  $R^{(1)}$ and $R^{(2)}$ denote the renormalized effective resistances in
  $\widetilde{G^{(1)}} = (V,E^{(1)})$ and $\widetilde{G^{(2)}} = (V,E^{(2)})$ respectively.

  We define the renormalized resistance distance to be 
  \begin{equation}
    \RD_\beta(G^{(1)},G^{(2)}) = \sum_{u<v\leq n} \left|   R^{(1)}_{u\, v} - R^{(2)}_{u\, v} \right|.
    \label{eq:metric-defn}  
  \end{equation}
  where the parameter $\beta$ (see \eqref{eq:renorm-res-defn}) is implicitly defined. In the rest of
  the paper we work with $\beta =1$, and dispense of the subscript $\beta$ in
  \eqref{eq:metric-defn}. In other words,  
  \begin{equation}
    \RD \eqdef \RD_1.
    \label{eq:RD}
  \end{equation}
  \label{defn:metric} 
\end{definition}
\begin{remark}
  An additional parameter $\beta$ has been added to the definition. Changing $\beta$ is
  equivalent to scaling the effective resistance before applying the function
  $x \rightarrow x / (1+x)$. Note that when $\hR\ll \beta$, then
  $R \approx \hR / \beta$, i.e. the renormalized resistance is approximately a rescaling of the
  effective resistance. Note that in this metric, two graphs are equal if they differ only in addition
  or removal of isolated vertices.
\end{remark}
\noindent The following lemma confirms that the distance defined by \eqref{eq:metric-defn} remains a
metric when we compare graphs with the same vertex set.
\begin{lemma}
  Let $V$ be a vertex set. $\RD$ defined by \eqref{eq:metric-defn}  us a metric on the space of
  unweighted undirected graphs defined on the same vertex set $V$.
\end{lemma}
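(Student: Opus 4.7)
The plan is to verify the four metric axioms in turn---non-negativity, symmetry, triangle inequality, and identity of indiscernibles. Because both graphs share the vertex set $V$, the relabeling step in Definition~\ref{defn:metric} is vacuous and we may regard $R^{(1)}$ and $R^{(2)}$ as $n\times n$ symmetric matrices indexed by $V$.

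Three axioms come essentially for free. Non-negativity holds because $\RD$ is defined as a sum of absolute values. Symmetry follows from $|R^{(1)}_{uv}-R^{(2)}_{uv}|=|R^{(2)}_{uv}-R^{(1)}_{uv}|$. The triangle inequality $\RD(G^{(1)},G^{(3)}) \leq \RD(G^{(1)},G^{(2)})+\RD(G^{(2)},G^{(3)})$ is obtained by applying the scalar triangle inequality entrywise and summing, and $\RD(G,G)=0$ is obvious from the definition.

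The only delicate axiom is the converse identity of indiscernibles: if $\RD(G^{(1)},G^{(2)})=0$, then $G^{(1)}=G^{(2)}$. Vanishing of a sum of nonnegative terms yields $R^{(1)}_{uv}=R^{(2)}_{uv}$ for every pair $u<v$. I would then observe that $x\mapsto x/(x+1)$ is a strictly increasing bijection from $[0,\infty)$ onto $[0,1)$, so in the renormalized convention of \eqref{eq:renorm-res-defn} we have $R_{uv}<1$ precisely when $u$ and $v$ share a connected component and $R_{uv}=1$ otherwise. Consequently $G^{(1)}$ and $G^{(2)}$ have identical connected components, and inverting the map on each component gives $\hR^{(1)}_{uv}=\hR^{(2)}_{uv}$ whenever $u$ and $v$ are connected.

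The principal obstacle is the final step: recovering the edge set of a connected unweighted graph from its effective-resistance matrix. I would invoke the classical identity $\hR_{uv}=L^+_{uu}+L^+_{vv}-2L^+_{uv}$, where $L^+$ denotes the Moore--Penrose pseudoinverse of the combinatorial Laplacian $L$. Together with the constraint $L^+\mathbf{1}=0$ (forced by $L\mathbf{1}=0$ on each connected component), this linear system recovers $L^+$ uniquely from $\hR$; hence $L=(L^+)^+$ is determined, and its off-diagonal entries ($-1$ on an edge, $0$ elsewhere) reveal the edge set. Applying this recovery on each connected component gives $E^{(1)}=E^{(2)}$, completing the proof.
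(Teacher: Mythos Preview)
Your argument is correct and complete. The paper states this lemma without proof; the nontrivial injectivity of the effective-resistance map is implicitly inherited from the companion work where the $\RP$ metric is introduced, so there is no in-paper argument to compare against. Your route is the natural self-contained one: the first three axioms follow immediately from the $\ell^1$ form of $\RD$, and for identity of indiscernibles you correctly use the strict monotonicity of $x\mapsto x/(x+1)$ on $[0,\infty)$ to match the connected components and recover $\hR$ on each, then invoke the standard reconstruction of $L^{+}$ from $\hR$ via $\hR_{uv}=L^{+}_{uu}+L^{+}_{vv}-2L^{+}_{uv}$ together with $L^{+}\mathbf{1}=0$, followed by $L=(L^{+})^{+}$ to read off the edge set. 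This would serve as the omitted proof without modification.
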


\begin{remark}
  The metric given in Definition \ref{defn:metric} can be used to compare graphs of two different
  sizes, by adding isolated vertices to both graphs until they have the same vertex set (this is why
  we must form the union $V = V^{(1)} \cup V^{(2)}$ and compare the graphs over this vertex set). This method
  will give reasonable results when the overlap between $V^{(1)}$ and $V^{(2)}$ is large. In particular, if we
  are comparing graphs of size $n$ and $n+1$, then we only need add one isolated vertex to the former
  so that we can compare it to the latter. This situation is illustrated in
  Figure~\ref{fig:new-vertex}.
\end{remark}
\begin{figure}[H]
  \centering{
    \includegraphics[width=0.4\textwidth]{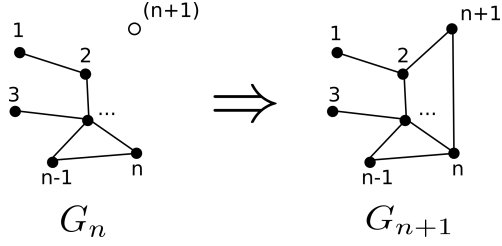}
    
    \caption{In order to compare $G_n$ and $G_{n+1}$, we include node $n+1$ into $G_n$ (see left),
      and evaluate the renormalized effective resistance on the augmented graph, with vertex set
      $\left\{1,\ldots,n\right\}\cup \left\{n+1\right\}$.}
    \label{fig:new-vertex}}
\end{figure}
When the graphs $G^{(1)}$ and $G^{(2)}$ have different sizes, the distance $\RD$ still satisfies the
triangle inequality, and is symmetric. However, $\RD$ is no longer injective: it is a
pseudo-metric. Indeed, as explained in the following lemmas, if $\RD(G^{(1)},G^{(2)}) = 0$, then the
connected components of $G^{(1)}$ and $G^{(2)}$ are the same, but the respective vertex sets may
differ by an arbitrary number of isolated vertices.
\begin{lemma}
  Let $G=(E,V)$ be an unweighted undirected graph, and let $V^{(i)}$ be a set of isolated vertices, to wit
  $V^{(i)} \cap V = \emptyset$ and $\forall e \in E, \text{endpoints} \,(e) \notin V^{(i)}$. Define $G^\prime
  =(V \cup V^{(i)}, E)$, then we have $\RD(G,G^\prime) = 0$.
\end{lemma}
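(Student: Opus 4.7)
The proof plan is essentially a direct unpacking of Definition \ref{defn:metric}, so the argument is almost tautological; the only thing to verify is that the construction prescribed by the definition produces literally the same augmented graph for both inputs.

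First I would form the union vertex set as required by Definition \ref{defn:metric}, namely $V^\ast = V \cup (V \cup V^{(i)}) = V \cup V^{(i)}$, and relabel by $[n]$ with $n = |V^\ast|$. Then the two augmented graphs appearing in the definition of $\RD$ are
\[
\widetilde{G} = (V^\ast, E) \quad \text{and} \quad \widetilde{G^\prime} = (V^\ast, E),
\]
since $G^\prime$ already has vertex set $V \cup V^{(i)} = V^\ast$ and edge set $E$, while $G$ gains exactly the vertices of $V^{(i)}$ upon augmentation. Here I use the hypotheses $V^{(i)} \cap V = \emptyset$ and $\forall e \in E$, $\text{endpoints}(e) \notin V^{(i)}$: the first guarantees that the relabeling is unambiguous, and the second guarantees that the vertices of $V^{(i)}$ are genuinely isolated in $\widetilde{G}$ (and in $\widetilde{G^\prime}$).

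Next, since $\widetilde{G}$ and $\widetilde{G^\prime}$ are the exact same graph on $V^\ast$, their effective resistance matrices and hence their renormalized effective resistance matrices coincide entry by entry, i.e. $R^{(1)}_{uv} = R^{(2)}_{uv}$ for all $u < v \leq n$. Substituting into \eqref{eq:metric-defn} gives
\[
\RD(G, G^\prime) = \sum_{u<v\leq n} \left| R^{(1)}_{uv} - R^{(2)}_{uv} \right| = 0,
\]
as claimed.

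I do not expect any real obstacle; the only place one might slip is in checking that adding isolated vertices to a graph does not perturb the renormalized effective resistance between the original vertices. This is immediate from \eqref{eq:renorm-res-defn}: an isolated vertex $w \in V^{(i)}$ lies in its own connected component, so for every $u \in V$ we have $R_{uw} = 1$ in both $\widetilde{G}$ and $\widetilde{G^\prime}$, and for $u, v \in V$ the value $\hR_{uv}$ depends only on the component of $\widetilde{G}$ containing $u$ and $v$, which is the same as the component of $G$ containing them. Hence the argument above is rigorous with no hidden computation required.
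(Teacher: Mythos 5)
Your proof is correct. The paper actually omits a proof for this lemma entirely (it is stated without demonstration), but your argument is the natural one: you correctly observe that Definition \ref{defn:metric} forces both graphs to be augmented to the common vertex set $V \cup V^{(i)}$, at which point $\widetilde{G}$ and $\widetilde{G^\prime}$ are literally the same graph $(V \cup V^{(i)}, E)$, so every term in the sum \eqref{eq:metric-defn} vanishes. Your closing remark about isolated vertices not perturbing the renormalized resistance between original vertices is not strictly needed once you have noted the two augmented graphs coincide, but it is a correct and useful sanity check.
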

\noindent The following lemma shows that the converse is also true.
\begin{lemma}
  Let $G^{(1)} = (V,E^{(1)})$ and $G^{(2)} = (V,E^{(2)})$ be two unweighted, undirected
  graphs, where $|V^{(1)}| > |V^{(2)}$. 

  \noindent If $\RD(G^{(1)}, G^{(2)}) = 0$, then $E^{(1)} = E^{(2)}$. Furthermore, there exists a set
  $V^{(i)}$ of isolated vertices, such that $V^{(1)} = V^{(2)} \cup V^{(i)}$.
\end{lemma}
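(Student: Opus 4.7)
The plan is to reduce the claim to a known rigidity statement about effective resistance matrices: within a connected component, the matrix $\widehat R$ uniquely determines the Laplacian, and hence the edges. The overall structure will be (i) pass from $\RD(G^{(1)},G^{(2)})=0$ to pointwise equality of the renormalized resistance matrices, (ii) recover equality of connected components and of the $\widehat R$ entries inside each component, (iii) invert the map $\widehat R\mapsto L$ on each component to conclude $E^{(1)}=E^{(2)}$, and (iv) show any vertex of $V^{(1)}$ not in $V^{(2)}$ must be isolated.

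\textbf{Step 1: from $\RD=0$ to matrix equality.} Form $V=V^{(1)}\cup V^{(2)}$ and the augmented graphs $\widetilde{G^{(1)}},\widetilde{G^{(2)}}$ on $V$ as in Definition~\ref{defn:metric}. Since $\RD(G^{(1)},G^{(2)})=\sum_{u<v\le n}|R^{(1)}_{uv}-R^{(2)}_{uv}|=0$ and each summand is non-negative, I get $R^{(1)}_{uv}=R^{(2)}_{uv}$ for every pair $u,v\in V$.

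\textbf{Step 2: matching connected components and effective resistances.} Because $R_{uv}=1$ if and only if $u$ and $v$ lie in different connected components (and otherwise $R_{uv}=\widehat R_{uv}/(\widehat R_{uv}+1)<1$), the equality $R^{(1)}_{uv}=R^{(2)}_{uv}$ forces $\widetilde{G^{(1)}}$ and $\widetilde{G^{(2)}}$ to have identical partitions of $V$ into connected components. Within each common component, the map $x\mapsto x/(x+1)$ is injective on $[0,\infty)$, so I can invert to obtain $\widehat R^{(1)}_{uv}=\widehat R^{(2)}_{uv}$ for every pair $u,v$ in that component.

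\textbf{Step 3: rigidity of the resistance matrix.} I will use the classical fact that on a connected unweighted graph the effective resistance matrix determines the Laplacian. Concretely, the identity $\widehat R_{uv}=(e_u-e_v)^{\!\top}L^{\dagger}(e_u-e_v)$ allows one to recover the off-diagonal entries of $L^{\dagger}$ by double-centering the matrix $\widehat R$, then invert to get $L=(L^{\dagger})^{\dagger}$, whose off-diagonal support equals $-A$. Applying this component-by-component to $\widetilde{G^{(1)}}$ and $\widetilde{G^{(2)}}$ yields equality of their Laplacians, hence $E^{(1)}=E^{(2)}$.

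\textbf{Step 4: locating the extra vertices.} For any $w\in V^{(1)}\setminus V^{(2)}$, $w$ is added as an isolated vertex in $\widetilde{G^{(2)}}$, so no edge of $E^{(2)}$ is incident to $w$; since $E^{(1)}=E^{(2)}$, the same holds in $G^{(1)}$, so $w$ is isolated in $G^{(1)}$. Thus $V^{(i)}\eqdef V^{(1)}\setminus V^{(2)}$ is a set of isolated vertices of $G^{(1)}$ and $V^{(1)}=V^{(2)}\cup V^{(i)}$, as claimed. The main obstacle is Step~3: the one non-trivial ingredient is citing (or reproving in a line) the injectivity of the correspondence $G\mapsto\widehat R$ on connected graphs via the pseudoinverse formula; everything else is bookkeeping with the definition of $\RD$.
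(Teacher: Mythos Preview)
The paper states this lemma without proof (as it does the two companion lemmas preceding it), so there is no argument to compare against. Your proposal is correct and fills in precisely the content one would expect: Steps~1--2 pass from $\RD=0$ to equality of the connected-component partitions and of the effective resistances within each common component (using that $x\mapsto x/(x+1)$ is injective on $[0,\infty)$ and that $R_{uv}=1$ exactly when $u,v$ are disconnected); Step~3 invokes the standard rigidity that on a connected graph $\widehat R$ determines $L^\dagger$ via the double-centering identity, hence $L=(L^\dagger)^\dagger$ and the edge set; Step~4 is straightforward. The only cosmetic point worth noting is that the lemma as written tacitly assumes $V^{(2)}\subseteq V^{(1)}$ for the conclusion $V^{(1)}=V^{(2)}\cup V^{(i)}$ to make sense; your argument is in fact symmetric and shows that every vertex in $V^{(1)}\triangle V^{(2)}$ must be isolated in its respective graph.
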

\noindent In summary, in this work the distance $\RD$ will always be a metric since we will only consider graphs
that are connected with high probability.
\section{Graph Models
  \label{the-models}}
In our analysis, we will discuss two common random graph models, the classic model of Erd\H{o}s
and R\'enyi \cite{bollobas13} and the stochastic blockmodel \cite{abbe16}.
\begin{definition} [\ER Random Graph \cite{bollobas13}]
  Let $n \in \N$ and let $p\in [0,1]$. We recall that the \ER random graph, $\cG(n,p)$, is the
  probability space formed by the graphs defined on the set of vertices $[n]$, where edges are
  drawn randomly from $\begin{pmatrix}n\\2\end{pmatrix}$ independent Bernoulli random variables with
  probability $p$. In effect, a graph $G \sim \cG(n,p)$, with $m$ edges, occurs with
  probability
  \begin{equation}
    \prob{G} = p^{m}(1-p)^{{n\choose 2}-m}.
    \label{eq:3}
  \end{equation}
  \label{ER-def}
\end{definition}
\begin{definition}
  Let $G=(V,E)\sim \cG(N,p)$. For any vertex $u\in V$, we denote by $d_u$ the degree of $u$; we also
  denote by $\dbar = (n-1)p$ the expected value of $d_u$.
  \label{degree-er}
\end{definition}
\noindent We now introduce a model of a dynamic community network: the balanced, two-community
stochastic blockmodel.
\begin{definition}[Dynamic Stochastic Blockmodel]
  Let $n\in \mathbb{N}$, and let $p,q \in [0,1]$.  We denote by $\cG(n,p,q)$ the probability
  space formed by the graphs defined on the set of vertices $[n]$, constructed as follows.\\

  We split the vertices $[n]$ into two communities $C_1$ and $C_2$, formed by the odd and the even
  integers in $[n]$ respectively. We denote by $n_1 = \lfloor (n+1)/2 \rfloor$ and $n_2 = \lfloor
  n/2 \rfloor$  the size of $C_1$ and $C_2$  respectively. \\

  Edges within each community are drawn randomly from independent Bernoulli random variables with
  probability $p$. Edges between communities are drawn randomly from independent Bernoulli random
  variables with probability $q$. For $G \in \cG(n,p,q)$, with $m_1$ and $m_2$ edges in communities
  $C_1$ and $C_2$ respectively, we have
  \begin{equation}
    \prob{G} = p^{m_1}(1-p)^{{m_1 \choose 2}-m_1}q^{m_2}(1-q)^{{m_2 \choose 2}-m_2}.
  \end{equation}
  \label{defn:DSB}
\end{definition}
\begin{remark}
  Although we use $\cG$ for both random graph models, the presence of two or three parameters
  prevents ambiguity in our definitions. 
\end{remark}
\begin{definition}
  Let $G\sim \cG(n,p,q)$. We denote by $\dbar_1 = p n_1$  the expected degree within community $C_1$,
  and by $\dbar_2 = p n_2$  the expected degree within community $C_2$. \\

  \noindent We denote by $k_n$ the binomial random variables that counts the number of
  cross-community edges between $C_1$ and $C_2$.
  \label{degree-sbm}
\end{definition}
\noindent Because asymptotically, $n_1 \sim n_2$, we ignore the dependency of the expected 
degree on the specific community when computing asymptotic behaviors for large $n$. More precisely, we have the following
results.
\begin{lemma}
  Let $G_n \in \cG(n,p,q)$. We have
  \begin{enumerate}
  \item $\dbar_1 = \dbar_2 + \varepsilon p$, where $\varepsilon = 0$ if $n$ is even, or $\varepsilon
    = 1$ otherwise.
  \item $\dbar_1^2 = \dbar_2^2 (1 + \o{1})$.
  \item $\displaystyle \frac{1}{\dbar_1^2} = \frac{1}{\dbar_2^2}\left(1 + \o{1}\right)$.
  \item $\displaystyle \frac{1}{\dbar_1} = \frac{1}{\dbar_2} + \O{\frac{1}{\dbar^2}}$ where  $\dbar =
    \dbar_1$, or $\dbar =  \dbar_2$.\\
  \end{enumerate}
  \noindent In summary, in the remaining of the text we loosely write $1/\dbar$ when either $1/\dbar_1$ or
  $1/\dbar_2$ could be used, and the error between the two terms is no larger than
  $\O{1/\dbar^2}$.
  \label{n1_eq_n2}
\end{lemma}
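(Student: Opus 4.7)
The lemma is a sequence of elementary estimates that follow from the definition $\dbar_i = p n_i$ together with the explicit formulas $n_1 = \lfloor (n+1)/2\rfloor$ and $n_2=\lfloor n/2\rfloor$. The whole argument is essentially bookkeeping; there is no conceptual obstacle, and the only care needed is to keep track of which term is $o(1)$ and which is $O(1/\dbar^2)$.

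First I would dispose of item 1 by a direct case split on the parity of $n$. If $n$ is even, both floors equal $n/2$, so $n_1=n_2$ and $\dbar_1=\dbar_2$; if $n$ is odd, $n_1=(n+1)/2$ and $n_2=(n-1)/2$, so $n_1-n_2=1$ and multiplying by $p$ gives $\dbar_1=\dbar_2+p$. This establishes $\dbar_1=\dbar_2+\varepsilon p$ with the stated $\varepsilon$.

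Next, for item 2 I would factor
\begin{equation*}
\dbar_1^2 - \dbar_2^2 \;=\; (\dbar_1-\dbar_2)(\dbar_1+\dbar_2) \;=\; \varepsilon p\,(\dbar_1+\dbar_2),
\end{equation*}
and divide by $\dbar_2^2$; since $p/\dbar_2 = 1/n_2 = \o{1}$ and $\dbar_1/\dbar_2 = 1 + p/\dbar_2 = 1 + \o{1}$, the ratio $(\dbar_1^2-\dbar_2^2)/\dbar_2^2$ is $\o{1}$, giving $\dbar_1^2=\dbar_2^2(1+\o{1})$. Item 3 follows at once by taking reciprocals: writing $1/\dbar_1^2 = (1/\dbar_2^2)\cdot (\dbar_2/\dbar_1)^2$ and noting $(\dbar_2/\dbar_1)^2 = (1+\o{1})^{-1} = 1+\o{1}$.

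For item 4 I would expand the difference of reciprocals,
\begin{equation*}
\frac{1}{\dbar_1} - \frac{1}{\dbar_2} \;=\; \frac{\dbar_2-\dbar_1}{\dbar_1\dbar_2} \;=\; -\,\frac{\varepsilon p}{\dbar_1\dbar_2},
\end{equation*}
and then show that $p/(\dbar_1\dbar_2) = \O{1/\dbar^2}$. Using $\dbar_i = p n_i$ one has $p/(\dbar_1\dbar_2) = 1/(p\,n_1 n_2)$, while $1/\dbar^2 = 1/(p^2 n_i^2)$ for $i\in\{1,2\}$; since $n_1/n_2 \to 1$, the ratio of these two quantities is $p\,n_i^2/(n_1 n_2) = \O{p} = \O{1}$, so the bound $\O{1/\dbar^2}$ holds with either choice of $\dbar$. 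The final ``in summary'' sentence is then just a convenient rephrasing of items 1 and 4. As noted, no step is technically delicate; the only thing to watch is that in item 4 one needs $p\leq 1$ to absorb the stray factor of $p$ coming out of $p/(\dbar_1\dbar_2)$ into the $\O{\cdot}$ constant.
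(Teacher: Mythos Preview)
Your proof is correct and is exactly the kind of direct, elementary verification one would write for this lemma; the paper itself states the lemma without proof, treating the four items as routine consequences of $\dbar_i = p n_i$ and the parity relation between $n_1$ and $n_2$. There is nothing to compare against, and your argument handles every item cleanly, including the small care in item~4 where the stray factor $p\le 1$ is absorbed into the $\O{\cdot}$ constant.
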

\begin{remark}
  In this work, we study nested sequences of random graphs, and we use sometime the subscript $n$ to
  denote the index of the corresponding element $G_n$ in the graph process.
\end{remark}
\begin{remark}
  While our model assumes that the two communities have equal size, or differ at most by one
  vertex, the model can be extended to multiple communities of various sizes.
\end{remark}
\begin{figure}[H]
  \centerline{
    \includegraphics[width=0.7\textwidth]{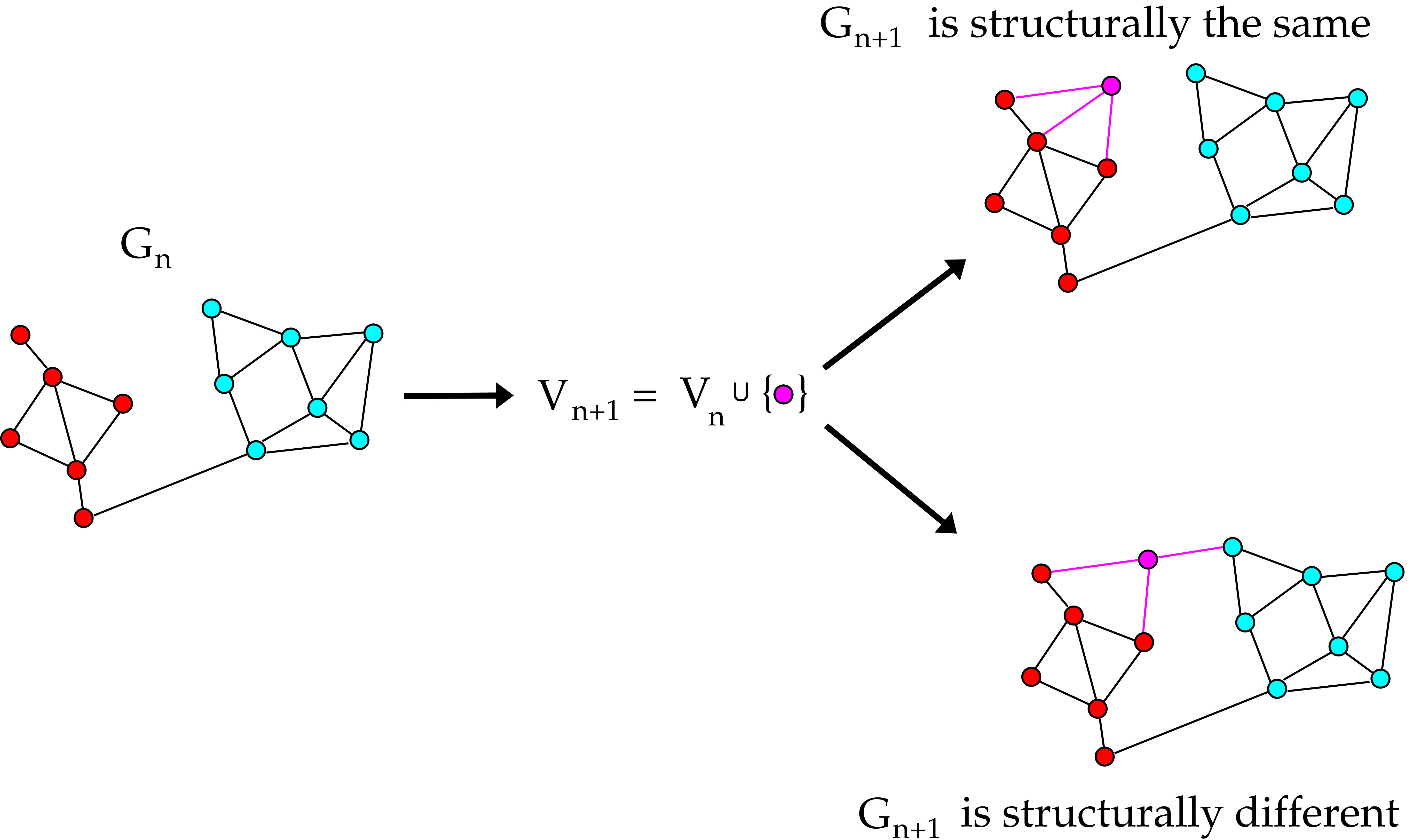}
  }
  \caption{Left: the dynamic stochastic blockmodel $G_n$ is comprised of two communities ($C_1$: red and
    $C_2$: blue). As a new (magenta) vertex is added, the new graph $G_{n+1}$ can remain structurally
    the same -- if no new edges are created between $C_1$ and $C_2$ (top right) -- or can become structurally
    different if the communities start to merge with the addition of new edges between $C_1$ and
    $C_2$ (bottom right).
    \label{fig:sbm}}
\end{figure}
\section{Main Results
  \label{the-main-results}}
\subsection{Informal Presentation of our Results}
Before carefully stating the main result in the next subsection, we provide a back of the envelope
analysis to help understand under what circumstances the resistance metric can detect an anomalous
event in the dynamic growth of a stochastic blockmodel. In particular, we aim to detect whether
cross-community edges are formed at a given timestep. In graphs with few cross-community edges, the
addition of such an edge changes the geometry of the graph significantly. We will show that the
creation of such edges can be detected with high probability when the average in-community degree
dominates the number of cross-community edges.

Figure \ref{fig:sbm} illustrates the statement of the problem. As a new vertex (shown in magenta) is
added to the graph $G_n$, the connectivity between the communities can increase, if edges are added
between $C_1$ and $C_2$, or the communities can remain separated, if no cross-community edges are
created. If the addition of the new vertex promotes the merging of $C_1$ and $C_2$, then we consider
the new graph $G_{n+1}$ to be {\em structurally different} from $G_n$, otherwise $G_{n+1}$ remains
{\em structurally the same} as $G_n$ (see Fig. \ref{fig:sbm}).

The goal of the present work is to detect the fusion of the communities without identifying the
communities. We show that the effective resistance yields a metric that is sensitive to changes in
pattern of connections and connectivity structure between $C_1$ and $C_2$. Therefore it can be
used to detect structural changes between $G_n$ and $G_{n+1}$ without detecting the structure
present in $G_n$.

The informal derivation of our main result relies on the following three ingredients:
\begin{enumerate}
\item each community in $\cG(n,p,q)$ is approximately a ``random graph'' (\ER), $\cG (n/2,p)$;
\item the effective resistance between two vertices $u,v$ within $\cG (n/2,p)$ is 
  concentrated around $2/\dbar = 2/(p(n/2-1)$;
\item the effective resistance between $u \in C_1$ and $v \in C_2$ depends only on the bottleneck
  formed by the $k_n$ cross-community edges, $\er{uv} \approx 1/k_n$.
\end{enumerate}
We now proceed with an informal analysis of the changes in effective resistance distance when
the new vertex, $n+1$, is added to the stochastic blockmodel $G_n$ (see Fig. \ref{fig:sbm}).

We first consider the ``null hypothesis'' where no cross-community edges is formed when vertex $n+1$
is added to the graph. All edges are thus created in the community of $n+1$, say $C_1$ (without any
loss of generality). Roughly $pn/2$ new edges are created, and thus about $\O{n}$ vertices are
affected by the addition of these new edges to $C_1$.

Because the effective resistance between any two vertices $u,v$ in $C_1$ is concentrated
around $2/[p(n_1 -1 +1)] \ge 2/(\dbar +1)$, the changes in resistance after the addition of
vertex $n+1$ is bounded by
\begin{equation}
  \Delta \er{uv} \leq \frac{2}{\dbar}-\frac{2}{\dbar+1} = \O{\frac{1}{\dbar^2}}.
\end{equation}
Although, one would expect that only vertices in community $C_1$ (wherein $n+1$ has been added)
be affected by this change in effective resistance, a more detailed analysis shows that vertices in
$C_2$ slightly  benefit of the increase in connectivity within $C_1$.\\

We now consider the alternate hypothesis, where at least one cross-community edge is formed
after adding $n+1$ (see Fig. \ref{fig:sbm}-bottom right). This additional cross-community edge has
an effect on all pairwise effective resistances. Nevertheless, the most significant perturbation
in $\er{uv}$ occurs for the $n/2 \times n/2$ pairs of vertices in $C_1 \times C_2$. Indeed, if
$u \in C_1$ and $v \in C_2$, the change in effective resistance becomes 
\begin{equation}
  \Delta \er{uv} \approx \frac{1}{k_n}-\frac{1}{k_n+1} = \O{\frac{1}{k_n^2}}.
\end{equation}

In summary, we observe asymptotic separation of the two regimes precisely when
$k_n/\dbar\rightarrow 0$, which occurs with high probability when $n \cdot q_n = \o{p_n}$. We should
therefore be able to use the renormalized resistance distance to test the null hypothesis that no
edge is added between $C_1$ and $C_2$, and that $G_n$ and $G_{n+1}$ are structurally the same.\\

We will now introduce the main character of this work: the dynamic stochastic block model, and we
will then provide a precise statement of the result. In particular, we hope to elucidate our model
of a dynamic community graph, in which at each time step a new vertex joins the graph and forms
connections with previous vertices. The idea of graph growth as a generative mechanism is
commonplace for models such as preferential attachment, but is less often seen in models such as \ER
and the stochastic blockmodel.
\subsection{The Growing Stochastic Blockmodel}
We have described in Definition \ref{defn:DSB} a model for a balanced stochastic block model, where
the probabilities of connections $p$ and $q$ are fixed. However, we are interested in the regime of
large graphs ($n \rightarrow \infty$), where $p$ and $q$ cannot remain constant. In fact the
probabilities of connection, within each community and across communities go to zero as the size of
the graph, $n$, goes to infinity.

The elementary growth step, which transforms $G_n=(V_n,E_n)$ into $G_{n+1}=(E_{n+1},V_{n+1})$
proceeds as follows: one adds a vertex $n+1$ to $V_n$ to form $V_{n+1}$, assigns this new vertex to
$C_1$ or $C_2$ according to the parity of $n$. One then connects $n+1$ to each member of its
community with probability $p$ and each member of the opposite community with probability $q$.
This leads to a new set of vertices, $E_{n+1}$.
\begin{table}[H]
  \begin{center}
    \begin{small}
      \begin{tabular}{cccccccc}
        \toprule
        probabilities & \multicolumn{4}{c}{growth sequence}    & &  definition of $G_n$& definition of $D_n$\\
        of connection & \multicolumn{4}{c}{to generate $G_n$}\\
        \midrule
        $\{p_1,q_1\}$ & $G^{(1)}_1$ & \fbox{$G^{(1)}_2$} &            & & & $G_2 \eqdef G^{(1)}_2$ & $D_1 \eqdef \RD(G^{(1)}_1,G^{(1)}_2)$\\
        $\{p_2,q_2\}$ & $G^{(2)}_1$ & $G^{(2)}_2$ & \fbox{$G^{(2)}_3$} & & & $G_3 \eqdef G^{(2)}_3$ & $D_2 \eqdef \RD(G^{(2)}_2,G^{(2)}_3)$\\
        $\{p_3,q_3\}$ & $G^{(3)}_1$ & $G^{(3)}_2$ & $G^{(3)}_3$ & \fbox{$G^{(3)}_4$} & & $G_4 \eqdef G^{(3)}_4$& $D_3 \eqdef \RD(G^{(3)}_3,G^{(3)}_4)$\\
        \vdots & \vdots & \vdots & \vdots & \vdots & \vdots & \vdots & \vdots \\
        \bottomrule
      \end{tabular}
    \end{small}
  \end{center}
  \caption{Each row depicts the growth sequence that leads to the construction of $G_{n+1}\eqdef
    G^{(n)}_{n+1}$. The distance $D_n$  is always defined with respect to the subgraph $G^{(n)}_n$ on the vertices
    $1,\ldots, n$ that led to the construction of $G_{n+1}$.
    \label{growth}
  }
\end{table}
The actual sequence of graphs $\left\{G_n\right\}$ is created using this elementary process with a
twist: for each index $n$, the graph $G_{n+1}$ is created by iterating the elementary growth process,
starting with a single vertex and no edges, $n+1$ times with the {\em fixed probabilities of connections}
$p_n$ and $q_n$. Once $G_{n+1}$ is created, the growth is stopped, the probabilities of connections
are updated and become $p_{n+1}$ and $q_{n+1}$. A new sequence of graphs is initialized to create
$G_{n+2}$.

Table \ref{growth} illustrates the different sequences of growth, of increasing lengths,
that lead to the creation of $G_1, G_2,\ldots $. This growth process guarantees that $G_{n+1}$ is
always a subgraph of $G_n$, and that both $G_n$ and $G_{n+1}$ have been created with the same
probabilities. Furthermore, each $G_n$ is distributed according to Definition \ref{defn:DSB}, and
the $G_n$ are independent of one another.

In order to study the dynamic evolution of the graph sequence, we focus on changes between two
successive time steps $n$ and $n+1$. These changes are formulated in the form of the distance
$D_n = \RD(G^{(n)}_n,G_{n+1})$ between $G_{n+1}$ and the subgraph $G^{(n)}_n$ on the vertices    $1,\ldots, n$,
which led to the construction of $G_{n+1}$. The subgraph $G^{(n)}_n$ is the graph on the left of the
boxed graph $G_{n+1}$ on each row of Table \ref{growth}. The definition of $D_n$ is the only
potential caveat  of the model: $D_n$ is not the distance between $G_n$ and $G_{n+1}$; this
restriction is necessary since in general $G_n$ is not a subgraph of $G_{n+1}$.\\

This model provides a realistic prototype for the separation of scales present in the dynamics of large
social network. Specifically, the time index $n$ corresponds to the slow dynamics associated with
the evolution of the networks over long time scale (months to years). In contrast, the random
realizations on each row of Table  \ref{growth} embody the fast random fluctuations of the
network over short time scales (minutes to hours).

In this work, we are interested in examining fluctuations over fast time scales (minutes to
hours). We expect that the probabilities of connection, $(p_n,q_n)$ remain the same when we study
the distance between $G_n$ and $G_{n+1}$. As $n$ increases, the connectivity patterns of members of
the network evolve, and we change accordingly the probabilities of connection, $(p_n,q_n)$. Similar
dynamic stochastic block models have been proposed in the recent years (e.g.,
\cite{ho15,pensky16,tang14,wilson16,xing10,xu14,xu15,yang11}, and references therein).

In the stochastic blockmodel, each vertex $u$ belongs to a community within the graph. If vertex $u$
forms no cross-community edges, then the geometry of the graph is structurally the same. However, if
$u$ forms at least one cross-community edge, then (depending on the geometry of the preceding graphs
in the sequence) the geometry may change significantly. We examine in what regimes of $p_n$ and
$q_n$ we can differentiate between the two situations with high probability.  We phrase the result
in terms of a hypothesis test, with the null hypothesis being that no cross-community edges have
been formed in step $n+1$.
\begin{figure}[H]
  \centering
  \includegraphics[width=0.8\textwidth]{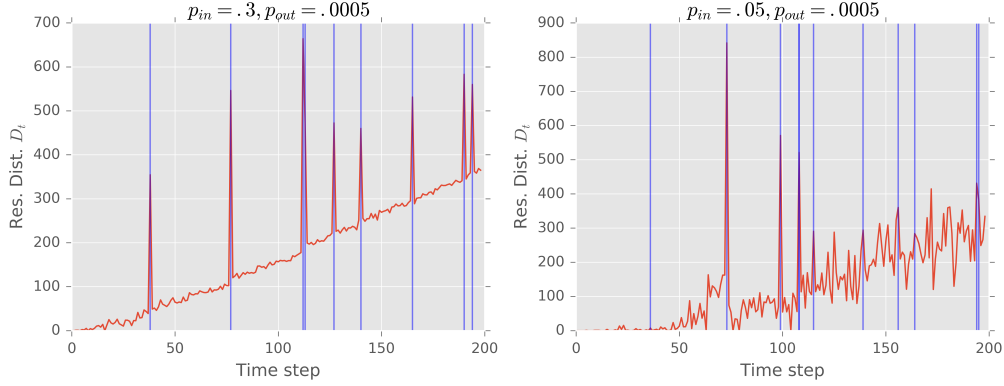}
  \caption{A typical time series of $D_n$ for a growing stochastic blockmodel. The red curve is the
    distance between time steps $D_n$ and the blue vertical lines mark the formation of
    cross-community connections. Two different regimes are compared. On the left, the formation of
    cross-community edges is easily discernible, while on the right, such an event is quickly lost
    in the noise.}
  \label{fig:timeseries}
\end{figure}

Figure \ref{fig:timeseries} shows a time series of distances $D_n$ for a growing stochastic
blockmodel. We see that when the in-community connectivity is much greater than the cross community
connectivity, the formation of cross-community edges is easily discernible (left figure). However,
when the level of connectivity is insufficiently separated, then the formation of cross-community
edges is quickly lost in the noise. Our result clarifies exactly what is meant when we say that the
parameters $p_n$ and $q_n$ are ``well separated.''\\

\noindent Our main result is given by the following theorem.
\begin{Theorem}
  \label{main_theorem}
  Let $G_{n+1}\sim\cG(n+1,p_n,q_n)$ be a stochastic blockmodel with $p_n=\om{\log n/n}$ ,
  $q_n = \om{1/n^2}$, $q_n = \o{p_n/n}$, and $p_n = \O{1/\sqrt{n}}$. Let $G_{n}$ be the subgraph
  induced by the vertex set $[n]$, with $m_n$ edges. Let $D_n = \RD\left(G_n,G_{n+1}\right)$ be the
  normalized effective resistance distance, $\RD$, defined in
  \eqref{eq:metric-defn}.\\

  \noindent To test the hypothesis
  \begin{equation}
    H_0: \quad k_n = k_{n+1}
  \end{equation}
  versus 
  \begin{equation}
    H_1: \quad k_n < k_{n+1}
  \end{equation}
  we use the test based on the statistic $Z_n$ defined by
  \begin{equation}
    Z_n \eqdef \frac{16m_n^2}{n^4} \left(D_n - n \right),
    \label{def_Z_n}
  \end{equation}
  where we accept
  $H_0$ if $Z_n < z_\varepsilon$ and accept $H_1$ otherwise. The threshold $z_\varepsilon$ for the rejection
  region satisfies
  \begin{equation}
    \proba_{H_0}\left(Z_n \ge z_\varepsilon\right) \leq \varepsilon \quad \text{as}\quad n \rightarrow \infty,
  \end{equation}
  and
  \begin{equation}
    \proba_{H_1}\left(Z_n \ge z_\varepsilon\right) \rightarrow 1 \quad \text{as}\quad n \rightarrow \infty.
  \end{equation}
  The test has therefore asymptotic level $\varepsilon$ and asymptotic power 1.
\end{Theorem}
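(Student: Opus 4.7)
The plan is to decompose $D_n - n$ into contributions from several classes of vertex pairs and analyze each class via concentration of effective resistance in appropriate Erdős-Rényi-like subgraphs. Because $V^{(1)}\cup V^{(2)}=[n+1]$, I would split the sum defining $\RD(G_n,G_{n+1})$ into four groups: (a) pairs $(u,v)\in[n]^2$ within the same community, (b) pairs $(u,v)\in[n]^2$ in opposite communities, (c) pairs $(u,n+1)$ with $u$ in the community of $n+1$, and (d) pairs $(u,n+1)$ with $u$ in the opposite community. In the augmented graph associated with $G_n$, vertex $n+1$ is isolated, so $R^{(1)}_{u,n+1}=1$ for each of the $n$ pairs in groups (c) and (d). Under $p_n=\om{\log n/n}$ and $q_n=\om{1/n^2}$, $G_{n+1}$ is connected w.h.p. and $n+1$ has $\T{np_n}$ neighbors, giving $\hR^{(2)}_{u,n+1}=\o{1}$, hence $R^{(2)}_{u,n+1}=\o{1}$. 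Thus the absolute differences in groups (c) and (d) are each $1-\o{1}$, contributing a total of $n+\o{n}$ to $D_n$. This is precisely the constant removed by the $-n$ in the statistic.

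Under $H_0$ all new edges incident to $n+1$ fall inside its own community, say $C_1$. Treating each $C_i$ as an $\cG(n_i,p_n)$ random graph, I would use the standard concentration $\hR_{uv}=2/\dbar_i+\text{lower order}$ for same-community pairs. A first-order expansion gives $\Delta\hR_{uv}=\O{1/\dbar^2}$ for pairs in $C_1$ (and smaller for pairs in $C_2$, whose internal geometry is unperturbed and whose effect through $C_1$ is filtered by the $k_n$ unchanged cross-community edges). Cross-community pairs in group (b) likewise change by a negligible $\O{1/(k_n\dbar^2)}$ since $k_n=k_{n+1}$. Summing over the $\T{n^2}$ intra-community pairs yields $D_n-n=\O{1/p_n^2}$ in probability, and since $m_n\sim p_nn^2/4$ by Chernoff, $Z_n=\O{1}$. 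I would then try to identify a limiting distribution for $Z_n$ under $H_0$ (likely Gaussian, by summing the independent contributions of the $\T{n}$ new edges and invoking a Lindeberg-type CLT), from which $z_\varepsilon$ is read off as the $(1-\varepsilon)$-quantile; a one-sided Chebyshev bound on the variance also suffices to give the asymptotic level.

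Under $H_1$ write $\Delta k=k_{n+1}-k_n\ge 1$. Because $q_n=\o{p_n/n}$ forces $k_n=\o{\dbar}$ w.h.p., cross-community effective resistance is dominated by the bottleneck: $\hR_{uv}=1/k_n+\O{1/\dbar}$ for $u\in C_1, v\in C_2$, and by $q_n=\om{1/n^2}$ this is $\o{1}$, so the renormalization $R=\hR/(\hR+1)$ acts as the identity to leading order. The change per cross-community pair is $\Delta R_{uv}\sim\Delta k/k_n^2$ when $k_n\ge 1$, and is of order $1$ in the regime $k_n=0,k_{n+1}\ge 1$. Either way, summing over the $\T{n^2}$ cross-community pairs gives a contribution to $D_n-n$ of order at least $n^2\Delta k/k_n^2=\T{\Delta k/(n^2 q_n^2)}$; combined with $16m_n^2/n^4\sim p_n^2$, this yields $Z_n=\om{(p_n/(nq_n))^2}\to\infty$ by the hypothesis $q_n=\o{p_n/n}$. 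Any fixed $z_\varepsilon$ is therefore exceeded with probability tending to $1$, giving asymptotic power $1$.

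The main obstacle is a quantitative concentration of effective resistance in the stochastic blockmodel, with error terms sharp enough that they can be summed over $\T{n^2}$ pairs without swamping the leading behavior. Two quantitative statements must be established: for same-community pairs, $\hR_{uv}=2/\dbar+\text{lower order}$ uniformly in $u,v$; for cross-community pairs, $\hR_{uv}=1/k_n+\O{1/\dbar}$. Both rest on the spectral representation $\hR_{uv}=(e_u-e_v)^\top L^+(e_u-e_v)$, combined with matrix-concentration for the Laplacian of the blocks and a resolvent/Schur-complement analysis of the bottleneck; Bernstein-type bounds must also control the fluctuations of $k_n$ around $n^2q_n/4$, which scale like $\sqrt{n^2q_n}$ and need to remain negligible relative to the separation scale under $H_1$. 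Propagating these bounds through the renormalization $R=\hR/(\hR+1)$ is routine because $\hR=\o{1}$ in every regime that contributes to leading order, and the additional condition $p_n=\O{1/\sqrt n}$ is presumably invoked to ensure $\dbar^2\le n$ so that the $\O{1/\dbar^2}$-per-pair terms sum to a non-vanishing contribution commensurate with the normalization.
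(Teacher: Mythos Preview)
Your overall decomposition and strategy match the paper's: split $D_n$ into the $n$ pairs involving vertex $n{+}1$ (which supply the $n$ subtracted in the statistic) and the $\binom{n}{2}$ remaining pairs; control the latter via concentration of resistance in the Erd\H{o}s--R\'enyi blocks and across the $k_n$-edge bottleneck; show the residual is $\O{1/p_n^2}$ under $H_0$ and $\gtrsim n^2/k_n^2$ under $H_1$; finally replace $p_n^2$ by $16m_n^2/n^4$ via Chernoff. There is no need for a CLT under $H_0$; boundedness in probability already gives the level statement, and your Chebyshev fallback is what the paper effectively does.

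There is, however, a genuine gap in your $H_1$ step. From the bound you state for cross-community pairs, $\hR_{uv}=1/k_n+\O{1/\dbar}$, you \emph{cannot} deduce $\Delta\hR_{uv}\sim \Delta k/k_n^2$: the $\O{1/\dbar}$ correction depends on $k_n$, and its variation when $k_n\to k_{n+1}$ is, a priori, of size $\O{1/\dbar}$, not $o(1/k_n^2)$. Under the theorem's hypotheses $1/\dbar$ need not be $o(1/k_n^2)$; for instance with $p_n=\log^2 n/n$ and $q_n=\log n/n^2$ one has $\dbar\asymp\log^2 n$ and $k_n\asymp\log n$, so $1/k_n^2$ and $1/\dbar$ are of the same order. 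A naive two-sided estimate (Nash--Williams below and a Thomson flow above) only yields $\hR^{(n)}_{uv}-\hR^{(n+1)}_{uv}\ge 1/k_n-1/k_{n+1}-4/\dbar-\O{1/\dbar^2}$, which can be negative in this regime. Your claim for group~(b) under $H_0$, namely a change of $\O{1/(k_n\dbar^2)}$, has the same defect (the correct bound is $\O{1/\dbar^2}$, which still suffices there).

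The paper closes this gap by proving a sharper structural formula for cross-community resistance,
\[
\hR_{uv}=\frac{1}{k_n}+\frac{2}{\dbar}+\frac{\alpha(k_n,u,v)}{k_n\,\dbar}+\O{\frac{1}{\dbar^2}},\qquad |\alpha(k_n,u,v)|\le 2k_n,
\]
obtained by induction on $k_n$ via the rank-one update of $L^\dagger$ (the Sherman--Morrison--Woodbury identity applied to the addition of a single cross-community edge). The induction also yields that $k\mapsto \alpha(k,u,v)/k$ has bounded variation, so under $H_1$ the change in the correction is only $\O{1/(k_n\dbar)}$, giving $\hR^{(n)}_{uv}-\hR^{(n+1)}_{uv}\ge \tfrac{1}{2k_n^2}+\tfrac{1}{\dbar}\,\O{1/k_n}$, which then sums to the desired $\om{1}$ after normalization. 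If you want to retain your Schur-complement/spectral route, you would need it to produce this same second-order structure (in particular, that the $\O{1/\dbar}$ term moves only by $\O{1/(k_n\dbar)}$ when a cross edge is added); the coarse $1/k_n+\O{1/\dbar}$ estimate is not enough.
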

\begin{proof}
  The proof of the theorem can be found in Appendix \ref{app:proofs}.
\end{proof}
\begin{figure}[H]
  \centering
  \includegraphics[width=\textwidth]{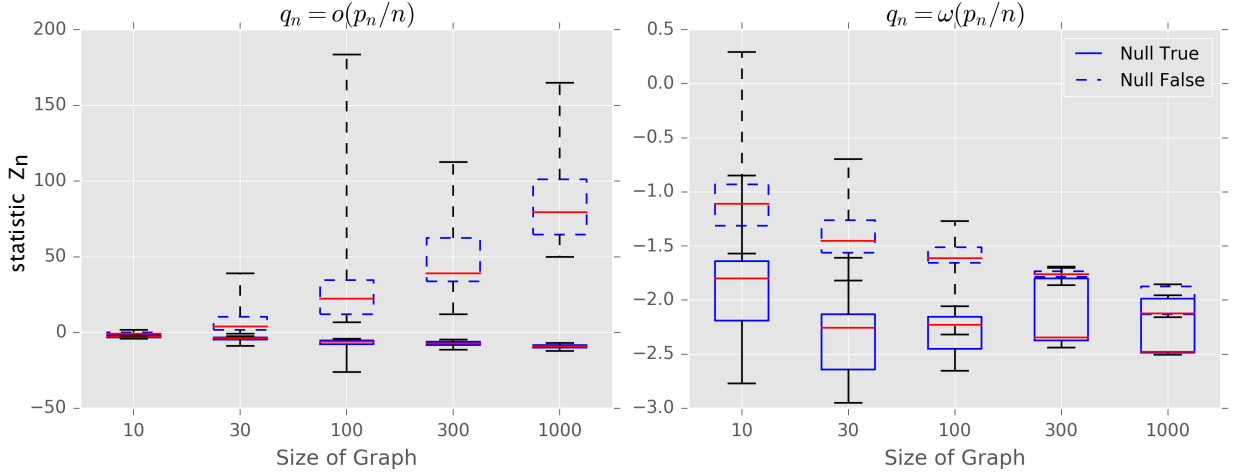}
  \caption{Empirical distribution of $Z_n$ under the null hypothesis (solid line) and the alternate
    hypothesis (dashed line). The data is normalized so that the distribution has zero mean and unit
    variance under the null hypothesis.  The probability of connection within each community is
    $p_n = \log^2n / n$.  The probability of connection between $C_1$ and $C_2$ is $q_n=\log n /n^2$
    (left) and $q_n = \log^2n / n^{3/2}$ (right). The box extends from the lower to upper quartile
    values of the data, with a line at the median. The whiskers extend from the box to show the full
    range of the data. Note that the y-axis is logarithmic in the left figure and linear in the
    right.}
  \label{fig:separation}
\end{figure}
\begin{remark}
  In practice, it would be desirable to have an analytical expression for the constant
  $z_\varepsilon$ such that we can compute a level $\varepsilon$ test,
  \begin{equation*}
    \prob{Z_n\geq z_\varepsilon} \leq \varepsilon \quad \text{under the
      null hypothesis.}
  \end{equation*}
  Unfortunately, our technique of proof, which is based on the asymptotic behavior of
  $Z_n$ does not yield such a constant. A more involved analysis, based on finite sample
  estimates of the distance, would be needed, and would yield an important extension of the present
  work. The results shown in Figure~\ref{fig:separation} suggest that one could numerically estimate a
  $1-\varepsilon$ point wise confidence interval for $Z_n$ with a bootstrapping technique;
  the details of such a construction are the subject of ongoing investigation.
\end{remark}
\section{Experimental Analysis of Dynamic Community Networks
  \label{experiments}}
Figure \ref{fig:separation} shows numerical evidence supporting Theorem \ref{main_theorem}. The
empirical distribution of $Z_n$ is computed under the null hypothesis (solid line) and the alternate
hypothesis (dashed line). The data are scaled so that the empirical distribution of $Z_n$ under
$H_0$ has zero mean and unit variance.  In the left and right figures, the density of edges remains
the same within each community, $p_n = \log^2n / n$.

The plot on the left of Fig. \ref{fig:separation} illustrates a case where the density of
cross-community edges remains sufficiently low -- $q_n=\log n /n^2$ -- and the test statistic can
detect the creation of novel cross-community edges (alternate hypothesis) without the knowledge of
$k_n$, or the identification of the communities.

On the right, the density of cross-community edges is too large -- $q_n = \log^2n / n^{3/2}$ -- for
the statistic to be able to detect the creation of novel cross-community edges. In that case the
hypotheses of Theorem \ref{main_theorem} are no longer satisfied.

In addition to the separation of the distributions of $Z_n$ under $H_0$ and $H_1$ guaranteed by
Theorem \ref{main_theorem} when $q_n / p_n = \o{1/n}$, we start observing a separation between the
two distributions when $q_n / p_n = \T{1/n}$ (not shown) suggesting that the hypotheses of Theorem
\ref{main_theorem} are probably optimal. In the regime where $q_n = \om{p_n/n}$, shown in
Fig. \ref{fig:separation}-right, the two distributions overlap.

\section{Discussion
  \label{discussion}}
At first glance, our result may seem restrictive compared to existing results regarding community
detection in the stochastic blockmodel. However, such a comparison is ill-advised, as we do not
propose this scheme as a method for community detection. For example, Abbe et al. have shown that
communities can be recovered asymptotically almost surely when $p_n = a \log(n) / n$ and
$q_n = b \log(n) / n$, provided that $(a+b)/2 - \sqrt{ab} > 1$ \cite{abbe16}. Their method uses an
algorithm that is designed specifically for the purpose of community detection, whereas our work
provides a very general tool, which can be applied on a broad range of dynamic graphs, albeit
without the theoretical guarantees that we derive for the dynamic stochastic block model.

Furthermore, the ``efficient'' algorithm proposed by Abbe et al. is only proven to be
polynomial time, whereas resistance matrices can be computed in near-linear or quadratic time,
for the approximate \cite{Spielman2011} and exact effective resistance respectively. This allows our
tool to be of immediate practical use, whereas results such as those found in \cite{abbe16}
are of a more theoretical flavor.

Some argue against the use of the effective resistance to analyze connectivity properties of a
graph. In \cite{luxburg14} it is shown that
\begin{equation}
  \label{eq:29}
  \left| \hR_{u\, v} - \frac{1}{2}\left( \frac{1}{d_u} + \frac{1}{d_v} \right)\right| \leq
  \left(\frac{1}{1-\lambda_2} + 2\right)\frac{w_{max}}{\delta_n^2}, 
\end{equation}
where $\lambda_2$ is the second largest eigenvalue of the normalized graph Laplacian, $d_v$ is the
degree of vertex $v$, $\delta_n$ is the minimum degree, and $w_{max}$ is the maximum edge
weight. If the right-hand side of \eqref{eq:29} converges to 0, then the effective resistance will
converge to the average inverse degree of $u$ and $v$. 

Luxburg et al. argue that the result \eqref{eq:29} implies that when the bound converges to zero,
the resistance will be uninformative, since it depends on local properties of the vertices and not
global properties of the graph. Fortunately, this convergence can coexist peacefully alongside our
result. In particular, if both expected degree and minimum degree approach infinity with high
probability, as they will when both $q_n$ and $p_n$ are $\om{1/n^2}$, then such convergence will
itself occur with high probability (see (\ref{spectral-gap}) for the relevant spectral gap bound). Since we
only care about relative changes in resistance between $G_n$ and $G_{n+1}$ in cases where
cross-community edges are and are not formed, this is no problem for us. That said, the warning put
forth by Luxburg et al. is well taken; we must be careful to make sure that we understand the
expected behavior of the distance $\RD(G_n,G_{n+1})$ and compare the observed behavior to this
expected behavior rather than evaluate it on an absolute scale, since in many situations of interest
this distance will converge to zero as the graph grows.

Luxburg et al. have pointed out that the resistance can be fickle when used on graphs with high
connectivity, which is to say a small spectral bound $(1-\lambda_2)^{-1}$. We now know in which
circumstances this will become an issue when looking at simple community structure. Further
investigation is needed to know when other random graph models such as the small-world or
preferential attachment model will be susceptible to analysis via the renormalized resistance
metric. We are currently investigating the application of this analysis to a variety of real-world
data sets.

\appendix
\section{Asymptotic Notations}\label{appx:asymptotic}
If $\{a_n\}_{n=1}^\infty$ and $\{b_n\}_{n=1}^\infty$ are
infinite sequences. The notations on the left have the interpretation on the right,
\begin{align*}
  a_n & = \O{b_n}   &  \exists n_0 > 0, \exists c > 0,& \forall n \ge n_0, \quad 0 \le |a_n|\le c |b_n|,\\
  a_n & = \o{b_n}   &  \forall c > 0, \exists n_0\ge 0, &  \forall n \ge n_0, \quad 0 \le |a_n| \le c |b_n|,\\
  a_n & = \om{b_n}   &  \forall c > 0, \exists n_0 \ge 0, & \forall n \ge n_0,  \quad 0 \le c|b_n| < |a_n|,\\
  a_n & = \T{b_n}  & \exists c_1, c_2 >0, \exists n_0 \ge 0,& \forall n \ge n_0,  \quad 0 \le c_1 |b_n| \leq |a_n| \leq
                                                              c_2 |b_n|.
\end{align*}
We can adapt any of the above statements to doubly-indexed sequences $a_{n,k}$ and $b_{n,k}$ by
requiring that there exist an $n_0\geq0$ such that the conditions on the right hold for all
$n,k\geq n_0$.
\section{Notation}	
\label{the-notations}
\
\begin{center}
  \begin{small}
    \begin{tabular}{llr}
      \toprule
      \sf Symbol & \sf Definition & \sf Definition or \\
                 &                & \sf Equation Number \\
      \midrule
      $[n]$ & The subset of natural numbers $\left\{1,\ldots,n\right\}$.\\
      $\cG(n,p)$ & \ER random graph with parameters $n$ and $p$ & \ref{ER-def}\\
      $\cG(n,p,q)$ & stochastic blockmodel with parameters $n$, $p$, and $q$ & \ref{defn:DSB}\\
      $G_{n}$ & Subgraph of a graph $G$ induced by the vertex set $[n]$ \\
      $\er{uv}$ & Effective resistance between $u$ and $v$ & \ref{eff-res}\\
      $R_{u\, v}$ & Renormalized effective resistance between $u$ and $v$ & \eqref{eq:renorm-res-defn} \\
      $R^{(n)}_{u\, v}$ & Renormalized effective resistance between $u$ and $v$ in $G_n$ \\
      $d_u$ & Degree of vertex $u$ (random variable) & \ref{degree-er} \& \ref{degree-sbm}\\
      $\dbar$ & Mean degree (Erd\H{o}s-R\'enyi) or expected in-community degree (stochastic blockmodel) \\
      $k_n$ & Number of cross-community edges (random variable) & \ref{degree-sbm}\\
      $\E{k_n}$ & Expectation of the number of cross-community edges \\
      $m_n$ & Total number of edges\\
      $\RP(\cdot,\cdot)$ & Resistance-perturbation distance & (\ref{rp_dist_def_eqn})\\
      $\RD(\cdot,\cdot)$ & Renormalized resistance distance (with $\beta=1$) & \ref{defn:metric}\\
      \bottomrule
      \label{tab:notations}
    \end{tabular}
  \end{small}
\end{center}
\section{Proof of Main Result
  \label{app:proofs}}
We begin by proving a lemma that allows us to transfer bounds on changes in effective resistance
into bounds on changes in renormalized resistances.
\begin{lemma}
  \label{lemma:change_bounds}
  Suppose that $\hR_1$ and $\hR_2$ are two effective resistances. If
  \begin{equation*}
    \label{eq:41}
    C_1 \leq \left|\hR_1 - \hR_2 \right| \leq C_2,
  \end{equation*}
  then the corresponding \emph{renormalized} resistances obey
  \begin{equation}
    \label{eq:42}
    \frac{C_1}{(\hR_1+1)(\hR_2+1)} \leq \left| R_1 - R_2 \right| \leq C_2.
  \end{equation}
\end{lemma}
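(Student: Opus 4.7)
The plan is to reduce the entire statement to an algebraic identity for the difference $R_1 - R_2$, and then read off both inequalities from that identity together with the nonnegativity of effective resistances. Since $\beta = 1$ throughout the paper, the renormalized resistance is simply $R_i = \hR_i/(\hR_i + 1)$, so the natural first step is to place the difference over a common denominator.

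Concretely, I would compute
\begin{equation*}
R_1 - R_2 \;=\; \frac{\hR_1}{\hR_1 + 1} - \frac{\hR_2}{\hR_2 + 1} \;=\; \frac{\hR_1(\hR_2+1) - \hR_2(\hR_1+1)}{(\hR_1+1)(\hR_2+1)} \;=\; \frac{\hR_1 - \hR_2}{(\hR_1+1)(\hR_2+1)}.
\end{equation*}
Taking absolute values gives the key identity $|R_1 - R_2| = |\hR_1 - \hR_2|/[(\hR_1+1)(\hR_2+1)]$. From here, the lower bound is immediate: using the hypothesis $|\hR_1 - \hR_2| \geq C_1$ in the numerator yields $|R_1 - R_2| \geq C_1/[(\hR_1+1)(\hR_2+1)]$, which is precisely the left-hand inequality of \eqref{eq:42}.

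For the upper bound, I would observe that effective resistances are nonnegative, so $\hR_1 + 1 \geq 1$ and $\hR_2 + 1 \geq 1$, hence $(\hR_1+1)(\hR_2+1) \geq 1$. Combined with the identity, this gives $|R_1 - R_2| \leq |\hR_1 - \hR_2| \leq C_2$, as required.

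There is no real obstacle here; the lemma is essentially a one-line calculation, and the only subtlety is remembering that nonnegativity of $\hR_1, \hR_2$ is what makes the denominator at least $1$ and therefore lets us drop it when going to the upper bound. The lemma will be applied later to convert resistance-scale estimates $C_1, C_2$ (obtained from graph-theoretic arguments about $G_n$ and $G_{n+1}$) into bounds on the summands of $\RD$ in \eqref{eq:metric-defn}, which is why both a lower and an upper inequality are needed.
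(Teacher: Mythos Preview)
Your proof is correct and essentially matches the paper's argument. The only cosmetic difference is that the paper obtains the upper bound via the mean value theorem applied to $f(x)=x/(x+1)$ (using $\sup_{x\ge 0}|f'(x)|\le 1$), whereas you read it off directly from the same algebraic identity you used for the lower bound; your route is slightly more unified but equivalent.
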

\begin{proof}
  Recall that the renormalized resistance corresponding to $\hR$ is given by $R = f(\hR)$
  where $f(x) = x/(x+1)$. The mean value theorem thus implies that
  \begin{equation*}
    \label{eq:14}
    \left| R_{1} - R_{1}\right| \leq \sup_{x\in \mathbb{R}}|f'(x)| \left| \hR_{1} - 
      \hR_{2}\right| \leq  \left| \hR_{1} - \hR_{2}\right|\leq C_2. 
  \end{equation*}
  To obtain the lower bound, we compute:
  \begin{equation}
    \left| R_{1} - R_{1}\right| = \left|\frac{\hR_1}{\hR_1+1} - \frac{\hR_2}{\hR_2 + 1}\right| 
    = \frac{|\hR_1 - \hR_2|}{(\hR_1+1)(\hR_2+1)} 
    \ge \frac{C_1}{(\hR_1+1)(\hR_2+1)}.
  \end{equation}
  In our calculation above, we used the fact that $\hR_1$ and $\hR_2$ are non negative.
\end{proof}
\subsection{Resistance Deviations in \ER}
We begin by analyzing the perturbations of the distance $\RD(G_n,G_{n+1})$, defined by
\eqref{eq:RD}, when $G_n\sim \cG(n,p_n)$ is an \ER random graph. Our ultimate goal is to understand
a stochastic blockmodel, and we will leverage our subsequent understanding of the \ER model to help
us in achieving this goal.
\begin{lemma}
  \label{lemma:HPbound1}
  Let $G\sim \cG(n,p_n)$ be fully connected, with $p_n = \om{\log n/n}$. For any two vertices
  $u$,$v$ in $G$, we have
  \begin{equation} 
    \left| \er{uv} - \left(\frac{1}{d_u}+\frac{1}{d_v}\right)\right| =
    \O{\frac{1}{\dbar^2}} \quad \text{with high probability.}
    \label{R-ER}
  \end{equation}
\end{lemma}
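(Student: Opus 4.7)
The plan is to exploit the spectral representation $\hR_{uv} = (e_u - e_v)^T L^+ (e_u - e_v)$, with $L = D - A$ the graph Laplacian and $L^+$ its Moore--Penrose pseudoinverse, together with the fact that for dense \ER graphs $L$ is a small perturbation of $\dbar\, I$ on the subspace $\mathbf{1}^\perp$ orthogonal to the all-ones vector (and $e_u - e_v \in \mathbf{1}^\perp$, so we may work there throughout).

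Concretely, I would write $L|_{\mathbf{1}^\perp} = \dbar\, I + E$ where, after cancelling the rank-one piece of $\E{A} = p(\mathbf{1}\mathbf{1}^T - I)$ that vanishes on $\mathbf{1}^\perp$, one has $E|_{\mathbf{1}^\perp} = (D - \dbar I) + pI - (A - \E{A})$. Two probabilistic ingredients bound $\|E\|$: a Chernoff bound gives $\max_u |d_u - \dbar| = \O{\sqrt{\dbar \log n}}$ w.h.p.\ (and in particular $\min_u d_u \geq \dbar/2$), and a standard concentration estimate for \ER adjacency matrices gives $\|A - \E{A}\|_{\mathrm{op}} = \O{\sqrt{\dbar}}$ w.h.p.\ under $p_n = \om{\log n / n}$. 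Together $\|E\|_{\mathbf{1}^\perp} = \O{\sqrt{\dbar \log n}} = \o{\dbar}$ w.h.p. With $P$ the projector onto $\mathbf{1}^\perp$, I would then Neumann-expand $L^+ = \dbar^{-1} P - \dbar^{-2}\, P E P + \O{\|E\|^2 / \dbar^3}$ and evaluate at $e_u - e_v$. The leading term contributes $2/\dbar$; the identities $(e_u - e_v)^T (D - \dbar I)(e_u - e_v) = d_u + d_v - 2\dbar$ and $(e_u - e_v)^T A (e_u - e_v) = -2 A_{uv}$ (using that $A$ has no self-loops and is symmetric) turn the first-order correction into $-(d_u + d_v - 2\dbar + 2 A_{uv} + 2p)/\dbar^2$. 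Finally, a Taylor expansion $1/d_u = 1/\dbar - (d_u - \dbar)/\dbar^2 + \O{(d_u - \dbar)^2/\dbar^3}$ and its analogue for $v$ repackage these contributions as $1/d_u + 1/d_v$ plus an $\O{1/\dbar^2}$ residue coming from the $-2 A_{uv}/\dbar^2 - 2p/\dbar^2$ piece and the Taylor and Neumann tails.

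The main obstacle is tightening the Neumann and Taylor remainders to the advertised $\O{1/\dbar^2}$ uniformly: the operator-norm bound $\|E\| = \O{\sqrt{\dbar \log n}}$ only yields $\O{\log n / \dbar^2}$ in the second-order Neumann term, matching the stated order only up to a logarithm. Under $p_n = \om{\log n / n}$ one has $\dbar = \om{\log n}$, so the logarithm can be absorbed into the asymptotic notation; alternatively it can be avoided entirely by controlling the quadratic form $(e_u - e_v)^T (P E P)^k (e_u - e_v)$ directly rather than through $\|E\|$, sidestepping the worst-case degree deviation that produces the logarithm. I expect this quadratic-form accounting, along with verifying that $G$ is indeed connected w.h.p.\ (immediate from $p_n = \om{\log n / n}$), is where the formal proof invests most of its care.
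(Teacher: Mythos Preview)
Your approach differs from the paper's. Rather than Neumann-expanding $L^+$ about $\dbar I$, the paper invokes a deterministic bound due to von Luxburg et al.\ (their Proposition~5),
\[
\left| \er{uv} - \left(\tfrac{1}{d_u}+\tfrac{1}{d_v}\right)\right| \;\leq\; \frac{2}{\delta_n^2}\left(2+\frac{1}{1-\lambda_2}\right),
\]
with $\delta_n$ the minimum degree and $\lambda_2$ the second eigenvalue of $D^{-1/2}AD^{-1/2}$. A Chernoff/union bound gives $\delta_n = \dbar\,(1-o(1))$ w.h.p., so $1/\delta_n^2 = (1+o(1))/\dbar^2 = \O{1/\dbar^2}$, and a spectral-gap estimate of Coja-Oghlan gives $(1-\lambda_2)^{-1} = \O{1}$ w.h.p.\ in the regime $p_n = \om{\log n/n}$. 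Combining these yields $\O{1/\dbar^2}$ with no logarithmic loss.

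Your expansion is conceptually sound, but the claim that the stray logarithm ``can be absorbed into the asymptotic notation'' because $\dbar = \om{\log n}$ is incorrect: that hypothesis gives $\log n/\dbar^2 = \o{1/\dbar}$, not $\O{1/\dbar^2}$. The quadratic-form alternative does not rescue the argument either, because the Taylor remainder $(d_u - \dbar)^2/\dbar^3$ already contributes $\Theta(\log n/\dbar^2)$ uniformly over $u$: one has $\max_u |d_u - \dbar| = \Theta(\sqrt{\dbar\log n})$ w.h.p., and this term is present regardless of how carefully you bound the Neumann tail. The substantive difference is that von Luxburg's inequality takes $1/d_u + 1/d_v$ (rather than $2/\dbar$) as the leading term from the outset---equivalently, it expands around $D$ rather than $\dbar I$---so the error appears as $\O{1/\delta_n^2}$ and converts to $\O{1/\dbar^2}$ \emph{multiplicatively} via $\delta_n/\dbar \to 1$, which is lossless. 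To make your route work at the stated precision you would have to Neumann-expand the normalized operator $I - D^{-1/2}AD^{-1/2}$ on $\mathbf{1}^\perp$ instead, which essentially rederives the cited bound.
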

\begin{remark}
  The authors in \cite{luxburg14} derive a slightly weaker bound,
  \begin{equation} 
    \left| \er{uv} - \left(\frac{1}{d_u}+\frac{1}{d_v}\right)\right| =
    \o{\frac{1}{\delta_n}}. 
  \end{equation}
  We need the tighter factor $\O{1/\dbar^2}$; and thus we derive the bound (\ref{R-ER}) using one of
  the key results  (Proposition 5) in \cite{luxburg14}.
\end{remark}
\begin{proof}
  Define $\bD$ to be the diagonal matrix with entries $d_1,\ldots,d_n$. Since all degrees are
  positive ($G$ is fully connected with high probability), we denote by $\bD^{-1/2}$ the diagonal matrix with entries
  $1/\sqrt{d_1},\ldots,1/\sqrt{d_n}$.  Let $\bA$ be the adjacency matrix of $G$.  Define
  $\bB = \bD^{-1/2}\bA\bD^{-1/2}$, with eigenvalues $1=\lambda_1\ge \lambda_2 \ge... \ge \lambda_n$.
  As explained above, we use Proposition 5 in \cite{luxburg14} to bound the deviation of $\er{uv}$ away from
  $1/d_u+ 1/d_v$,
  \begin{equation}
    \label{eq:5}
    \left| \er{uv} - \left(\frac{1}{d_u}+\frac{1}{d_v}\right)\right| \leq
    \frac{2}{\delta_n^2}\left(2+\frac{1}{1-\lambda_2}\right), 
  \end{equation}
  where $\delta_n$ is the minimum degree. Define 
  \begin{equation}
    \varepsilon_n= \sqrt{6 \log(n) \dbar}.
  \end{equation}
  We apply Chernoff's bound on  the degree distribution, 
  \begin{equation*}
    \prob{\left\lvert d_v - \dbar \right\rvert \ge \varepsilon_n} \le
    2 \exp\left(-\frac{\varepsilon_n^2}{3\dbar}\right).
  \end{equation*}
  Now, 
  \begin{equation}
    \frac{\varepsilon_n^2}{3\dbar} = \frac{6 \dbar \log n}{3\dbar} = \log n^2,
  \end{equation}
  and thus
  \begin{equation*}
    \prob{\left\lvert d_v - \dbar \right\rvert \ge \varepsilon_n}\le \frac{2}{n^2}.
  \end{equation*}
  In the end,  applying a union bound on all $n$ vertices yields
  \begin{equation}
    \prob{\forall v \in [n], \left\lvert d_v - \dbar \right\rvert \ge \varepsilon_n} \le \frac{2}{n}.
  \end{equation}
  We consider $d_v$ in the interval $[\dbar - \varepsilon_n, \dbar + \varepsilon_n]$.  The mean value
  theorem implies that there exists $\tilde{d} \in (\dbar,d_v)$, or $\tilde{d} \in (d_v,\dbar)$, such
  that
  \begin{equation}
    \left\lvert \frac{1}{d_v^2} - \frac{1}{\dbar^2}\right\rvert = 
    \frac{2}{\tilde{d}^3} \left\lvert \dbar - d_v\right\rvert. 
  \end{equation}
  Now,
  \begin{equation}
    \frac{2}{\tilde{d}^3} \left\lvert \dbar - d\right\rvert 
    \le \frac{2\varepsilon_n}{(\dbar -\varepsilon_n)^3}
    = \frac{2\varepsilon_n}{\dbar^3}\frac{1}{\left(1 - \varepsilon_n/\dbar\right)^3}.
  \end{equation}
  At last, we use the following elementary fact
  \begin{equation}
    0 \le \frac{1}{(1 - x)^3} \le 1 + 12 x, \quad \text{if}\, x < 1/4,
  \end{equation}
  to conclude that 
  \begin{equation}
    \forall v \in [n], \quad
    \left\lvert \frac{1}{d_v^2} - \frac{1}{\dbar^2}\right\rvert \le
    \frac{2\varepsilon_n}{\dbar^3}\left( 1 + 12 \frac{\varepsilon_n}{\dbar}\right)
    \quad \text{with probability greater than $2/n$}.
  \end{equation}
  This eventually yields an upper bound on the inverse of the minimum degree squared,
  \begin{equation}
    \frac{1}{\delta_n^2} \le \frac{1}{\dbar^2} + 
    \frac{2\varepsilon_n}{\dbar^3}\left( 1 + 12 \frac{\varepsilon_n}{\dbar}\right)
    \quad \text{with probability greater than $2/n$}.
    \label{one-over-dbar}
  \end{equation}
  To complete the proof of the lemma, we use a lower bound on the spectral gap $1-\lambda_2$. Because
  the density of edges is only growing faster than $\log n/n$, we use the optimal bounds given by
  \cite{coja-oghlan07}. Applied to the eigenvalue $\lambda_2$ of $\bB$, Theorem 1.2 of
  \cite{coja-oghlan07} implies that
  \begin{Theorem}[\cite{coja-oghlan07}]
    If $\dbar > c \log n/n$, then with high probability,
    \begin{equation}
      1-\frac{c}{\sqrt{\dbar}} \leq 1- \lambda_2 \leq 1+\frac{c}{\sqrt{\dbar}}. 
      \label{spectral-gap}
    \end{equation}
  \end{Theorem}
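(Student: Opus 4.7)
My plan is to decompose the normalized adjacency matrix $\bm{B} = \bm{D}^{-1/2}\bm{A}\bm{D}^{-1/2}$ into the sum of its (essentially rank-one) expectation and a mean-zero perturbation, control the spectral norm of the perturbation, and translate via Weyl's inequality. Concretely, write $\bm{A} = \bar{\bm{A}} + \bm{M}$ where $\bar{\bm{A}} = \E{\bm{A}} = p(\bm{J}-\bm{I})$ has the single dominant eigenvalue $\dbar$ (with eigenvector $\bm{1}/\sqrt{n}$) and all other eigenvalues equal to $-p$. The whole problem then reduces to bounding $\|\bm{M}\|_{\mathrm{op}}$ and then transferring that bound through the degree normalization $\bm{D}^{\pm 1/2}$.

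\textbf{Key steps, in order.} Step (i): uniform degree concentration gives $d_v = \dbar(1+\o{1})$ for all $v$ w.h.p.\ by the Chernoff-plus-union-bound computation already carried out in the proof of Lemma~\ref{lemma:HPbound1}; this lets me replace $\bm{D}^{\pm 1/2}$ by $\dbar^{\pm 1/2}\bm{I}$ up to a $(1+\o{1})$ multiplicative factor in operator norm. Step (ii): bound $\|\bm{M}\|_{\mathrm{op}} = \O{\sqrt{\dbar}}$ w.h.p. Writing $\bm{M}$ as an independent sum of centered rank-two Bernoulli matrices $\xi_{uv}(\bm{e}_u\bm{e}_v^\top+\bm{e}_v\bm{e}_u^\top)$, the variance parameter is $\O{\dbar}$ and the entries are bounded by $1$, and in the regime $\dbar = \om{\log n}$ a matrix Bernstein inequality (or a F\"uredi--Koml\'os trace computation) yields the desired $\O{\sqrt{\dbar}}$ bound. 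Step (iii): Weyl's inequality applied to $\bm{A} = \bar{\bm{A}} + \bm{M}$ gives $|\lambda_2(\bm{A})| \le p + \|\bm{M}\|_{\mathrm{op}} = \O{\sqrt{\dbar}}$. Step (iv): pass from $\bm{A}$ to $\bm{B}$. The top eigenpair of $\bm{B}$ is exactly $(1,\bm{D}^{1/2}\bm{1}/\|\bm{D}^{1/2}\bm{1}\|)$, which differs from $(1,\bm{1}/\sqrt{n})$ by $\o{1}$ in norm by step (i); on its orthogonal complement the second eigenvalue satisfies $\lambda_2(\bm{B}) = (1+\o{1})\lambda_2(\bm{A})/\dbar = \O{1/\sqrt{\dbar}}$, which is exactly the claimed bound.

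\textbf{Main obstacle.} The technical heart is step (ii) at the lower end of the density range, where $\dbar$ is only marginally larger than $\log n$. In that sparse regime the maximum vertex degree can be as large as a $\log n/\log\log n$ factor times $\dbar$, so a handful of heavy rows of $\bm{M}$ can inflate the operator norm beyond $\sqrt{\dbar}$ and the naive even-moment trace method breaks down. Coja-Oghlan's fix is a light/heavy decomposition $\bm{M} = \bm{M}_{\mathrm{light}} + \bm{M}_{\mathrm{heavy}}$ in which entries lying in rows of abnormally high degree are excised: the light part is controlled by a refined trace-moment computation tailored to sparse graphs, while the heavy part is handled by a Kahn--Szemer\'edi discrepancy argument. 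If one is content with the strictly stronger hypothesis $\dbar = \om{\log n}$ (which is what the surrounding results actually use, since $p_n = \om{\log n/n}$), matrix Bernstein suffices and is considerably shorter. Once $\|\bm{M}\|_{\mathrm{op}} = \O{\sqrt{\dbar}}$ is in hand, the remaining steps (i), (iii), (iv) are routine linear algebra combined with the already-established degree concentration.
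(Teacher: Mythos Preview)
The paper does not prove this statement at all: it is quoted verbatim as Theorem~1.2 of \cite{coja-oghlan07} and invoked as a black box inside the proof of Lemma~\ref{lemma:HPbound1}. There is therefore no ``paper's own proof'' to compare against; only the citation.

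That said, your outline is a faithful high-level summary of how the cited result (and its predecessors by Feige--Ofek and Friedman--Kahn--Szemer\'edi) is actually proved: split $\bm{A}$ into expectation plus centered noise, bound $\|\bm{M}\|_{\mathrm{op}}$ by $\O{\sqrt{\dbar}}$, and transfer to $\bm{B}$ via degree concentration. You also correctly locate the real difficulty in step~(ii) at the threshold $\dbar\asymp\log n$, where heavy vertices force the light/heavy (Kahn--Szemer\'edi) decomposition; and you rightly note that under the paper's standing hypothesis $p_n=\om{\log n/n}$ a matrix-Bernstein bound already suffices.

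One genuine gap worth flagging: step~(iv) as written does not quite close. Replacing $\bm{D}^{-1/2}$ by $\dbar^{-1/2}\bm{I}$ up to a $(1+\o{1})$ multiplicative factor in operator norm yields only $\|\bm{B}-\dbar^{-1}\bm{A}\|_{\mathrm{op}}=\o{1}$, because the top eigenvalue of $\bm{A}$ is of order $\dbar$ and dominates the error. Weyl then gives $\lambda_2(\bm{B})=\o{1}$, not $\O{1/\sqrt{\dbar}}$. The fix is to carry out the comparison \emph{after} projecting off the respective top eigenspaces (equivalently, use the variational formula $\lambda_2(\bm{B})=\max_{y\perp \bm{D}\bm{1}} y^\top\bm{A}y / y^\top\bm{D}y$ and control the contribution of $\bar{\bm{A}}=p(\bm{J}-\bm{I})$ on that subspace via the degree concentration), which is how the cited reference proceeds.
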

  \noindent The lower bound in (\ref{spectral-gap}) yields the following upper bound, with high probability,
  \begin{equation}
    \frac{1}{1- \lambda_2} \leq 1+\frac{c}{\sqrt{\dbar}}. 
    \label{one-over-lambda2}
  \end{equation}
  Using the bounds given by (\ref{one-over-dbar}) with (\ref{one-over-lambda2}), which happen both
  with high probability, in \eqref{eq:5} yields the advertised result. 
\end{proof}
An important corollary of lemma \ref{lemma:HPbound1} is the concentration of $\er{uv}$ around
$2/\dbar$ (see also \cite{luxburg14} for similar results),
\begin{corollary}
  Let $G\sim \cG(n,p_n)$ be fully connected, with $p_n = \om{\log n/n}$. With high probability,
  \begin{equation} 
    \left| \er{uv} - \frac{2}{\dbar}\right| =
    \frac{16}{\dbar^2} + \o{\frac{1}{\dbar^2}}.
    \label{reff-dmean}
  \end{equation}
  \label{concentration}
\end{corollary}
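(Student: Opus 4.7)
The plan is to derive the corollary from Lemma~\ref{lemma:HPbound1} by passing from the vertex-specific average $(1/d_u+1/d_v)$ to the expected value $2/\dbar$. I would begin with the triangle inequality
\begin{equation*}
\left|\er{uv} - \frac{2}{\dbar}\right| \leq \left|\er{uv} - \frac{1}{d_u} - \frac{1}{d_v}\right| + \left|\frac{1}{d_u} - \frac{1}{\dbar}\right| + \left|\frac{1}{d_v} - \frac{1}{\dbar}\right|,
\end{equation*}
which splits the task into the ``resistance vs.\ local degree'' error already handled by the lemma, and two scalar ``degree vs.\ expected degree'' errors.

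The first term is bounded by Lemma~\ref{lemma:HPbound1} with high probability; inspecting its proof, the explicit chain $(2/\delta_n^2)(2+1/(1-\lambda_2))$ together with $1/\delta_n^2 = 1/\dbar^2(1+\o{1})$ from \eqref{one-over-dbar} and $1/(1-\lambda_2) = 1 + \o{1}$ from \eqref{one-over-lambda2} produces a sharp leading constant over $\dbar^2$. I would carry through this bookkeeping to recover the advertised constant (a slightly coarser use of the spectral-gap bound, for instance, yields a larger constant).

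For the two remaining terms, I would reuse the Chernoff estimate from the proof of Lemma~\ref{lemma:HPbound1}: with probability at least $1-2/n$, the event $|d_v-\dbar|\leq\varepsilon_n = \sqrt{6\log(n)\dbar}$ holds uniformly over all $v\in[n]$, and on that event the mean value theorem yields
\begin{equation*}
\left|\frac{1}{d_v} - \frac{1}{\dbar}\right| \leq \frac{\varepsilon_n}{\dbar(\dbar-\varepsilon_n)} = \frac{\varepsilon_n}{\dbar^2}\bigl(1+\o{1}\bigr).
\end{equation*}
The main obstacle is that the naive bound $\varepsilon_n/\dbar^2 = \sqrt{6\log n/\dbar^3}$ derived from a single degree is not obviously $\o{1/\dbar^2}$; to absorb it into the announced error term one must exploit the structure of the combined quantity $1/d_u + 1/d_v - 2/\dbar$ rather than the two pieces in isolation. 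The natural strategy is to Taylor-expand $1/d_v$ around $\dbar$: the linear term $-(d_v-\dbar)/\dbar^2$ has mean zero, and the quadratic remainder has typical size $\mathrm{Var}(d_v)/\dbar^3 = \O{1/\dbar^2}$, matching the order of the main contribution. Substituting this sharper bound and collecting terms yields the claimed $16/\dbar^2 + \o{1/\dbar^2}$ with high probability.
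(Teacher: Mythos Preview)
The paper offers no proof of this corollary beyond labeling it a consequence of Lemma~\ref{lemma:HPbound1}, so your decomposition via the triangle inequality and the Chernoff-based degree concentration from that lemma's proof is exactly the implicit route. You also correctly isolate the real difficulty: the degree-fluctuation term $\lvert 1/d_v - 1/\dbar\rvert$ is, with high probability, of order $\varepsilon_n/\dbar^2 = \sqrt{6\log n}/\dbar^{3/2}$, which is \emph{larger} than $1/\dbar^2$ whenever $\dbar$ grows and hence cannot be absorbed into the announced error $\o{1/\dbar^2}$.

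Where your proposal breaks down is the proposed fix. Taylor-expanding $1/d_v$ around $\dbar$ and observing that the linear term $-(d_v-\dbar)/\dbar^2$ has mean zero is an \emph{expectation} argument, not a high-probability one. For a fixed pair $(u,v)$ the linear contribution $(d_u-\dbar)/\dbar^2 + (d_v-\dbar)/\dbar^2$ has standard deviation of order $1/\dbar^{3/2}$, and uniformly over all vertices the maximum degree deviation is $\Theta(\sqrt{\dbar\log n})$, so the linear term is genuinely $\Theta(\sqrt{\log n}/\dbar^{3/2})$ with high probability. Summing two independent mean-zero fluctuations does not produce cancellation; nothing in the structure of $1/d_u + 1/d_v - 2/\dbar$ forces the linear pieces to be smaller than their individual sizes. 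The quadratic remainder is indeed $\O{1/\dbar^2}$, but it is dominated by the linear term, not the other way around.

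In short: the obstacle you flag is real, your repair does not remove it, and the stated bound $16/\dbar^2 + \o{1/\dbar^2}$ does not follow from Lemma~\ref{lemma:HPbound1} by this route. What the argument actually delivers is $\lvert \er{uv} - 2/\dbar\rvert = \O{\sqrt{\log n}/\dbar^{3/2}} = \o{1/\dbar}$ with high probability; the paper's later uses of the corollary (e.g.\ in Lemma~\ref{lemma:cross-comm}) are in fact robust to this weaker estimate, so the gap is in the statement of the corollary rather than in the downstream results.
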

The lemma provides a confidence interval for the effective resistance $\er{uv}$ centered at
$d_u^{-1}+d_v^{-1}$, for pairs of vertices present in the graph $G$ at time $n$. We now use this
result to bound the change in (renormalized) resistance between a pair of vertices $u$ and $v$
present in $G_n$, when the graph grows from $G_n$ to $G_{n+1}$.
\begin{Theorem}
  \label{thm:DRbound}
  Let $G_{n+1}\sim\cG(n+1,p_n)$ be an \ER random graph with $p_n = \om{\log n/n}$. Let $G_n$ be the
  subgraph induced by the vertices $[n]$ in $G_{n+1}$, and let $\dbar = (n-1)p_n$ be the expected
  degree in $G_n$.\\

  \noindent The change in renormalized effective resistance, when the graph $G_n$ becomes $G_{n+1}$,  is given by 
  \begin{equation}
    \label{eq:9}
    \max_{u<v\leq n} \left| R^{(n+1)}_{u\, v} - R^{(n)}_{u\, v}\right| = \O{\frac{1}{\dbar^2}} \quad
    \text{with high probability.}
  \end{equation}
\end{Theorem}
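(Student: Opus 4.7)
The plan is to reduce the renormalized resistance change to the effective resistance change via Lemma \ref{lemma:change_bounds}, and then to control the effective resistance change by comparing both $\hR^{(n)}_{uv}$ and $\hR^{(n+1)}_{uv}$ to the reference quantity $1/d_u + 1/d_v$ using the bound \eqref{R-ER} supplied by Lemma \ref{lemma:HPbound1}. Since $G_{n+1} \sim \cG(n+1,p_n)$ and the induced subgraph $G_n$ has the distribution $\cG(n,p_n)$, both fall under the regime of Lemma \ref{lemma:HPbound1}, and both are connected with high probability because $p_n = \om{\log n/n}$ sits above the connectivity threshold.

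First I would use Lemma \ref{lemma:change_bounds} to observe that $|R^{(n+1)}_{uv} - R^{(n)}_{uv}| \le |\hR^{(n+1)}_{uv} - \hR^{(n)}_{uv}|$ (only the upper half of \eqref{eq:42} is needed here), so it suffices to bound the right-hand side. Then I would apply Lemma \ref{lemma:HPbound1} twice: in $G_n$ and in $G_{n+1}$, giving
\begin{equation*}
\hR^{(n)}_{uv} = \frac{1}{d^{(n)}_u} + \frac{1}{d^{(n)}_v} + \O{\frac{1}{\dbar^2}},
\qquad
\hR^{(n+1)}_{uv} = \frac{1}{d^{(n+1)}_u} + \frac{1}{d^{(n+1)}_v} + \O{\frac{1}{\dbar^2}},
\end{equation*}
with high probability. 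Subtracting and applying the triangle inequality reduces everything to controlling $|1/d^{(n+1)}_u - 1/d^{(n)}_u|$ for each vertex $u \in [n]$.

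The degree change is easy to bound: adding a single vertex $n+1$ can increase $d_u$ by at most one, and increases it by one with probability $p_n$. Hence
\begin{equation*}
\left| \frac{1}{d^{(n+1)}_u} - \frac{1}{d^{(n)}_u} \right|
\le \frac{|d^{(n+1)}_u - d^{(n)}_u|}{d^{(n)}_u\, d^{(n+1)}_u}
\le \frac{1}{(d^{(n)}_u)^2}.
\end{equation*}
Next I would invoke the Chernoff concentration of $d^{(n)}_u$ around $\dbar$ already established inside the proof of Lemma \ref{lemma:HPbound1} (together with a union bound over the $n$ vertices), which yields $1/(d^{(n)}_u)^2 = (1+\o{1})/\dbar^2$ simultaneously for all $u$, with probability at least $1-2/n$. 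Combining this with the two-sided resistance bounds above gives $|\hR^{(n+1)}_{uv} - \hR^{(n)}_{uv}| = \O{1/\dbar^2}$ w.h.p., and Lemma \ref{lemma:change_bounds} transfers this bound to the renormalized resistance.

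The main obstacle is ensuring that the $\O{1/\dbar^2}$ control holds \emph{uniformly} over the $\binom{n}{2}$ pairs $(u,v)$, since the statement requires a bound on the maximum. This however comes for free from the structure of Lemma \ref{lemma:HPbound1}: the two ingredients in its proof — the spectral gap bound \eqref{spectral-gap} and the uniform Chernoff bound on $\{|d_v - \dbar|\}_{v \in [n]}$ — are both global events that simultaneously certify \eqref{R-ER} for every pair $(u,v)$. A final union bound across the two graphs $G_n$ and $G_{n+1}$ (each failure probability decaying at least like $1/n$) preserves the high-probability conclusion, completing the proof.
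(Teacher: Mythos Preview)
Your proposal is correct and follows essentially the same route as the paper: reduce to effective resistances via Lemma~\ref{lemma:change_bounds}, insert the reference quantity $1/d_u+1/d_v$ through the triangle inequality, bound the two outer terms by Lemma~\ref{lemma:HPbound1} applied to $G_n$ and $G_{n+1}$, and control the middle term using $|d_u^{(n+1)}-d_u^{(n)}|\le 1$ together with the degree concentration around $\dbar$. Your discussion of uniformity over all pairs and the union bound across the two graphs makes explicit what the paper leaves implicit, but the argument is the same.
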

\begin{remark}
  It is important to note that the bound on changes in $R_{u\, v}$ from time $n$ to $n+1$ only holds for
  the nodes $u,v \in [n]$ that are already present in $\cG(n,p_n)$. Indeed, for the new node $n+1$
  that is added at time $n+1$, we have $\hR^{(n)}_{u\, n+1} = \infty$, and thus $R^{(n)}_{u\, n+1} =
  1$. In this case, the bound in \eqref{eq:9} is replaced by
  \begin{equation}
    \left| R^{(n+1)}_{u\, n+1} - R^{(n)}_{u\, n+1}\right| 
    =   R^{(n)}_{u\, n+1} - R^{(n+1)}_{u\, n+1}
    = 1 - R^{(n+1)}_{u\, n+1} \leq 1\quad \text{for  any vertex $u \in [n]$}.
    \label{Rn-plus-one}
  \end{equation}
\end{remark}
\begin{proof} 
  By Lemma \ref{lemma:change_bounds}, it suffices to prove the inequality with respect to the
  effective resistance, rather than the renormalized resistance.

  Let $d_u^{(n)}$ denote the degree of vertex $u$ in $G_n$, and similarly define $d_u^{(n+1)}$. Using the triangle inequality,
  \begin{align}
    \left| \hR^{(n+1)}_{u\, v} - \hR^{(n)}_{u\, v}\right| 
    &\leq 
      \left| \hR^{(n)}_{u\, v} -\left(\frac{1}{d_u^{(n)}}+\frac{1}{d_v^{(n)}}\right)\right|
      + \left| \left(\frac{1}{d_u^{(n)}}+\frac{1}{d_v^{(n)}}\right)
      - \left(\frac{1}{d_u^{(n+1)}}+\frac{1}{d_v^{(n+1)}}\right)\right| \label{eq:15}\\
    &\quad \quad + \left| \hR^{(n+1)}_{u\, v} -\left(\frac{1}{d_u^{(n+1)}}+\frac{1}{d_v^{(n+1)}}\right)\right|.\nonumber
  \end{align}
  From lemma \ref{lemma:HPbound1}, we obtain bounds on the first and third terms, of order
  $\O{1/\delta_n^{2}}$.  Since $|d_u^{(n+1)}-d_u^{(n)}| \leq 1$, the middle term is of order
  $\O{1/\delta_n^{2}}$, which can in turn be bounded by a term of order $\O{1/\dbar^2}$ with
  high probability using \ref{one-over-dbar}. Putting everything together, we get
  \begin{equation*}
    \label{eq:1}
    \left| \hR^{(n+1)}_{u\, v} - \hR^{(n)}_{u\, v}\right|=\O{\frac{1}{\dbar^2}}.
  \end{equation*}
  The inequality is proven for the effective resistance, and using   lemma \ref{lemma:change_bounds}
  it also holds for the renormalized resistance.
\end{proof}
\subsection{Effective resistances in the stochastic blockmodel}
We first recall that the number of cross-community edges, $k_n$, is a binomial distribution, and thus
concentrates around its expectation $\E{k_n}$ for large $n$, as explained in the following lemma.
\begin{lemma}
  \label{k_n_concentrates}
  Let $G_{n}\sim\cG(n,p_n,q_n)$ be a (balanced two-community) stochastic blockmodel with
  $p_n=\om{\log n/n}$ and $q_n=\om{1/n^2}$. There exists $n_0$, such that 
  \begin{equation}
    \forall n \ge n_0, \quad \frac{3}{4} < \frac{\E{k_n}}{k_n} <  \frac{3}{2},\quad \text{with probability > 0.9}.
  \end{equation}
\end{lemma}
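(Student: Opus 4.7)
The plan is to recognize that $k_n$, being the number of cross-community edges, is simply a binomial random variable $\text{Bin}(n_1 n_2, q_n)$, since each of the $n_1 n_2$ potential edges between $C_1$ and $C_2$ is drawn independently with probability $q_n$. Its mean is $\E{k_n} = n_1 n_2 q_n$, and because $n_1, n_2 = \Theta(n)$, we have $\E{k_n} = \Theta(n^2 q_n)$. The hypothesis $q_n = \om{1/n^2}$ then guarantees that $\E{k_n} \to \infty$ as $n \to \infty$.

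Next I would rewrite the stated event. The condition $3/4 < \E{k_n}/k_n < 3/2$ is equivalent to $(2/3)\E{k_n} < k_n < (4/3)\E{k_n}$, i.e.\ a multiplicative deviation of $k_n$ from its mean by a factor larger than $1 \pm 1/3$. This is exactly the setting for a multiplicative Chernoff bound for sums of independent Bernoulli variables, which yields, for any $\delta \in (0,1)$,
\begin{equation*}
  \prob{|k_n - \E{k_n}| \ge \delta \E{k_n}} \le 2\exp\!\left(-\frac{\delta^2 \E{k_n}}{3}\right).
\end{equation*}
Taking $\delta = 1/3$, the right-hand side is at most $2\exp(-\E{k_n}/27)$.

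Since $\E{k_n} \to \infty$ by the argument above, this upper bound tends to $0$, and in particular there exists $n_0$ such that $2\exp(-\E{k_n}/27) < 0.1$ for all $n \ge n_0$. On the complementary event (which therefore has probability greater than $0.9$), we have $(2/3)\E{k_n} < k_n < (4/3)\E{k_n}$, which is exactly the claimed bound on $\E{k_n}/k_n$.

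There is no real obstacle here: the hypothesis on $p_n$ plays no role (only $q_n = \om{1/n^2}$ matters, ensuring the mean diverges), and the Chernoff tail bound is standard. The only subtlety is matching the multiplicative factor in the statement ($3/4$ and $3/2$) with the corresponding deviation parameter ($\delta = 1/3$), which is easily verified.
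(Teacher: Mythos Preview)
Your proposal is correct and follows essentially the same approach as the paper: both recognize $k_n \sim \text{Bin}(n_1 n_2, q_n)$, apply a Chernoff bound, and use $q_n=\om{1/n^2}$ to ensure $\E{k_n}\to\infty$ so the tail probability drops below $0.1$. The only cosmetic difference is that the paper applies the additive form with $\varepsilon = 3\sqrt{\E{k_n}}$ and then imposes $\E{k_n}>81$ to recover the multiplicative bound, whereas you invoke the multiplicative Chernoff bound with $\delta=1/3$ directly.
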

\begin{proof}
  The random variable $k_n$ is binomial $B(n_1n_2,q_n)$, where
  $n_1 = \lfloor (n+1)/2 \rfloor$, and $n_2 = \lfloor n/2 \rfloor$. We have $\E{k_n} = n_1n_2
  q_n$. We apply a Chernoff's bound on $k_n$ to get
  \begin{equation}
    \prob{|k_n - \E{k_n}| > \varepsilon} < 2 e^{-\varepsilon^2/(3\E{k_n})}.
  \end{equation}
  Using $\varepsilon = 3\sqrt{\E{k_n}}$, we get
  \begin{equation}
    \prob{|k_n - \E{k_n}| > 3 \sqrt{\E{k_n}}} < 2 e^{-3} < 0.1
  \end{equation}
  or 
  \begin{equation}
    1 - 3\frac{1}{\sqrt{\E{k_n}}} < \frac{k_n}{\E{k_n}} <  1 + 3\frac{1}{\sqrt{\E{k_n}}} \quad \text{with probability > 0.9}.
  \end{equation}
  Now, $\E{k_n} = q_n n_1n_2 = q_n\O{n^2} = \om{1}$, and thus $\lim_{n \rightarrow \infty} \E{k_n} =
  \infty$. Consequently  $\exists n_0$ such that 
  \begin{equation}
    \forall n \ge n_0, \quad \E{k_n} > 81,
  \end{equation}
  and thus 
  \begin{equation}
    \forall n \ge n_0, \quad \frac{3}{4} < \frac{\E{k_n}}{k_n} <  \frac{3}{2},\quad \text{with probability > 0.9}.
  \end{equation}
\end{proof}
We now translate our understanding of the \ER random graph to the analysis of the stochastic
blockmodel.  As explained in lemma \ref{n1_eq_n2}, in the following we write $1/\dbar$ when either
$1/\dbar_1$ or $1/\dbar_2$ could be used, and the error between the two terms is no larger than
$\O{1/\dbar^2}$.
\begin{lemma}[Cross-community resistance bounds]
  \label{lemma:cross-comm}
  Let $G_{n}\sim\cG(n,p_n,q_n)$ be a (balanced two-community) stochastic blockmodel with
  $p_n=\om{\log n/n}$ and $q_n=\om{1/n^2}$. Let $u$ and $v$ be vertices in the communities $C_1$ and
  $C_2$ respectively. Let $\dbar$ be the expected in-community degree of $C_1$. Let $k_n$ be the
  (random) number of cross-community edges. With high probability, the effective resistance
  $\er{uv}$ is bounded according to
  \begin{equation}
    \frac{1}{k_n} \leq \hR_{u\, v} \leq \frac{1}{k_n} + \frac{4}{\dbar} +
    \O{\frac{1}{\dbar^2}}.
    \label{lax-bound}
  \end{equation}
\end{lemma}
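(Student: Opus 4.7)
The plan is to treat the two inequalities separately and rely on the dual extremal characterizations of effective resistance: Rayleigh's monotonicity for the lower bound, and Thomson's principle (via an explicit unit flow) for the upper bound, with Corollary~\ref{concentration} applied inside each community viewed as its own \ER subgraph supplying the constants.

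\emph{Lower bound.} For $\hR_{u\, v} \geq 1/k_n$, I would contract all of $C_1$ into a single supernode $a$ and all of $C_2$ into a supernode $b$. In the quotient graph every in-community edge becomes a self-loop (inert for effective resistance), while the $k_n$ cross-community edges become $k_n$ unit resistors in parallel between $a$ and $b$, so $\hR_{a\, b} = 1/k_n$ in the quotient. Rayleigh's monotonicity principle guarantees that identifying vertices can only decrease effective resistance, giving the advertised lower bound in $G_n$.

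\emph{Upper bound via an explicit unit flow.} For the upper bound, I would produce a single unit flow $f$ from $u$ to $v$ in $G_n$ whose energy realizes the claimed bound; Thomson's principle then upper-bounds $\hR_{u\, v}$ by that energy. Enumerate the cross-community edges as $(u_i, v_i)$ for $i = 1,\ldots,k_n$ with $u_i \in C_1$ and $v_i \in C_2$. Let $\theta_i^{(1)}$ be the energy-minimizing unit flow from $u$ to $u_i$ inside the induced subgraph on $C_1$, and $\theta_i^{(2)}$ the energy-minimizing unit flow from $v_i$ to $v$ inside the induced subgraph on $C_2$. Assemble $f$ by routing $\frac{1}{k_n}\sum_i \theta_i^{(1)}$ through $C_1$, pushing $1/k_n$ units from $u_i$ to $v_i$ along each cross-community edge, and routing $\frac{1}{k_n}\sum_i \theta_i^{(2)}$ through $C_2$. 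Kirchhoff's law holds at every interior vertex, and the divergence is $+1$ at $u$ and $-1$ at $v$, so $f$ is a valid unit flow.

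\emph{Energy estimate and main obstacle.} The energy of $f$ decomposes cleanly across the three edge categories because their supports are disjoint. The cross-edge contribution is exactly $k_n \cdot (1/k_n)^2 = 1/k_n$. For the $C_1$-contribution, applying the edgewise Cauchy--Schwarz inequality $(\sum_i \theta_i^{(1)}(e))^2 \leq k_n \sum_i \theta_i^{(1)}(e)^2$ and summing over edges yields the bound $\frac{1}{k_n}\sum_i \hR^{C_1}_{u\, u_i}$. Since the induced subgraph on $C_1$ is distributed as $\cG(n_1,p_n)$ with $p_n = \om{\log n_1/n_1}$, Corollary~\ref{concentration} gives $\hR^{C_1}_{u\, u_i} = 2/\dbar + \O{1/\dbar^2}$ with high probability, so the $C_1$-piece contributes at most $2/\dbar + \O{1/\dbar^2}$; the $C_2$-piece is symmetric. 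Adding the three contributions yields the advertised $\hR_{u\, v} \leq 1/k_n + 4/\dbar + \O{1/\dbar^2}$. The main hurdle I anticipate is making the Corollary~\ref{concentration} bound hold \emph{uniformly} across all $k_n$ sinks simultaneously; I would close this via a union bound over the $\O{n^2 q_n}$ endpoint pairs, enlarging the Chernoff parameter driving Lemma~\ref{lemma:HPbound1} and reusing the spectral-gap estimate~\eqref{spectral-gap} to push the per-pair failure probability below $1/(n^2 q_n)$, which is affordable under the hypotheses $p_n = \om{\log n/n}$ and $q_n = \om{1/n^2}$.
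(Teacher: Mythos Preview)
Your proposal is correct and follows essentially the same route as the paper: the lower bound via vertex contraction and Rayleigh monotonicity is equivalent to the paper's single-cutset Nash--Williams argument, and your upper bound uses the identical flow construction (split the unit current evenly over the $k_n$ cross-edges, with energy-minimizing in-community flows on either side), with your edgewise Cauchy--Schwarz being exactly the paper's convexity-of-$\mathcal{E}$ (Jensen) step written out. The only addition is your explicit union-bound uniformity argument, which the paper leaves implicit.
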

\begin{remark}
  We recall (see lemma \ref{n1_eq_n2}) that when we write $\dbar$, in \ref{eq:theorem4}, it either
  means the expected degree of $C_1$ or $C_2$.
\end{remark}
%
\begin{remark}
  The requirement that $p_n=\omega(\log(n)/n)$ guarantees that we are in a regime where resistances
  in the \ER graph converge to $2/\dbar$. The requirement that $q_n=\omega(1/n^2)$ guarantees that
  $\E{k_n}\rightarrow\infty$, and because of lemma \ref{k_n_concentrates}, $k_n\rightarrow\infty$
  with high probability. Finally, $q_n = o(p_n/n)$ guarantees that $\E{k_n}= \o{\dbar}$, and using
  lemma \ref{k_n_concentrates} we have $k_n= \o{\dbar}$  with high probability.
\end{remark}
\begin{proof}[Proof of lemma \ref{lemma:cross-comm}]
  \noindent Without loss of generality, we assume that $u \in C_1$, and $v \in C_2$ (see
  Fig. \ref{nash-williams-fig}). To obtain the lower bound on $\er{uv}$ we use the Nash-Williams
  inequality \cite{lyons11}, which we briefly recall here. Let $u$ and $v$ be two distinct
  vertices. A set of edges $E_c$ is an {\em edge-cutset} separating $u$ and $u$ if every path from
  $u$ to $v$ includes an edge in $E_c$.
  \begin{lemma}[Nash-Williams, \cite{lyons11}]
    If $u$ and $v$ are  separated by $K$ disjoint edge-cutsets $E_k, k =1,\ldots, K$, then
    \begin{equation}
      \sum_{k=1}^K \left [\sum_{(v_{n},v_{m}) \in E_k} R^{-1}_{n,m} \right]^{-1}
      \leq \er{uv},
      \mspace{32mu} \text{where  $(v_{n},v_{m})$ is an edge in the cutset $E_k$}.
    \end{equation}
  \end{lemma}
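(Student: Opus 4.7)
The plan is to derive the bound from Thomson's (energy-minimization) principle together with the Cauchy--Schwarz inequality applied on each cutset, then sum over the disjoint cutsets. Concretely, recall Thomson's principle: for unit resistances $R_{n,m}$ on each edge,
\begin{equation*}
\er{uv} \;=\; \min_{\theta} \sum_{(v_n,v_m)\in E} R_{n,m}\,\theta_{n,m}^{2},
\end{equation*}
where the minimum is taken over all unit flows $\theta$ from $u$ to $v$ (i.e., antisymmetric edge functions satisfying Kirchhoff's node law at every internal vertex, with net outflow $1$ at $u$ and net inflow $1$ at $v$). I would quote this variational characterization as the starting point (it can be found in Lyons and Peres, or derived from the orthogonality between cycle and cut spaces).

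The first substantive step is a \emph{flow-through-a-cutset} identity: for any unit flow $\theta$ from $u$ to $v$ and any edge-cutset $E_k$ separating $u$ from $v$, the signed flow across the cutset equals $1$, and hence $\sum_{(v_n,v_m)\in E_k}|\theta_{n,m}|\geq 1$. I would prove this by observing that removing the edges of $E_k$ disconnects $u$ from $v$; restricting the flow to the side of the cut containing $u$ and summing Kirchhoff's node law telescopes every internal edge and leaves exactly the net flow leaving that side through $E_k$, which must equal $1$.

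Next, I would apply Cauchy--Schwarz on each cutset. Writing $|\theta_{n,m}| = \bigl(\sqrt{R_{n,m}}\,|\theta_{n,m}|\bigr)\cdot\bigl(1/\sqrt{R_{n,m}}\bigr)$ and squaring the cutset inequality yields
\begin{equation*}
1 \;\leq\; \Bigl(\sum_{(v_n,v_m)\in E_k}|\theta_{n,m}|\Bigr)^{2}
\;\leq\; \Bigl(\sum_{(v_n,v_m)\in E_k} R_{n,m}\theta_{n,m}^{2}\Bigr)\Bigl(\sum_{(v_n,v_m)\in E_k} R_{n,m}^{-1}\Bigr),
\end{equation*}
so that the local energy on $E_k$ is at least $\bigl[\sum_{(v_n,v_m)\in E_k}R_{n,m}^{-1}\bigr]^{-1}$.

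Finally, because the cutsets $E_1,\ldots,E_K$ are \emph{pairwise disjoint}, their energies add without overcounting and we can bound the total energy from below by summing the per-cutset bounds:
\begin{equation*}
\sum_{(v_n,v_m)\in E} R_{n,m}\theta_{n,m}^{2}
\;\geq\; \sum_{k=1}^{K}\sum_{(v_n,v_m)\in E_k} R_{n,m}\theta_{n,m}^{2}
\;\geq\; \sum_{k=1}^{K}\Bigl[\sum_{(v_n,v_m)\in E_k} R_{n,m}^{-1}\Bigr]^{-1}.
\end{equation*}
Taking the infimum over unit flows $\theta$ and invoking Thomson's principle gives the stated inequality. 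The main obstacle is the bookkeeping in the flow-through-a-cutset step---making the signs and the orientation of ``crossing'' edges precise so that $\sum_{E_k}|\theta_{n,m}|\geq 1$ follows rigorously from Kirchhoff's law; the disjointness of the cutsets is then what legitimately allows the per-cutset Cauchy--Schwarz bounds to be added.
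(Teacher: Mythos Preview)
Your argument is correct and is exactly the standard proof of the Nash--Williams inequality (Thomson's principle plus Cauchy--Schwarz on each cutset, then summing over disjoint cutsets). However, there is nothing to compare against: the paper does not prove this lemma at all. It is quoted as a known result from \cite{lyons11} and used as a black box to obtain the lower bound $\er{uv}\ge 1/k_n$ for cross-community pairs. So you have supplied a proof where the paper simply cites one; your route via Thomson and Cauchy--Schwarz is the same one found in the cited reference.
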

  \begin{figure}[H]
    \centering{
      \includegraphics[width=0.3\textwidth]{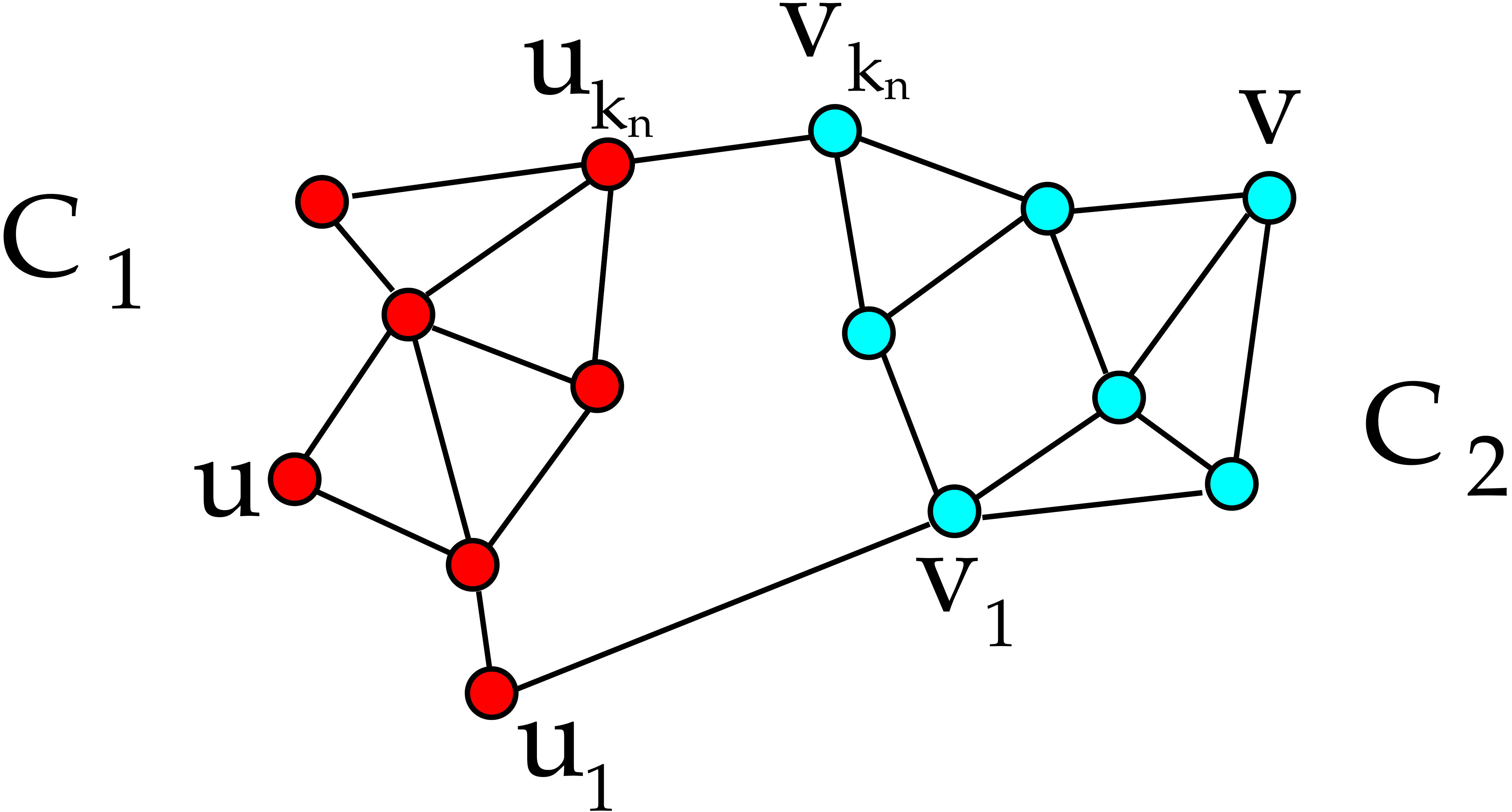}
    }
    \caption{Balanced, two   community stochastic blockmodel $G_n^{(k+1)}$ of size $n$ with $k_n$
      cross-community edges. The vertices $u$ and $v$ are in different communities, $u \in C_1$ and $v\in C_2$.
      \label{nash-williams-fig}}
  \end{figure}
  \noindent Since the set of cross-community edges is a cutset for all pairs of vertices $u$ and $v$
  in separate communities, and since the size of this set is precisely $k_n$, we immediately obtain
  the desired lower bound.

  The upper bound is obtained using the characterization of the effective resistance based on
  Thomson principle \cite{levin09}, which we recall briefly in the following. Let $f$ be a flow
  along the edges $E$ from $u$ to $v$, and let
  \begin{equation}
    {\cal E}(f) = \sum_{e\in E} f^2(e) R_e,
  \end{equation}
  be the energy of the flow $f$, where each undirected edge $e$ in the sum is only counted once. A unit flow has
  strength one,
  \begin{equation}
    \dv(f)(u) = - \dv (f)(v) = 1.
  \end{equation}
  Thomson's principle provides the following characterization of the effective resistance $\hR_{u\, w}$,
  \begin{equation}
    \er{uv}  = \min{} \left\{ {\cal E}(f), f \, \text{is a unit flow from $u$ to $v$} \right\}.
  \end{equation}
  We use Thomson's principle in the following way: we construct a unit flow $f$ from $u$ to $v$. For
  this flow, the energy ${\cal E}(f)$ yields an upper bound on $\hR_{u\, w}$.\\

  \noindent First, consider the case where neither $u$ nor $v$ are incident with any of the $k_n$
  cross-community edges, $e_i = (u_i, v_i), i = 1,\ldots,k_n$; where $u_i \in C_1$ and $v_i \in
  C_2$. Denote by $f^u_i$ the unit flow associated with the effective resistance between $u$ and
  $u_i$ when only the edges in $C_1$ are considered. Similarly define $f^v_i$ to be the unit flow
  associated with the effective resistance between $v$ and $v_i$ when when only the edges in $C_2$
  are considered. Using the corollary \ref{concentration}, given by (\ref{reff-dmean}),
  we have with high probability,
  \begin{equation}
    \mathcal{E}(f^u_i) = \frac{2}{\dbar} + \O{\frac{1}{\dbar^2}},
    \quad \text{and} \quad
    \mathcal{E}(f^v_i) = \frac{2}{\dbar} + \O{\frac{1}{\dbar^2}}.
  \end{equation}
  We note that the expression of $\mathcal{E}(f^v_i)$ should involve the expected degree in $C_2$.
  As explained in lemma \ref{n1_eq_n2}, we can use $\dbar$ since the difference between the two
  terms is absorbed in the $\O{\frac{1}{\dbar^2}}$ term. Finally, let $f^e_i$ be the flow that is 1
  on edge $e_i$ and 0 elsewhere. To conclude, we assemble the three flows and define
  \begin{equation*}
    f(e) = \frac{1}{k_n} \sum_{i=1}^k \left\{ f^u_i(e)+ f^e_i(e)+ f^v_i(e)\right\},
  \end{equation*}
  which is a unit flow from $u$ to $v$. Since $\mathcal{E}$ is a convex function, we can bound
  the energy of $f$ via
  \begin{align*}
    \mathcal{E}(f) 
    &= \mathcal{E}\left(\frac{1}{k_n} \sum_{i=1}^k \left\{ f^u_i(e)+ f^e_i(e)+ f^v_i(e)\right\}\right) 
      \leq 
      \mathcal{E}\left(\frac{1}{k_n} \sum_{i=1}^k f^u_i(e)\right) +
      \mathcal{E}\left(\frac{1}{k_n} \sum_{i=1}^k f^e_i(e)\right) + 
      \mathcal{E}\left(\frac{1}{k_n} \sum_{i=1}^k  f^v_i(e)\right) \\
    & \leq \frac{1}{k_n} \sum_{i=1}^k \mathcal{E}(f^u_i) + \frac{1}{k_n} + \frac{1}{k_n} \sum_{i=1}^k \mathcal{E}(f^v_i) 
      = \frac{4}{\dbar} + \frac{1}{k_n} + \O{\frac{1}{\dbar^2}}.
  \end{align*}
  The final line holds with high probability. Note that we calculate the energy of the flow in the
  center term directly, whereas convexity is used to estimate the energy in the first and third
  term.

  This upper bound also holds when either $u$ or $v$ is incident with any of the cross-community
  edges. In this case, $u=u_i$ for some $i$. For this $i$, we can formally define the flow $f^u_i$
  between $u$ and $u_i$ to be the zero flow, which minimizes the energy trivially and has energy
  equal to the resistance between $u$ and $u_i$ (which is zero). Then the above calculation yields a
  smaller upper bound for the first and third terms.
\end{proof}
\begin{remark}
  Lemma \ref{lemma:cross-comm} provides a first attempt at analysing the perturbation of the
  effective resistance under the addition of edges in the stochastic blockmodel. The upper bound
  provided by (\ref{lax-bound}) is too loose to be useful, and we therefore resort to a different
  technique to get a tighter bound. The idea is to observe that the effective resistance is
  controlled by the bottleneck formed by the cross-community edges. We can get very tight estimates
  of the fluctuations in the effective resistance using a detailed analysis of the addition of a
  single cross-community edge. We use the Sherman--Morrison--Woodburry theorem \cite{golan12} to
  compute a rank-one perturbation of the pseudo-inverse of the normalized graph Laplacian
  \cite{monnig16}, $\bL^\dagger$. The authors in \cite{ranjan14} provide us with the exact
  expression that is needed for our work, see \eqref{eq:update}. The proof proceeds by induction on
  the number of cross-community edges, $k_n$.
\end{remark}
\begin{Theorem}
  \label{thm:cross-comm}
  Let $G_{n}\sim\mathcal{G}(n,p_n,q_n)$ be a balanced, two community stochastic blockmodel with
  $p_n=\omega(\log n/n)$, $q_n=\omega(1/n^2)$, and $q_n = o(p_n/n)$. We assume that
  $G_n$ is connected. The effective resistance between two vertices $u$ and $v$ is given by
  \begin{equation}
    \hR_{u\, v} =
    \displaystyle  \frac{2}{\dbar} + 
    \begin{cases}
      \displaystyle  \O{\frac{1}{\dbar^2}}, & \text{if $u$ and  $v$ are in the same community,} \\
      \displaystyle \frac{1}{k_n} + \frac{\alpha(k_n,u,v)}{\dbar k_n} + \O{\frac{1}{\dbar^2}}, & \text{otherwise.} 
    \end{cases}
    \label{eq:theorem4}
  \end{equation}
  Also, conditioned on $k_n =k$ the random variable  $\alpha(k_n,u,v)$  is a deterministic function of
  $k$, and we have $\alpha(k_n,u,v) = \O{k_n}$.
\end{Theorem}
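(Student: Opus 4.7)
I would proceed by induction on $k$, the number of cross-community edges, as indicated by the remark preceding the statement. Let $\Gamma^{(k)}$ denote the graph obtained from the two in-community \ER subgraphs on $C_1$ and $C_2$ by adjoining the first $k$ cross-community edges $(u_1,v_1),\dots,(u_k,v_k)$, and let $\hR^{(k)}_{u\,v}$ be the corresponding effective resistance. At the base case $k=1$ the graph is a series combination of the two communities through a single edge; Corollary \ref{concentration} applied inside each community gives $\hR^{C_1}_{u,u_1} = \hR^{C_2}_{v_1,v} = 2/\dbar + \O{1/\dbar^2}$ w.h.p., whence for a cross-community pair $(u,v)$,
\[
  \hR^{(1)}_{u\,v} \;=\; \hR^{C_1}_{u,u_1}+1+\hR^{C_2}_{v_1,v} \;=\; 1 + \frac{4}{\dbar} + \O{\frac{1}{\dbar^2}},
\]
matching the target expansion with $\alpha=2$. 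The same-community claim at $k=1$ follows from the same corollary, since a lone bottleneck can perturb an intra-community resistance by at most $\O{1/\dbar^2}$.

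For the inductive step I would invoke the rank-one Sherman--Morrison--Woodbury update for the Laplacian pseudo-inverse, in the explicit form given in \cite{ranjan14}: adding the edge $(u_{k+1},v_{k+1})$ yields
\[
  \hR^{(k+1)}_{u\,v} \;=\; \hR^{(k)}_{u\,v} \;-\; \frac{\gamma_{u,v,u_{k+1},v_{k+1}}^{2}}{1+\hR^{(k)}_{u_{k+1}\,v_{k+1}}},\quad \gamma_{u,v,a,b} \;=\; \tfrac{1}{2}\bigl(\hR_{u,b}+\hR_{v,a}-\hR_{u,a}-\hR_{v,b}\bigr).
\]
Plugging the inductive expansion into $\gamma$ for a cross-community pair $(u,v)$, the four resistances split into two cross- and two same-community terms, the leading $2/\dbar$ contributions cancel, and one obtains $\gamma = 1/k + \O{1/(\dbar k)} + \O{1/\dbar^2}$. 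Combining with $1+\hR^{(k)}_{u_{k+1}\,v_{k+1}} = (k+1)/k + \O{1/\dbar}$, the update reduces to $1/(k(k+1))$ at leading order, and the telescoping identity $1/k - 1/(k(k+1)) = 1/(k+1)$ propagates the claimed $1/k_n$ term. Tracking the $\alpha$-coefficient through the update produces a clean deterministic recursion
\[
  \alpha^{(k+1)} \;=\; \frac{k}{k+1}\,\alpha^{(k)} + \frac{2}{k+1},
\]
whose fixed point is $\alpha\equiv 2$; this explains why $\alpha(k,u,v)$ is to leading order a deterministic function of $k$ alone and trivially satisfies $\alpha=\O{k}$. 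The same-community case is parallel but easier: the cancellation in $\gamma$ is two-sided, forcing $\gamma^2=\O{1/\dbar^2}$, so each update contributes only to the remainder.

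The main obstacle in this plan is controlling the accumulation of errors across $k_n$ successive rank-one updates. A naive induction that inserts an $\O{1/\dbar^{2}}$ remainder at every step would produce a cumulative error of $\O{k_n/\dbar^{2}}$, which is not automatically absorbable into the stated $\O{1/\dbar^2}$ bound. To close the argument one must exploit the hypothesis $q_n=\o{p_n/n}$, which via Lemma \ref{k_n_concentrates} gives $k_n=\o{\dbar}$ w.h.p., and either (i) carry the $k$-dependence of the error through the induction --- the contractive factor $k/(k+1)$ in the $\alpha$-recursion is the structural feature that makes this plausible --- or (ii) apply Sherman--Morrison--Woodbury as a single rank-$k_n$ update and bound the resulting $k_n\times k_n$ Schur complement directly. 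A secondary subtlety is formally justifying that the leading-order $\alpha$ does not depend on the random locations of the endpoints $u_j,v_j$, which is exactly what the constant fixed point $\alpha\equiv 2$ of the recursion above encodes.
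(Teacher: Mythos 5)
Your overall plan is the same as the paper's: induction on $k$ using the rank-one resistance-update identity of \cite{ranjan14}, a series-circuit base case at $k=1$ driven by Corollary~\ref{concentration}, and extraction of the $1/(k+1)$ leading term from the telescoping $1/k - 1/(k(k+1)) = 1/(k+1)$. Two points of divergence are worth noting. First, the $\alpha$-recursion you record is a simplification: plugging the inductive expansion into the update actually couples $\alpha(k,u,v)$ to the $\alpha$-coefficients of three other pairs, namely $(u_{k+1},v)$, $(u,v_{k+1})$ and $(u_{k+1},v_{k+1})$, giving the paper's \eqref{eq:update-alpha}
\begin{equation*}
\alpha(k+1,u,v) = \alpha(k,u,v) + \frac{\alpha(k,u,v)-\alpha(k,u_{k+1},v)-\alpha(k,u,v_{k+1})}{k} + \frac{2}{k+1} + \frac{\alpha(k,u_{k+1},v_{k+1})}{k(k+1)}.
\end{equation*}
Your decoupled map $\alpha^{(k+1)}=\frac{k}{k+1}\alpha^{(k)}+\frac{2}{k+1}$ is what you get by pretending all four $\alpha$'s coincide; it has the right fixed point $2$, but the actual coefficient in the paper depends on whether $u$ or $v$ is an endpoint of a cross-edge (already in the base case $\alpha(1,u,v)\in\{0,1,2\}$), so ``deterministic function of $k$'' in the theorem means non-random given $k$ and the incidence pattern, not constant. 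Also, the Taylor inversion of the denominator only converges for $k\ge 2$, so $k=1$ must be treated separately, as the paper does. Second, the ``main obstacle'' you identify --- the compounding of the $\O{1/\dbar^2}$ remainder over $k_n$ rank-one updates --- is a legitimate concern and, notably, the paper's own write-up writes $\O{1/\dbar^2}$ at every inductive step without tracking how the implied constant depends on $k$; your proposed mitigations (carrying the $k$-dependence explicitly, or a single block Schur-complement update of rank $k_n$) are both sensible ways to close this.
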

\begin{proof}
  First, observe that Lemma \ref{lemma:cross-comm} immediately implies that
  \begin{equation}
    -2k_n\leq \alpha(k_n,u,v) \leq 2k_n,\quad \forall u,v,
    \label{eq:bounded-k}
  \end{equation}
  with high probability, so $\alpha=\mathcal{O}(k_n)$.

  Next, let us show that the in-community resistances follow the prescribed form. The proof proceeds
  as follows: we derive the expression \eqref{eq:theorem4} conditioned on the random variable
  $k_n =k$, and we prove that $\alpha (k,u,v)$ is indeed a deterministic function in this case; the
  derivation of \eqref{eq:theorem4} is obtained by induction on $k$.

  The engine of our induction is the update formula (equation 11) in \cite{ranjan14}. This provides an
  exact formula (equation \eqref{eq:update} below) for the change in resistance between any pair of
  vertices in a graph when a single edge is added or removed. The particular motivation of the
  authors in \cite{ranjan14} is to calculate rank one updates to the pseudoinverse of the
  combinatorial graph Laplacian; however, it conspires that their formula is also very useful to
  inductively calculate resistances in the stochastic blockmodel.

  We first consider the base case, where $G_n$ is a balanced, two community stochastic blockmodel of
  size $n$ with $k_n =1 $ cross-community edge. Denote this edge by $e_1=(u_1,v_1)$, where $u_1\in C_1$
  and $v_1\in C_2$. We will refer to this graph as $G_n^{(1)}$.
  
  The addition of a single edge connecting otherwise disconnected components does not change the
  resistance within those components, as it does not introduce any new paths between two vertices
  within the same component. Because each community is an \ER graph with parameters $p_n$ and
  respective sizes $n_1 = \lfloor (n+1)/2 \rfloor$ and $n_2 = \lfloor n/2 \rfloor$, corollary
  \ref{concentration} provides the expression for the effective resistance between two vertices
  within each community.  A simple circuit argument allows us to obtain the resistance between $u$
  and $v$ in separate communities via
  \begin{equation}
    \label{eq:1001}
    \hR_{u\, v} = \hR_{u\, u_1} + \hR_{u_1,v_1} + \hR_{v_1,v}.
  \end{equation}
  If $u\neq u_1$ and $v \neq v_1$, then we combine Nash-Williams and corollary
  \ref{concentration} to get
  \begin{equation*}
    \hR_{u\, v} = \frac{1}{k}+\frac{4}{\dbar} + \mathcal{O}\left(\frac{1}{\dbar^2}\right).
  \end{equation*}
  If $u=u_1$ and/or $v=v_1$ then the appropriate resistances are  set to zero in
  \eqref{eq:1001}. \\

  \noindent In summary, for arbitrary pairs $(u,v)$ in $G^{(1)}_n$, we have
  \begin{equation*}
    \hR_{u\,v} = \frac{1}{k}+\frac{2}{\dbar}+\frac{\alpha(u,v)}{k\dbar} +
    \mathcal{O}\left(\frac{1}{\dbar^2}\right), 
  \end{equation*}
  where 
  \begin{equation}
    \alpha(u,v) =
    \begin{cases}
      0 & \text{if $u=u_1$ and $v=v_1$},\\
      1 & \text{if $u=u_1$ and $v\neq v_1$, or $u \neq u_1$ and $v=v_1$},\\
      2 & \text{if $u\neq u_1$ and $v\neq v_1$.}
    \end{cases}
  \end{equation}
  This establishes the base case for \eqref{eq:theorem4}.\\

  We now assume that \eqref{eq:theorem4} holds for any balanced, two community stochastic
  blockmodel of size $n$ with $k_n = k$ cross-community edges.  We consider a balanced, two
  community stochastic blockmodel $G_n^{(k+1)}$ of size $n$ with $k_n = k+1$ cross-community
  edges. We denote the cross-community edges by $e_i=(u_i,v_i), i = 1,\ldots, k+1$, where
  $u_i\in C_1$ and $v_i\in C_2$ (see Fig. \ref{same-community}). 

  Finally, we denote by $G_n^{(k)}$ the balanced, two  community stochastic blockmodel with $k$ 
  cross-community edges obtained by removing the edge $e_{k+1}=(u_{k+1},v_{k+1})$ from
  $G_n^{(k+1)}$.

  Let $\hR$ denote the effective resistances in $G_n^{(k)}$ and $\hR'$ denote the
  effective resistances in $G_n^{(k+1)}$. 

  Since $G_n^{(k)}$ is obtained by removing an edge
  from $G_n^{(k+1)}$, we can apply equation (11) in \cite{ranjan14} to express $\hR'$ from
  $\hR$,
  
  \begin{equation}
    \hR'_{u\, v} = \hR_{u\, v} - 
    \frac{
      \left[
        (\hR_{u\, v_{k+1}}-\hR_{u\, u_{k+1}}) - (\hR_{v\, v_{k+1}} - \hR_{v\, u_{k+1}})
      \right]^2}
    {4(1+\hR_{u_{k+1}\,v_{k+1}})}. 
    \label{eq:update}
  \end{equation}
  \begin{figure}[H]
    \centering{
      \includegraphics[width=0.3\textwidth]{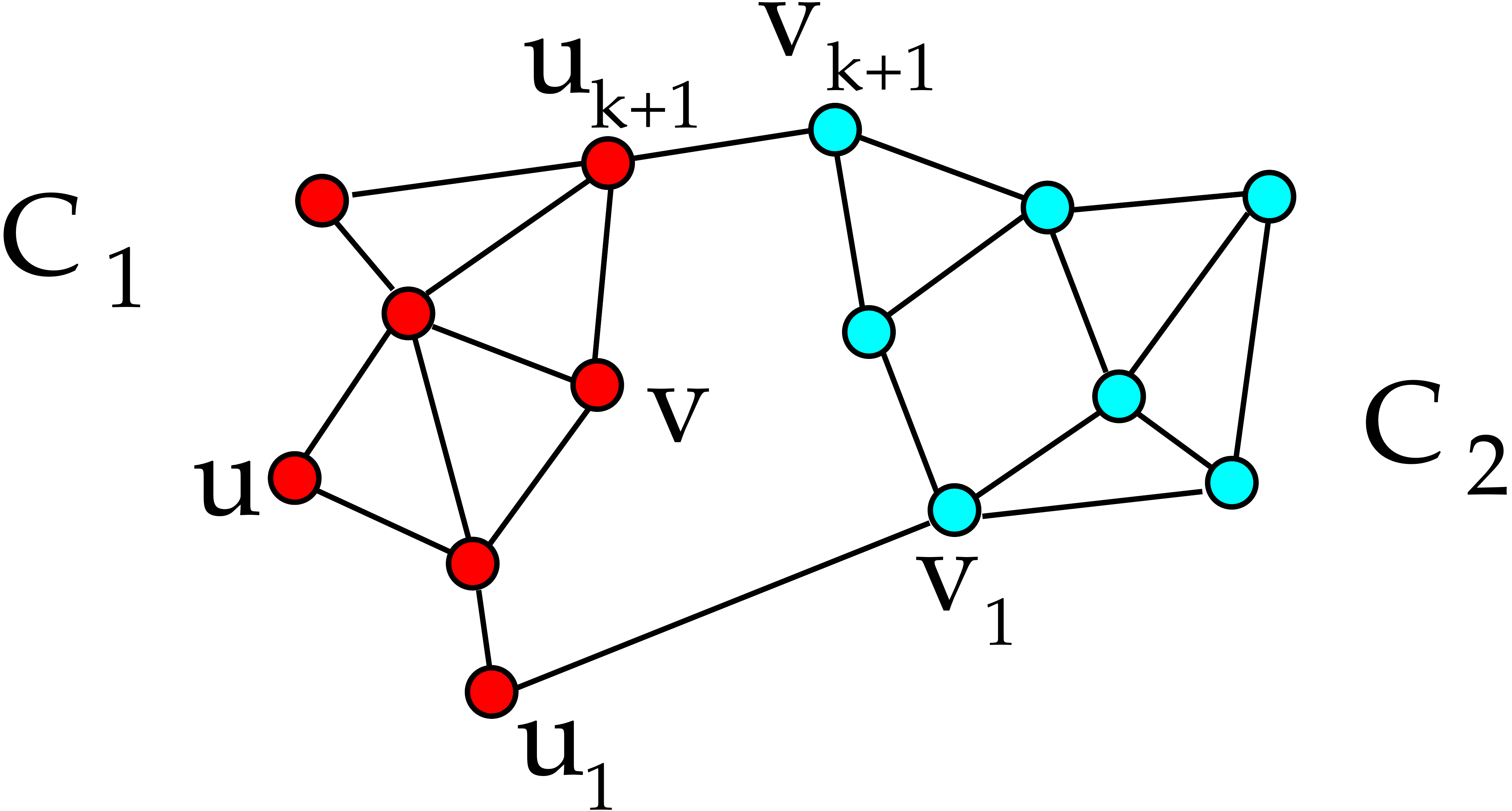}
    }
    \caption{Balanced, two   community stochastic blockmodel $G_n^{(k+1)}$ of size $n$ with $k_n = k+1$
      cross-community edges. Vertices $u$ and $v$ are in the same community $C_1$
      \label{same-community}}
  \end{figure}
  \noindent In the following we use the induction hypothesis to compute $\hR'$ using
  \eqref{eq:update}. We first consider the case where the vertices $u$ and $v$ belong to the same
  community, say $C_1$ without loss of generality (see Fig. \ref{same-community}).

  We need to consider the following three possible scenarios: 
  \begin{enumerate}
  \item $u \neq u_{k+1}$ and $v \neq v_{k+1}$,
  \item $u = u_{k+1}$ and $v \neq v_{k+1}$,
  \item $u \neq u_{k+1}$ and $v = v_{k+1}$.
  \end{enumerate}
  We will treat the first case; the last two cases are in fact equivalent, and are straightforward
  consequences of the analysis done in the first case. From the induction hypothesis we have
  \begin{align}
    \hR_{u\, v_{k+1}} & = \frac{1}{k} + \frac{2}{\dbar} + \frac{\alpha(k, u,v_{k+1})}{k\dbar} +\O{1/\dbar^2},\\
    \hR_{u\, u_{k+1}} & =  \frac{2}{\dbar} + \O{1/\dbar^2},\\
    \hR_{v\, v_{k+1}}& =  \frac{1}{k} + \frac{2}{\dbar} + \frac{\alpha(k, v,v_{k+1})}{k\dbar} +\O{1/\dbar^2},\\
    \hR_{v\, u_{k+1}} & =  \frac{2}{\dbar} + \O{1/\dbar^2},\\
    \hR_{u_{k+1}\,v_{k+1}} & = \frac{1}{k} + \frac{2}{\dbar} + \frac{\alpha(k, u_{k+1}\,v_{k+1})}{k\dbar} +\O{1/\dbar^2}.
  \end{align}
  Substituting these expression into \eqref{eq:update}, we get
  \begin{align}
    \hR'_{u\, v} & = \hR_{u\, v} - \frac{
                 \displaystyle  \left[ 
                 \left(\frac{1}{k} + \frac{\alpha(k, u,v_{k+1})}{k\dbar}\right) - 
                 \left(\frac{1}{k} + \frac{\alpha(k, v,v_{k+1})}{k\dbar}\right) +\O{1/\dbar^2}
                 \right]^2}
                 {\displaystyle 4(1+ \frac{1}{k} + \frac{2}{\dbar} + \frac{\alpha(k, u_{k+1}\,v_{k+1})}{k\dbar} +\O{1/\dbar^2})}\\
               & = \hR_{u\, v} -  \O{\frac{1}{\dbar^2}} \frac{
                 \left[
                 \frac{\displaystyle  \alpha(k, u,v_{k+1})- \alpha(k, v,v_{k+1})}{k} +\O{1/\dbar}
                 \right]^2}
                 {\displaystyle 4\left(1+ \frac{1}{k} + \frac{2}{\dbar} + \frac{\alpha(k, u_{k+1}\,v_{k+1})}{k\dbar}
                 +\O{1/\dbar^2}\right)} 
  \end{align}
  Because $\alpha$ is bounded with high probability (see \eqref{eq:bounded-k}), we have
  \begin{equation}
    \frac{  \displaystyle  \alpha(k, u,v_{k+1})- \alpha(k, v,v_{k+1})}{k} +\O{1/\dbar} = \O{1} \quad \text{with high probability},
  \end{equation}
  and also
  \begin{equation}
    1+ \frac{1}{k} + \frac{2}{\dbar} + \frac{\alpha(k, u_{k+1}\,v_{k+1})}{k\dbar}  +\O{1/\dbar^2}
    = 1+ \frac{1}{k} +\O{1/\dbar} \quad \text{with high probability},
  \end{equation}
  which implies
  \begin{equation}
    \frac{\left[
        \frac{\displaystyle  \alpha(k, u,v_{k+1})- \alpha(k, v,v_{k+1})}{k} +\O{1/\dbar}
      \right]^2}
    {\displaystyle 4\left(1+ \frac{1}{k} + \frac{2}{\dbar} + \frac{\alpha(k, u_{k+1}\,v_{k+1})}{k\dbar}
        +\O{1/\dbar^2}\right)} 
    = \O{1} \quad \text{with high probability}.
  \end{equation}
  \begin{figure}[H]
    \centering{
      \includegraphics[width=0.3\textwidth]{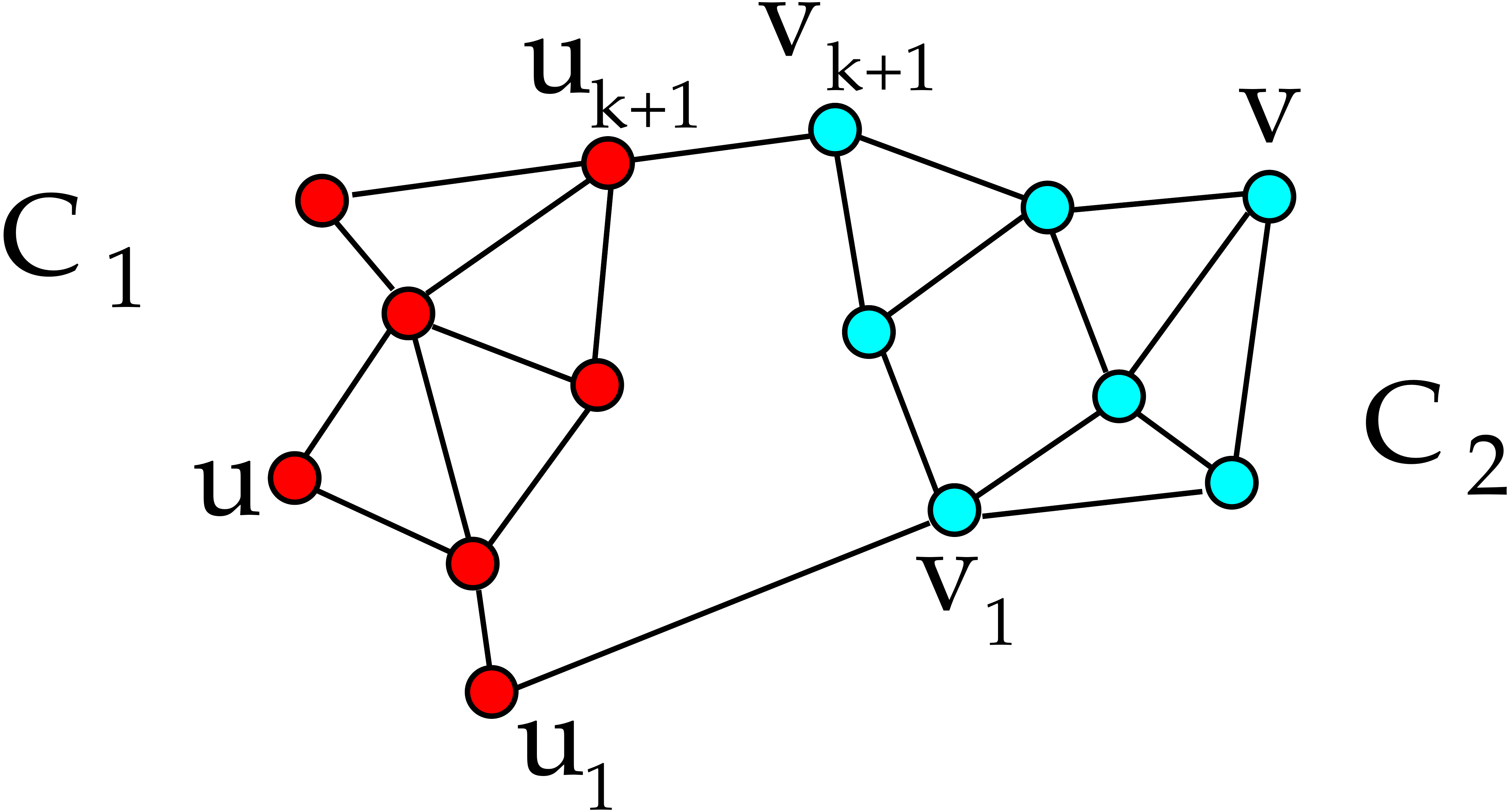}
    }
    \caption{Balanced, two   community stochastic blockmodel $G_n^{(k+1)}$ of size $n$ with $k_n = k+1$
      cross-community edges. The vertices $u$ and $v$ are in different communities, $u \in C_1$ and $v\in C_2$.
      \label{different-communities}}
  \end{figure}
  \noindent   We conclude that 
  \begin{equation}
    \hR'_{u\, v}  = \hR_{u\, v}  + \O{1/\dbar^2} \quad \text{with high probability}.
  \end{equation}
  This completes the induction, and the proof of \eqref{eq:theorem4} in the case where $u$ and $v$
  belong to the same community.\\
  
  We now consider the case where $u \in C_1$ and $v \in C_2$ (see
  Fig. \ref{different-communities}). As above, we need to consider the following three possible
  scenarios:
  \begin{enumerate}
  \item $u \neq u_{k+1}$ and $v \neq v_{k+1}$,
  \item $u = u_{k+1}$ and $v \neq v_{k+1}$,
  \item $u \neq u_{k+1}$ and $v = v_{k+1}$.
  \end{enumerate}
  Again,  we only  prove the first case; the last two equivalent cases are straightforward
  consequences of the first case. From the induction hypothesis we now have
  \begin{align}
    \hR_{u\, u_{k+1}} & =  \frac{2}{\dbar} + \O{1/\dbar^2},\\
    \hR_{v\, v_{k+1}}& =  \frac{2}{\dbar} + \O{1/\dbar^2},\\
    \hR_{u\, v_{k+1}} & = \frac{1}{k} + \frac{2}{\dbar} + \frac{\alpha(k, u,v_{k+1})}{k\dbar} +\O{1/\dbar^2},\\
    \hR_{v\, u_{k+1}} & =  \frac{1}{k} + \frac{2}{\dbar} + \frac{\alpha(k, v,v_{k+1})}{k\dbar} +\O{1/\dbar^2},\\
    \hR_{u_{k+1}\, v_{k+1}} & = \frac{1}{k} + \frac{2}{\dbar} + \frac{\alpha(k, u_{k+1}\, v_{k+1})}{k\dbar} +\O{1/\dbar^2}.
  \end{align}
  To reduce notational clutter, we use some abbreviated notation to denote the various $\alpha$
  terms associated with the vertices of interest (see Fig. \ref{thealphas}),
  \begin{align*}
    \alpha_0 &\eqdef \alpha(k_n^{(k)},u,v), \\
    \alpha_1 &\eqdef \alpha(k_n^{(k)},u_{k+1}\, v), \\
    \alpha_2 &\eqdef \alpha(k_n^{(k)},u,v_{k+1}), \\
    \alpha_3 &\eqdef \alpha(k_n^{(k)},u_{k+1}\, v_{k+1}).
  \end{align*}
  \begin{figure}[H]
    \centering{
      \includegraphics[width=0.25\textwidth]{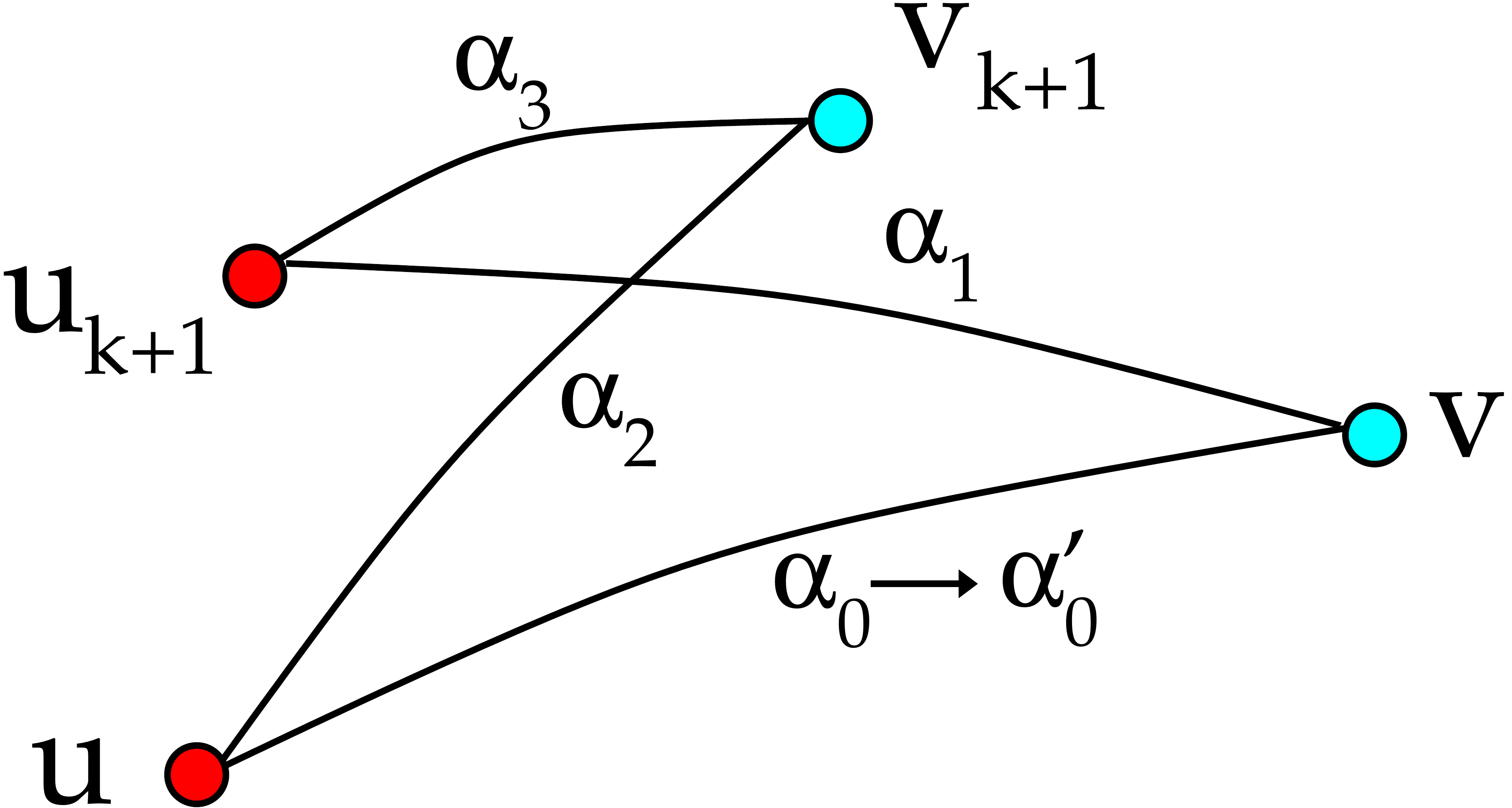}
    }
    \caption{Coefficients $\alpha_0,\ldots,\alpha_3$ in \eqref{eq:theorem4} for several pairs of vertices
      in $G_n^{(k+1)}$.
      \label{thealphas}}
  \end{figure}
  Let us denote the decrease in effective resistance by $\Delta \hR$,
  \begin{equation}
    \DhR \eqdef \hR_{u\, v} - \hR'_{u\, v}.
  \end{equation}
  From \eqref{eq:update}, we have
  \begin{equation}
    \begin{split}
      \DhR & = \frac{
        \displaystyle  \left[ 
          \left(\frac{1}{k} + \frac{\alpha_2}{k\dbar}\right) - 
          \left(-\frac{1}{k} - \frac{\alpha_1}{k\dbar}\right) +\O{1/\dbar^2}
        \right]^2}
      {4\left(\displaystyle 1+ \frac{1}{k} + \frac{2}{\dbar} + \frac{\alpha_3}{k\dbar} +\O{1/\dbar^2}\right)}
      =  \frac{
        \left[\displaystyle
          \frac{2}{k}\left( 1 + 
            \frac{\displaystyle  \alpha_1+ \alpha_2}{2\dbar} \right) +\O{1/\dbar^2}
        \right]^2}
      {4\left(\displaystyle 1+ \frac{1}{k} + \frac{2}{\dbar} + \frac{\alpha_3}{k\dbar} +\O{1/\dbar^2}\right)}\\
      & =  \frac{1}{\displaystyle k^2}
      \left[  1 +  \frac{\displaystyle  \alpha_1+ \alpha_2}{2\dbar} +\O{1/\dbar^2}
      \right]^2
      \frac{1}
      {\displaystyle 1+ \frac{1}{k} + \frac{2}{\dbar} + \frac{\alpha_3}{k\dbar} +\O{1/\dbar^2}}\\
      & =  \left(
        \frac{1}{\displaystyle k^2}   +  
        \frac{\displaystyle  \alpha_1 + \alpha_2}{\dbar k^2} +
        \frac{\displaystyle  (\alpha_1 + \alpha_2)^2}{4(\dbar k)^2} + \O{1/\dbar^2}
      \right)
      \frac{1}
      {\displaystyle 1+ \frac{1}{k} + \frac{2}{\dbar} + \frac{\alpha_3}{k\dbar} +\O{1/\dbar^2}}\\
      & =  \left(
        \frac{1}{\displaystyle k^2}   +  
        \frac{\displaystyle  \alpha_1 + \alpha_2}{\dbar k^2} + \O{1/\dbar^2}
      \right)
      \frac{1}
      {\displaystyle 1+ \frac{1}{k} + \frac{2}{\dbar} + \frac{\alpha_3}{k\dbar} +\O{1/\dbar^2}}.
    \end{split}
    \label{eq:dr1}
  \end{equation}
  
  At this juncture, we need to expand $\left(1+ 1/k + 2/\dbar +
    \alpha_3/(k\dbar) +\O{1/\dbar^2}\right)^{-1}$ using a Taylor series, which is possible when 
  $1/k + 2/\dbar +  \alpha_3/(k\dbar) < 1$. Since we are interested in  the large $n$ asymptotic, we
  can assume that for $n$ sufficiently large $2/\dbar +  \alpha_3/(k\dbar) < 1/2$, and thus we need to
  guarantee that $k\ge 2$. \\

  \noindent The case $k=1$ needs to be handled separately. Setting $k=1$ into \eqref{eq:dr1} yields
  \begin{equation}
    \DhR  =  \frac{1+ \frac{\displaystyle  \alpha_1 + \alpha_2}{\dbar} + \O{1/\dbar^2}}
    {\displaystyle 2 + \frac{2 + \alpha_3}{\dbar} +\O{1/\dbar^2}}
    = \frac{1}{2}
    \frac{1+ \frac{\displaystyle \alpha_1 + \alpha_2}{\dbar} + \O{1/\dbar^2}}
    {\displaystyle 1 + \frac{2 + \alpha_3}{2\dbar} +\O{1/\dbar^2}},
  \end{equation}
  for $n$ sufficiently large $(2 +  \alpha_3)/\dbar < 1$, thus 
  \begin{equation}
    \begin{split}
      \DhR         & = \frac{1}{2}
      \left(1+ \frac{\displaystyle \alpha_1 + \alpha_2}{\dbar} + \O{1/\dbar^2} \right)
      \left(1 - \frac{2 + \alpha_3}{2\dbar} + \O{1/\dbar^2}\right)
      = \frac{1}{2}
      \left(1 + \frac{\displaystyle 2 \alpha_1 + 2\alpha_2 - 2 - \alpha_3}{2\dbar} +
        \O{1/\dbar^2}\right) \\
      & = \frac{1}{2} + \frac{\displaystyle \alpha_1 + \alpha_2 - 1 - \alpha_3/2}{2\dbar} +
      \O{1/\dbar^2},
    \end{split}
  \end{equation}
  which leads to 
  \begin{equation}
    \begin{split}
      \hR'_{u\, v} & = \hR_{u\, v} - \DhR 
      = 1 + \frac{2}{\dbar} + \frac{\alpha_0}{\dbar} -\frac{1}{2} - \frac{\displaystyle \alpha_1 +
        \alpha_2 - 1 - \alpha_3/2}{2\dbar} +  \O{1/\dbar^2},\\
      & = \frac{1}{2} + \frac{2}{\dbar} + \frac{\displaystyle 2\alpha_0 - \alpha_1 - \alpha_2 +
        1 + \alpha_3/2}{2\dbar} + \O{1/\dbar^2}.
    \end{split}
  \end{equation}
  Let  
  \begin{equation}
    \alpha'_0 \eqdef 2\alpha_0 - \alpha_1 - \alpha_2 + 1 + \alpha_3/2,
    \label{eq:update-alpha1}
  \end{equation}
  then we have
  \begin{equation}
    \hR'_{u\, v}  = \frac{1}{2} + \frac{2}{\dbar} + \frac{\displaystyle \alpha'_0}{2\dbar} + \O{1/\dbar^2},
  \end{equation}
  which matches the expression given in \eqref{eq:theorem4} for $k=2$. This completes the induction for
  $k=1$.\\

  \noindent We now proceed to the general case where $k\ge 2$. In that case, we use a Taylor series
  expansion of $\left(1+ 1/k + 2/\dbar +  \alpha_3/(k\dbar) +\O{1/\dbar^2}\right)^{-1}$, and we get
  \begin{equation}
    \left(1+ 1/k + 2/\dbar +  \alpha_3/(k\dbar) +\O{1/\dbar^2}\right)^{-1}  = \sum_{m=0}^\infty
    (-1)^m \left(\frac{1}{k} + \frac{2}{\dbar} + \frac{\alpha_3}{k\dbar}\right)^m +\O{1/\dbar^2}.
    \label{eq:taylor1}
  \end{equation}
  Now, most of the term in $\left(\frac{1}{k} + \frac{2}{\dbar} + \frac{\alpha_3}{k\dbar}\right)^m $
  are of order $\O{1/\dbar^2}$, and we need to carefully extract the few significant terms. In the
  expansion of $\left(\frac{1}{k} + \frac{2}{\dbar} + \frac{\alpha_3}{k\dbar}\right)^m $ the only
  terms that do not contain a $1/\dbar^2$ are obtained by choosing systematically $1/k$ in each
  of the $m$ factors, or choosing $1/k$ in all but one factors and either $\frac{2}{\dbar}$ or
  $\frac{\alpha_3}{k\dbar}$ in the last factor. There are $m$ ways to construct these last two
  terms. In summary, we have  for $m \ge 1$,
  \begin{equation}
    \left(\frac{1}{k} + \frac{2}{\dbar} + \frac{\alpha_3}{k\dbar}\right)^m  = \frac{1}{k^m} + m
    \frac{\alpha_3}{k^{m-1}\dbar} + m\frac{2}{k^{m-1}\dbar} + \O{1/\dbar^2}.
    \label{eq:multinomial}
  \end{equation}
  We can substitute \eqref{eq:multinomial}into \eqref{eq:taylor1} to get
  \begin{equation}
    \begin{split}
      \left(1+ 1/k + 2/\dbar +  \alpha_3/(k\dbar) +\O{1/\dbar^2}\right)^{-1}  & = \sum_{m=0}^\infty
      (-1)^m \frac{1}{k^m} + \sum_{m=1}^\infty m \frac{\alpha_3}{k^{m-1}\dbar} + m\frac{2}{k^{m-1}\dbar} + \O{1/\dbar^2}\\
      & = \frac{1}{1+ 1/k} - \frac{\alpha_3}{\dbar} \frac{k}{(k+1)^2} - 2 \frac{k^2}{(k+1)^2} +
      \O{1/\dbar^2}\\
      & = \frac{k}{k + 1} - \frac{(\alpha_3 + 2k)k}{\dbar (k+1)^2} + \O{1/\dbar^2}.
      \label{eq:taylor2}
    \end{split}
  \end{equation}
  We can insert \eqref{eq:taylor2} into \eqref{eq:dr1} to get
  \begin{equation}
    \begin{split}
      \DhR & = \left(
        \frac{1}{\displaystyle k^2}   +  
        \frac{\displaystyle  \alpha_1 + \alpha_2}{\dbar k^2} + \O{1/\dbar^2}
      \right)
      \left(\frac{k}{k + 1} - \frac{(\alpha_3 + 2k)k}{\dbar (k+1)^2} +
        \O{1/\dbar^2}\right)\\
      & = \frac{1}{k(k+1)} + \frac{\alpha_1 + \alpha_2}{\dbar k (k+1)} 
      - \frac{(\alpha_3 + 2k)}{\dbar k(k+1)^2} + \O{1/\dbar^2},
    \end{split}
  \end{equation}
  which leads to 
  \begin{equation}
    \begin{split}
      \hR'_{u\, v} & = \hR_{u\, v} - \DhR \\
      & = \frac{1}{k}  + \frac{2}{\dbar} + \frac{\alpha_0}{\dbar k} - \frac{1}{k(k+1)} - \frac{\alpha_1 + \alpha_2}{\dbar k (k+1)} 
      + \frac{(\alpha_3 + 2k)}{\dbar k(k+1)^2} +  \O{1/\dbar^2}\\
      & = \frac{1}{k+1} + \frac{2}{\dbar} + \left\{\alpha_0\frac{k+1}{k} - \frac{\alpha_1 +\alpha_2}{k}
        + \frac{\alpha_3 + 2k}{ k(k+1)}\right\}\frac{1}{\dbar(k+1)} + \O{1/\dbar^2}.
    \end{split}
  \end{equation}
  Let  
  \begin{equation}
    \alpha'_0 \eqdef \alpha_0 + \frac{\alpha_0 -\alpha_1 -\alpha_2}{k} + \frac{2}{k+1}
    + \frac{\alpha_3}{k(k+1)},
    \label{eq:update-alpha}
  \end{equation}
  then we have
  \begin{equation}
    \hR'_{u\, v}  = \frac{1}{k+1} + \frac{2}{\dbar} + \frac{\displaystyle \alpha'_0}{\dbar(k+1)} + \O{1/\dbar^2},
  \end{equation}
  which matches the expression given in \eqref{eq:theorem4} for $k+1$. This completes the induction
  and the proof of \eqref{eq:theorem4}.

  \noindent We note that $\alpha'_0 = \alpha(k+1,u,v)$ given by \eqref{eq:update-alpha} is a deterministic
  function of $k$, since  $\alpha_0, \alpha_2, \alpha_2$ and $\alpha_3$ all are deterministic
  functions of $k$, by the inductive hypothesis.

  \noindent The cases where $u =  u_{k+1}$ or $v =  v_{k+1}$ are treated similarly; since they present
  no new difficulties they are omitted.

  This completes the proof of the  theorem.
\end{proof}
\begin{remark}
  As expected \eqref{eq:update-alpha} agrees with the update formula in the case $k=1$, given by \eqref{eq:update-alpha1}.
\end{remark} 
\noindent The update formula \eqref{eq:update-alpha} implies that the {\em random variables} $\alpha(k_n,u,v)$
and $\alpha(k_n,u,v)/k_n,$ both have bounded variation, as explained in the following corollary.
\begin{corollary}
  Let $G_{n+1}\sim\cG(n+1,p_n,q_n)$ be a stochastic blockmodel with $p_n=\om{\log n/n}$. Assume that
  $q_n = \o{p_n/n}$. Let $G_{n}$ be the subgraph of $G_{n+1}$ induced by the vertex set
  $[n]$. Let $k_n$ and $k_{n+1}$ be the number of cross-community edges in $G_{n}$ and  $G_{n+1}$
  respectively. \\

  Let $u$ and $v$ be two vertices, and let  $\hR^{(n)}_{u\, v}$ and  $\hR^{(n+1)}_{u\, v}$ be the
  effective resistances measured in $G_n$ and $G_{n+1}$ respectively.

  Let $\alpha(k_n,u,v)$ and $\alpha(k_{n+1},u,v)$ be the coefficients  in the expansion of
  $\hR^{(n)}_{u\, v}$ and $\hR^{(n+1)}_{u\, v}$ in \eqref{eq:theorem4} respectively. We have,
  \begin{align}
    \left|\alpha(k_{n+1},u,v) - \alpha(k_n,u,v)\right| & = \O{1} \quad \text{with high probability,}\\
    \\
    \displaystyle 
    \left|\frac{\alpha(k_{n+1},u,v)}{k_{n+1}} - \frac{\alpha(k_n,u,v)}{k_n} \right| & = \O{\frac{1}{k_n}} \quad \text{with high probability.}\\
  \end{align}
  \label{bounded-alpha-kn}
\end{corollary}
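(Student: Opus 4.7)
The plan is to combine the single-edge update formula \eqref{eq:update-alpha} from Theorem \ref{thm:cross-comm} with a simple concentration bound on the number of cross-community edges added when the graph grows from $G_n$ to $G_{n+1}$. Since $\alpha(k, u, v)$ is a deterministic function of $k$ alone, the change between $G_n$ and $G_{n+1}$ is driven entirely by $\Delta k := k_{n+1} - k_n$, the number of cross-community edges incident to the newly added vertex $n+1$; in particular, adding in-community edges alongside the new vertex does not affect $\alpha$.

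For the per-edge step, I would rewrite \eqref{eq:update-alpha} as
\begin{equation*}
\alpha(k+1, u, v) - \alpha(k, u, v) = \frac{\alpha(k,u,v) - \alpha(k, u_{k+1}, v) - \alpha(k, u, v_{k+1})}{k} + \frac{2}{k+1} + \frac{\alpha(k, u_{k+1}, v_{k+1})}{k(k+1)},
\end{equation*}
and apply the a priori bound $|\alpha(k, \cdot, \cdot)| \leq 2k$ from \eqref{eq:bounded-k} to each term on the right, concluding that every term is $\O{1}$ with high probability. A single cross-community edge insertion thus perturbs $\alpha(k, u, v)$ by at most a uniform constant. Because the updated value still satisfies the $\O{k+1}$ bound, the per-edge estimate propagates through successive insertions, and telescoping over $\Delta k$ edges yields $|\alpha(k_{n+1}, u, v) - \alpha(k_n, u, v)| = \O{\Delta k}$ w.h.p.

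It remains to bound $\Delta k$, which is distributed as $\text{Binomial}(n_j, q_n)$ for some $n_j \leq (n+1)/2$, so $\E{\Delta k} = \O{n q_n} = \o{p_n} \leq \o{1}$ by the hypothesis $q_n = \o{p_n/n}$ combined with $p_n \leq 1$. Markov's inequality then gives $\Delta k = \O{1}$ with high probability, establishing the first claim. For the ratio statement I would use the algebraic decomposition
\begin{equation*}
\frac{\alpha(k_{n+1}, u, v)}{k_{n+1}} - \frac{\alpha(k_n, u, v)}{k_n} = \frac{\alpha(k_{n+1}, u, v) - \alpha(k_n, u, v)}{k_{n+1}} + \alpha(k_n, u, v)\cdot\frac{k_n - k_{n+1}}{k_n k_{n+1}},
\end{equation*}
bound the first term by $\O{1/k_{n+1}} = \O{1/k_n}$ using the first claim together with Lemma \ref{k_n_concentrates}, and bound the second by $\O{k_n}\cdot\O{\Delta k}/k_n^2 = \O{1/k_n}$ w.h.p. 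The main obstacle—really the only non-routine ingredient—is the bookkeeping in the per-edge step: verifying that the four $\alpha$ values entering \eqref{eq:update-alpha} all retain the $\O{k}$ bound through each iteration, so that the uniform $\O{1}$ estimate chains cleanly along the telescoping sum.
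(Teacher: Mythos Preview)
Your proposal is correct and follows essentially the same route as the paper: the paper isolates the per-edge bounds into a separate Proposition~\ref{bounded-alpha-k} (obtaining $|\alpha(k+1,u,v)-\alpha(k,u,v)|\le 8$ and $|\alpha(k+1,u,v)/(k+1)-\alpha(k,u,v)/k|\le 6/k$ from \eqref{eq:update-alpha} and \eqref{eq:bounded-k}), then bounds $\Delta k$ by a constant via Chernoff and telescopes. Your use of Markov in place of Chernoff is fine since $\E{\Delta k}=\o{1}$, and your endpoint decomposition for the ratio is a harmless reorganization of the same telescoping argument; note also that $1/k_{n+1}\le 1/k_n$ trivially since $k_{n+1}\ge k_n$, so Lemma~\ref{k_n_concentrates} is not actually needed there.
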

\noindent The proof of the corollary is a consequence of the following proposition which shows that,
conditioned on $k_n = k$, the {\em functions} $\alpha(k,u,v)$ and $\alpha(k,u,v)/k$, defined for
$k \ge 1$, have bounded variation.
\begin{proposition}
  Let $G_{n}\sim\mathcal{G}(n,p_n,q_n)$ be a balanced, two community stochastic blockmodel with
  $p_n=\omega(\log n/n)$, $q_n=\omega(1/n^2)$, and $q_n = o(p_n/n)$. We assume that $G_n$ is
  connected. Let $u$ and $v$ be two vertices. Given $k_n=k$, let $\alpha(k,u,v)$ be the coefficient
  in the expansion of $\hR_{u\, v}$ in \eqref{eq:theorem4}. Similarly, let $\alpha(k+1,u,v)$ be the
  corresponding quantity when $k_n =k+1$. We have
  \begin{align}
    \left|\alpha(k+1,u,v) - \alpha(k,u,v)\right| & \leq 8, \\
    \\
    \displaystyle 
    \left|\frac{\alpha(k+1,u,v)}{k+1} - \frac{\alpha(k,u,v)}{k} \right| & \leq  \frac{6}{k}.
  \end{align}
  \label{bounded-alpha-k}
\end{proposition}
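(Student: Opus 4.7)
The proof plan is to exploit the explicit recursion \eqref{eq:update-alpha} established during the inductive step of Theorem \ref{thm:cross-comm}. That formula already contains all of the analytic work, expressing $\alpha(k+1,u,v)$ as an exact algebraic function of $1/k$, $1/(k+1)$, and the four coefficients at level $k$: $\alpha_0 = \alpha(k,u,v)$, $\alpha_1 = \alpha(k,u_{k+1},v)$, $\alpha_2 = \alpha(k,u,v_{k+1})$, and $\alpha_3 = \alpha(k,u_{k+1},v_{k+1})$ (see Fig.~\ref{thealphas}). What remains is a short algebraic manipulation followed by term-by-term bounding, using the uniform a priori estimate $|\alpha_i| \le 2k$ that comes from \eqref{eq:bounded-k} applied to each of the four cross-community pairs involved.

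For the first inequality, I would subtract $\alpha_0$ from both sides of \eqref{eq:update-alpha}, obtaining
\[
\alpha(k+1,u,v) - \alpha(k,u,v) = \frac{\alpha_0 - \alpha_1 - \alpha_2}{k} + \frac{2}{k+1} + \frac{\alpha_3}{k(k+1)},
\]
and bound each summand separately by the triangle inequality. Applying $|\alpha_i|\le 2k$ gives respective bounds of $6$, $1$, and $2/(k+1)\le 1$, summing to $8$. The base case $k=1$ is governed by the slightly different formula \eqref{eq:update-alpha1}, but the analogous $|\alpha_i|\le 2$ estimate and triangle inequality yield identical control.

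For the second inequality, dividing \eqref{eq:update-alpha} by $k+1$ and subtracting $\alpha_0/k$ produces, after the $\alpha_0$ contributions cancel exactly,
\[
\frac{\alpha(k+1,u,v)}{k+1} - \frac{\alpha(k,u,v)}{k} = -\frac{\alpha_1 + \alpha_2}{k(k+1)} + \frac{2}{(k+1)^2} + \frac{\alpha_3}{k(k+1)^2}.
\]
The cancellation of $\alpha_0$ is the key observation: without it the leading behaviour would be $\O{1}$ rather than $\O{1/k}$. Bounding the three terms via $|\alpha_i|\le 2k$ yields $\frac{4}{k+1} + \frac{4}{(k+1)^2} = \frac{4k+8}{(k+1)^2}$, and a one-line check reduces $\frac{4k+8}{(k+1)^2} \le \frac{6}{k}$ to $0 \le 2k^2 + 4k + 6$, which is automatic for $k\ge 1$.

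The main (minor) obstacle is organisational: one must keep track of the four coefficients $\alpha_0,\ldots,\alpha_3$ and handle the boundary cases $u = u_{k+1}$ or $v = v_{k+1}$ indicated at the end of the proof of Theorem \ref{thm:cross-comm}, where some of the associated resistances vanish so that the corresponding $\alpha_i$'s effectively drop out, which can only tighten both bounds. No further probabilistic work is required; the entire result reduces cleanly to the deterministic recursion \eqref{eq:update-alpha} combined with the uniform envelope \eqref{eq:bounded-k}.
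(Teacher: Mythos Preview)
Your proposal is correct and follows essentially the same approach as the paper: both arguments read off the difference and the quotient-difference directly from the recursion \eqref{eq:update-alpha}, bound each term via the envelope $|\alpha_i|\le 2k$ from \eqref{eq:bounded-k}, and reduce the second inequality to a short algebraic check (the paper writes it as $\frac{4}{k+1}(1+\frac{1}{k+1})\le\frac{6}{k+1}\le\frac{6}{k}$, you as $\frac{4k+8}{(k+1)^2}\le\frac{6}{k}$). Your explicit remarks on the $k=1$ formula \eqref{eq:update-alpha1} and on the boundary cases $u=u_{k+1}$ or $v=v_{k+1}$ are harmless additions that the paper omits.
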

\begin{proof}
  The proof is a direct consequence of \eqref{eq:update-alpha}, and the fact that $|\alpha(k,u,v)|
  \leq 2k$ (see \eqref{eq:bounded-k}).
  \noindent Let us start with the first inequality.  From \eqref{eq:update-alpha} we have
  \begin{equation}
    \alpha(k+1,u,v) - \alpha(k,u,v) =  \frac{\alpha(k,u,v) -\alpha_1 -\alpha_2}{k} + \frac{2}{k+1}     + \frac{\alpha_3}{k(k+1)},
  \end{equation}
  and thus
  \begin{equation}
    \left|\alpha(k+1,u,v) - \alpha(k,u,v) \right|  \leq \frac{6k}{k} +
    \frac{2}{k+1}  + \frac{2k}{k(k+1)} =  6 + \frac{4}{k+1} \leq 8.
  \end{equation}
  We now  show the second inequality. Again, from \eqref{eq:update-alpha} we have
  \begin{equation}
    \begin{split}
      \frac{\alpha(k+1,u,v)}{k+1} - \frac{\alpha(k,u,v)}{k} & = \frac{\alpha(k,u,v)}{k+1} - \frac{\alpha(k,u,v)}{k}
      + \frac{\alpha(k,u,v) -\alpha_1 -\alpha_2}{k(k+1)} + \frac{2}{(k+1)^2}     + \frac{\alpha_3}{k(k+1)^2}\\
      & = - \frac{\alpha(k,u,v)}{k(k+1)} + \frac{\alpha(k,u,v) -\alpha_1 -\alpha_2}{k(k+1)} + \frac{2}{(k+1)^2}  + \frac{\alpha_3}{k(k+1)^2}\\
      & = -\frac{ \alpha_1 + \alpha_2}{k(k+1)} + \frac{2}{(k+1)^2}  + \frac{\alpha_3}{k(k+1)^2},\\
    \end{split}
  \end{equation}
  and thus
  \begin{equation}
    \begin{split}
      \left|\frac{\alpha(k+1,u,v)}{k+1} - \frac{\alpha(k,u,v)}{k} \right| & \leq \frac{4k}{k(k+1)} +
      \frac{2}{(k+1)^2}  + \frac{2k}{k(k+1)^2} =  \frac{4}{k+1} + \frac{2}{(k+1)^2}  + \frac{2}{(k+1)^2}\\
      & \leq \frac{4}{k+1} \left\{1 +  \frac{1}{k+1}\right\} \leq \frac{6}{k+1} \leq \frac{6}{k}. 
    \end{split}
  \end{equation}
\end{proof}
We now proceed to the proof of corollary \ref{bounded-alpha-kn}.
\begin{proof}[Proof of corollary \ref{bounded-alpha-kn}]
  We first verify that with high probability $k_{n+1} - k_n$ is bounded. We then apply
  proposition~\ref{bounded-alpha-k}.   

  Let $\Delta k_n \eqdef k_{n+1} - k_n$ be the number of adjacent cross-community edges that have
  vertex $n+1$ as one of their endpoints. $\Delta k_n$ is a binomial random variable $B(n_1,q_n)$,
  where we assume without loss of generality that vertex $n+1 \in C_2$. Because
  $\E{\Delta k_n} = nq_n/2 = p_n \o{1}$, $\E{\Delta k_n}$ is bounded, and there exists $\kappa$ such
  that $\forall n, \E{\Delta k_n} <\kappa$. Using a Chernoff bound we have
  \begin{equation}
    \prob{|\E{\Delta k_n} - \Delta k_n| > 3 \sqrt{\kappa}} \leq 2 \exp{\left(-\frac{1}{3}\right)} <0.01,
  \end{equation}  
  and thus 
  \begin{equation}
    \left|k_{n+1} - k_n\right| = \left|k_{n+1} - \E{\Delta k_n} + \E{\Delta k_n}  - k_n\right| < 6
    \sqrt{\kappa} \quad \text{with probability > 0.99}
  \end{equation}  
  \noindent In other words, with high probability $k_{n+1} - k_n$ is bounded by $C = \lceil
  6\sqrt{\kappa} \rceil$, independently of $n$.\\ 
  Finally, we have
  \begin{equation}
    \begin{split}
      \left|\frac{\alpha(k_{n+1},u,v)}{k_{n+1}} - \frac{\alpha(k_n,u,v)}{k_n}\right|
      & = \left|\sum_{k = k_n}^{k_{n+1}-1} \frac{\alpha(k+1,u,v)}{k+1} - \frac{\alpha(k,u,v)}{k}\right|\\
      & \leq \sum_{k = k_n}^{k_{n+1}-1} \left|\frac{\alpha(k+1,u,v)}{k+1} -
        \frac{\alpha(k,u,v)}{k}\right|\\
    \end{split}
  \end{equation}
  Because of corollary~\ref{bounded-alpha-k}, each term in the sum is bounded by $6/k$; the largest
  upper bound being $6/k_n$. Also, with high probability there are at most $C$ terms. We conclude
  that 
  \begin{equation}
    \left|\frac{\alpha(k_{n+1},u,v)}{k_{n+1}} - \frac{\alpha(k_n,u,v)}{k_n}\right| \leq C
    \frac{6}{k_n}\quad \text{with high probability.}
  \end{equation}
\end{proof}

We now combine lemma \ref{n1_eq_n2} with theorem \ref{thm:cross-comm} to estimate the perturbation
created by the addition of an $n+1^\text{th}$ vertex to $G_{n}$.  The following lemma shows that
adding an additional vertex, with corresponding edges to either one of the communities does not
change the effective resistance, as long as no new cross-community edges are created. 
\begin{lemma}
  Let $G_{n+1}\sim\cG(n+1,p_n,q_n)$ be a stochastic blockmodel with $p_n=\om{\log n/n}$. Assume that
  $q_n = \o{p_n/n}$. Let $G_{n}$ be the subgraph of $G_{n+1}$ induced by the vertex set
  $[n]$. Let $k_n$ and $k_{n+1}$ be the number of cross-community edges in $G_{n}$ and  $G_{n+1}$
  respectively. \\

  \noindent Let $u$ and $v$ be two vertices in $G_{n+1}$, for which the effective resistance
  $\hR^{(n)}_{u\, v}$, measured in $G_n$, is properly defined. Let $\hR^{(n+1)}$  be the corresponding effective resistance 
  measured in $G_{n+1}$. \\

  \noindent If $u$ and $v$ belong to the same community, then $\hR^{(n+1)}$ satisfies
  \begin{equation}
    \hR^{(n)}_{u\, v}  - \hR^{(n+1)}_{u\, v} = \O{\frac{1}{\dbar^2}} \ge 0.
  \end{equation}
  If $u$ and $v$ are in different communities, then $\hR^{(n+1)}$ is controlled by the following inequalities,
  \begin{equation}
    \begin{cases}
      \hR^{(n)}_{u\, v}  - \hR^{(n+1)}_{u\, v} = \displaystyle\O{\frac{1}{\dbar^2}} \ge 0, 
      & \text{if} \quad k_n = k_{n+1},\\
      \\
      \hR^{(n)}_{u\, v}  - \hR^{(n+1)}_{u\, v} \ge \displaystyle \frac{2}{k_n^2} +
      \frac{1}{\dbar}\O{\frac{1}{k_n}} \ge 0 ,
      & \text{if} \quad k_{n+1} > k_n. 
    \end{cases}
  \end{equation}
  \label{lemma_ad_npo}
\end{lemma}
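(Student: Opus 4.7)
The plan is to apply Theorem \ref{thm:cross-comm} to both $G_n$ and $G_{n+1}$ and subtract the resulting asymptotic expansions. The elementary growth step changes the expected in-community degree $\dbar$ by at most $p_n$, so $1/\dbar^{(n)} - 1/\dbar^{(n+1)} = \O{p_n/\dbar^2} = \O{1/\dbar^2}$; this observation will absorb most of the terms that shift when we pass from $G_n$ to $G_{n+1}$. Nonnegativity will in every case be a direct consequence of Rayleigh's monotonicity law: the graph $G_{n+1}$ restricted to $[n]$ is $G_n$, and adjoining the vertex $n+1$ together with its incident edges can only decrease $\hR_{u\, v}$ for $u,v \in [n]$.

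For the same-community case, Theorem \ref{thm:cross-comm} gives $\hR^{(n)}_{u\, v} = 2/\dbar^{(n)} + \O{1/\dbar^2}$ and the analogous formula in $G_{n+1}$. Subtracting leaves only $2(1/\dbar^{(n)} - 1/\dbar^{(n+1)}) + \O{1/\dbar^2} = \O{1/\dbar^2}$. For the different-community case with $k_n = k_{n+1}$, the expansion reads $\hR^{(i)}_{u\, v} = 1/k_i + 2/\dbar^{(i)} + \alpha(k_i, u, v)/(\dbar^{(i)} k_i) + \O{1/\dbar^2}$ for $i \in \{n, n+1\}$. Because no cross-community edges are added and $\alpha$ is a deterministic function of $k$ (conditioned on the configuration of cross-community edges), we have $\alpha(k_n, u, v) = \alpha(k_{n+1}, u, v)$, so the $1/k$ and $\alpha/(\dbar k)$ contributions cancel up to shifts in $\dbar$; these shifts are absorbed, yielding $\O{1/\dbar^2}$ overall.

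For the different-community case with $k_{n+1} > k_n$, the dominant contribution is $1/k_n - 1/k_{n+1} = (k_{n+1}-k_n)/(k_n k_{n+1}) \geq 1/(k_n(k_n+1))$, asymptotically of order $1/k_n^2$. The $\alpha/(\dbar k)$ terms now differ nontrivially, but by Corollary \ref{bounded-alpha-kn} the difference is $\O{1/(\dbar k_n)}$ with high probability; the $2/\dbar$ shift remains $\O{1/\dbar^2}$, and since $q_n = \o{p_n/n}$ implies $k_n = \o{\dbar}$ with high probability (so $1/\dbar \leq 1/k_n$), all remainder terms are subsumed into $(1/\dbar)\O{1/k_n}$. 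Putting everything together gives the advertised lower bound.

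The main technical obstacle is the bookkeeping around $\alpha(k,u,v)$: one must verify that when $k_n = k_{n+1}$ this quantity truly agrees in the two graphs despite the new in-community edges incident to vertex $n+1$. This follows by inspection of the inductive update formula \eqref{eq:update-alpha}, whose recursion depends only on $k$ and on the positions of cross-community edges (not on the internal community structure) at the asymptotic precision $\O{1/\dbar^2}$ we are working in. A secondary issue is the sharp constant $2/k_n^2$ versus the crude $1/(k_n(k_n+1))$ coming from $\Delta k_n \geq 1$; an asymptotic expansion of $1/k_n - 1/k_{n+1}$, or equivalently carrying an additional term of the Taylor series used in the proof of Theorem \ref{thm:cross-comm}, produces the advertised leading coefficient. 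The rest of the argument is routine algebra on asymptotic expansions.
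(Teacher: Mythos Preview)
Your approach is essentially identical to the paper's: apply Theorem~\ref{thm:cross-comm} to both $G_n$ and $G_{n+1}$, invoke Rayleigh monotonicity for the sign, absorb the shift in $\dbar$ via Lemma~\ref{n1_eq_n2}, use the determinism of $\alpha$ in $k$ to handle the case $k_n=k_{n+1}$, and control the $\alpha$-difference in the case $k_{n+1}>k_n$ through Corollary~\ref{bounded-alpha-kn}.

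One caution: do not chase the constant $2/k_n^2$. The paper's own proof bounds $1/k_n - 1/k_{n+1} \ge 1/(k_n(k_n+1)) \ge 1/(2k_n^2)$ and stops there; the $2/k_n^2$ in the statement is a typo for $1/(2k_n^2)$, and no ``additional Taylor term'' will upgrade the crude bound by a factor of four. The constant is immaterial downstream, since only the order $\Theta(1/k_n^2)$ is used in Theorem~\ref{thm:lowerbd}.
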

\begin{proof}
  Because of lemma \ref{n1_eq_n2}, we have 
  \begin{equation}
    \frac{1}{\dbarp} = \frac{1}{\dbar} + \O{\frac{1}{\dbarp^2}},
  \end{equation}
  and
  \begin{equation}
    \frac{1}{\dbarp^2} = \frac{1}{\dbar^2} + \O{\frac{1}{\dbarp^3}}.
  \end{equation}
  If $u$ and $v$ are in the same community, the expression for $\hR^{(n+1)}_{u\, v}$ and
  $\hR^{(n)}_{u\, v}$, given by \eqref{eq:theorem4} coincide, up to order $\O{1/\dbar^2}$.\\

  \noindent When $u$ and $v$ are in different communities, we need to consider the values of $k_n$ and
  $k_{n+1}$.  If $k_n = k_{n+1}$ then $\alpha (k_n,u,v) = \alpha (k_{n+1},u,v)$, and thus
  $\hR^{(n+1)}_{u\, v} = \hR^{(n)}_{u\, v}$, up to order $\O{1/\dbar^2}$. \\

  If $k_{n+1} > k_n$, we will show that the decrease in effective resistance is of order
  $\T{1/k_{n+1}^2}$. We first recall that $\hR^{(n)}_{u\, v} \ge \hR^{(n+1)}_{u\, v}$, since $G^{n+1}$ has
  more edges than $G_n$, and thus
  \begin{equation}
    \left|\hR^{(n)}_{u\, v} - \hR^{(n+1)}_{u\, v}\right| = \hR^{(n)}_{u\, v} - \hR^{(n+1)}_{u\, v}.
  \end{equation}
  Using the expression for $\hR^{(n+1)}_{u\, v}$ and $\hR^{(n)}_{u\, v}$, given by \eqref{eq:theorem4} we
  have
  \begin{equation}
    \begin{split}
      \left|\hR^{(n)}_{u\, v} - \hR^{(n+1)}_{u\, v}\right| 
      & = \hR^{(n)}_{u\, v} - \hR^{(n+1)}_{u\, v}\\
      & = \frac{1}{k_n} - \frac{1}{k_{n+1}} + \frac{\alpha(k_n,u,v)}{\dbar k_n} - \frac{\alpha(k_{n+1},u,v)}{\dbarp k_{n+1}} +
      \O{\frac{1}{\dbar^2}}\\ 
      & \ge \frac{1}{k_n} - \frac{1}{k_n+1} + \frac{\alpha(k_n,u,v)}{\dbar k_n} -
      \frac{\alpha(k_{n+1},u,v)}{\dbar (k_n+1)} + \O{\frac{1}{\dbar^2}}\\ 
      & \ge \frac{1}{k_n(k_n+1)} + \frac{1}{\dbar}\left(\frac{\alpha(k_n,u,v)}{k_n} - \frac{\alpha(k_{n+1},u,v)}{k_n+1} \right)+
      \O{\frac{1}{\dbar^2}}.\\
      & \ge \frac{1}{2k_n^2} + \frac{1}{\dbar}\left(\frac{\alpha(k_n,u,v)}{k_n} - \frac{\alpha(k_{n+1},u,v)}{k_n+1} \right)+
      \O{\frac{1}{\dbar^2}}.
    \end{split}
  \end{equation}
  We recall that corollary \ref{bounded-alpha-kn} implies that
  \begin{equation}
    \displaystyle 
    \left|\frac{\alpha(k_n+1,u,v)}{k_n+1} - \frac{\alpha(k_n,u,v)}{k_n} \right| = \O{\frac{1}{k_n^2}}
  \end{equation}
  and thus
  \begin{equation}
    \left|\hR^{(n)}_{u\, v} - \hR^{(n+1)}_{u\, v}\right| \ge \frac{1}{2k_n^2} + \frac{1}{\dbar}\O{\frac{1}{k_n}},
  \end{equation}
  which completes the proof.
\end{proof}
\subsection{The Distance  $D_n$ Under  the Null Hypothesis}
The following theorem provides an estimate of the distance $D_n$ between $G_n$ and $G_{n+1}$ 
after the addition of node $n+1$. Under the null hypothesis -- $n+1$ does not lead to an increase in the number of
cross-community edges -- the change in the normalized effective resistance distance between $G_n$
and $G_{n+1}$ remains negligible (after removing the linear term $H(n,k_n)$).
\begin{Theorem}
  \label{thm:upbound}
  Let $G_{n+1}\sim\cG(n+1,p_n,q_n)$ be a stochastic blockmodel with $p_n=\om{\log n/n}$ ,
  $q_n = \om{1/n^2}$, and $q_n = \o{p_n/n}$. Let $G_{n}$ be the subgraph induced by the vertex set
  $[n]$, and let $D_n = \RD\left(G_n,G_{n+1}\right)$ be the normalized effective resistance distance, $\RD$,  defined in
  \eqref{eq:metric-defn}.\\

  \noindent Suppose that the introduction of $n+1$ does not create additional cross-community edges,
  that is $k_n = k_{n+1}$, then 
  \begin{equation}
    \label{eq:116}
    D_n - h(n,k_n) =  \O{\frac{n^2}{\dbar^2}} \ge 0,
  \end{equation}
  where
  \begin{equation}
    h(n,k_n) = \left\lfloor \frac{n}{2}\right\rfloor + \left\lceil \frac{n}{2} \right\rceil
    \frac{k_n}{1+k_n}. 
    \label{h_n_kn}
  \end{equation}
\end{Theorem}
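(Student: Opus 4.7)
I would start by decomposing the pairwise sum defining $D_n$ according to whether the pair touches the new vertex $n+1$. In the augmented $G_n$ vertex $n+1$ is isolated, so $R^{(n)}_{u\,n+1} = 1$ by \eqref{eq:renorm-res-defn}, and each such pair contributes $1 - R^{(n+1)}_{u\,n+1} = 1/(1 + \hR^{(n+1)}_{u\,n+1})$. This gives the split
\begin{equation*}
D_n \;=\; \underbrace{\sum_{u<v\le n} \bigl|R^{(n+1)}_{u\,v} - R^{(n)}_{u\,v}\bigr|}_{A_n} \;+\; \underbrace{\sum_{u\le n} \frac{1}{1 + \hR^{(n+1)}_{u\,n+1}}}_{B_n}.
\end{equation*}
Bounding $A_n$ is the ``free'' part; extracting $h(n,k_n)$ from $B_n$ is the heart of the argument.

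For $A_n$, the hypothesis $k_n = k_{n+1}$ places every pair $u,v\le n$ into a branch of Lemma~\ref{lemma_ad_npo} that gives $\hR^{(n)}_{u\,v} - \hR^{(n+1)}_{u\,v} = \O{1/\dbar^2}\ge 0$. Lemma~\ref{lemma:change_bounds} carries the bound over to the renormalized resistance, so $0\le R^{(n)}_{u\,v} - R^{(n+1)}_{u\,v} \le \hR^{(n)}_{u\,v} - \hR^{(n+1)}_{u\,v}$, and summing over the $\binom{n}{2}$ pairs yields $0 \le A_n = \O{n^2/\dbar^2}$.

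For $B_n$, I partition $[n]$ according to community membership of $n+1$; a short parity check shows that $\lfloor n/2\rfloor$ vertices share the community of $n+1$ and $\lceil n/2\rceil$ lie in the opposite one, independently of whether $n$ is even or odd. Theorem~\ref{thm:cross-comm} then gives $\hR^{(n+1)}_{u\,n+1} = 2/\dbar + \O{1/\dbar^2}$ in the same-community case and $\hR^{(n+1)}_{u\,n+1} = 1/k_{n+1} + \O{1/\dbar}$ in the opposite case (after absorbing the $\alpha/(k\dbar)$ contribution via $|\alpha|\le 2k$ from \eqref{eq:bounded-k}). Geometric series expansion of $1/(1+\hR)$ yields term-by-term estimates $1 - 2/\dbar + \O{1/\dbar^2}$ and $k_{n+1}/(1+k_{n+1}) + \O{1/\dbar}$, respectively. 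Summing the two groups and substituting $k_{n+1} = k_n$ under $H_0$,
\begin{equation*}
B_n \;=\; \lfloor n/2\rfloor + \lceil n/2\rceil\,\frac{k_n}{1+k_n} + \O{n/\dbar} \;=\; h(n,k_n) + \O{n^2/\dbar^2},
\end{equation*}
where the last step absorbs $n/\dbar$ into $n^2/\dbar^2$ using $\dbar\le n$. Combining with $A_n$ gives $D_n - h(n,k_n) = \O{n^2/\dbar^2}$.

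The main obstacle is the nonnegativity claim $D_n - h(n,k_n)\ge 0$. Each same-community term in $B_n$ sits strictly below $1$, so $B_n$ by itself undershoots $h(n,k_n)$ by a positive quantity of order $n/\dbar$ that must be recovered from $A_n$. I would address this by sharpening the $A_n$ estimate from an upper bound into a matching \emph{lower} bound, by applying the Sherman--Morrison--Woodbury update \eqref{eq:update} (already used to prove Theorem~\ref{thm:cross-comm}) to the $\sim\dbar$ new edges incident to $n+1$, exhibiting an explicit positive per-pair contribution to $A_n$ for each pair in $[n]$, and then verifying that the collective contribution covers the deficit in $B_n$.
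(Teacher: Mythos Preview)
Your decomposition $D_n = A_n + B_n$ and your treatment of $A_n$ via Lemma~\ref{lemma_ad_npo} and Lemma~\ref{lemma:change_bounds} are exactly what the paper does. For $B_n$ the paper is actually cruder than you: rather than expanding $1/(1+\hR^{(n+1)}_{u\,n+1})$ through Theorem~\ref{thm:cross-comm}, it simply uses the trivial bound $R^{(n+1)}_{u\,n+1}\ge 0$ on same-community pairs and the Nash--Williams lower bound $\hR^{(n+1)}_{u\,n+1}\ge 1/k_n$ (hence $R^{(n+1)}_{u\,n+1}\ge 1/(1+k_n)$) on cross-community pairs. That gives $B_n\le h(n,k_n)$ outright, with no error term, and the paper concludes with the one-sided estimate
\[
0\le D_n \le h(n,k_n)+\O{\dfrac{n^2}{\dbar^{2}}}.
\]
Your two-sided estimate $B_n = h(n,k_n)+\O{n/\dbar}$ is sharper but unnecessary for the upper bound; the paper's route is shorter.

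On the nonnegativity of $D_n-h(n,k_n)$: your worry is well placed, and you have in fact spotted a gap that is present in the paper itself. The paper's argument proves only $D_n - h(n,k_n)\le \O{n^2/\dbar^{2}}$; the ``$\ge 0$'' in \eqref{eq:116} is asserted but never derived, and your observation that $B_n$ alone undershoots $h(n,k_n)$ by a term of order $n/\dbar$ shows that the inequality is not automatic. The good news is that this lower bound is not used anywhere downstream: Corollary~\ref{normalized_growth} restates it, but the proof of the main theorem only needs $Z_n=\O{1}$ under $H_0$, which follows from the upper bound alone. So you may simply drop the ``$\ge 0$'' claim and proceed; your proposed Sherman--Morrison--Woodbury repair is plausible but is more than either the paper proves or the application requires.
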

\begin{proof}
  Since vertex $n+1$ is isolated in $G_n$, the change in resistance at vertex $n+1$ between $G_n$
  and $G_{n+1}$ will behave quite differently than as described in Theorem \ref{thm:DRbound}. For
  this reason, we separate the renormalized resistance distance into two portions: the pairs of nodes
  that  do and do not contain vertex $n+1$,
  \begin{equation}
    D_n = \sum_{u < v \leq n+1}\left| R^{(n+1)}_{u\, v} - R^{(n)}_{u\, v}\right|
    = \sum_{u < v \leq n}\left| R^{(n+1)}_{u\, v} - R^{(n)}_{u\, v}\right| + \sum_{u \leq n}\left| R^{(n+1)}_{u\, n+1} - R^{(n)}_{u\, n+1}\right|.
    \label{eq:Dn_initial}
  \end{equation}
  Let us first study the second sum. Because vertex $n+1$ is isolated at time $n$,
  $R^{(n)}_{u\, n+1} = 1$. If $u$ and $n+1$ are in the same community, then
  $R^{(n+1)}_{u\, n+1} \leq \hR^{(n)}_{u\, v}= \O{1/\dbar^2}$, and is therefore negligible. We can use
  the trivial bound $R^{(n+1)}_{u\, n+1} \ge 0$ that yields
  \begin{equation*}
    \label{eq:18}
    \left| R^{(n+1)}_{u\, n+1} - R^{(n)}_{u\, n+1}\right| \leq 1.
  \end{equation*}
  If $u$ and $n+1$ are in different communities, then a tighter bound can be derived by considering
  the bottleneck formed by the cross-community edges. Indeed, a coarse application of Nash-Williams
  -- using only the cross-community cut-set -- tells us that the effective resistance between
  vertices in different communities is greater than $1/k_n$, and thus the renormalized effective
  resistance has the following lower bound,
  \begin{equation*}
    \label{eq:19}
    R^{(n+1)}_{u\, n+1} \ge \frac{1}{1+k_n},
  \end{equation*}
  which implies
  \begin{equation*}
    \left| R^{(n+1)}_{u\, n+1} - R^{(n)}_{u\, n+1}\right| \leq \frac{k_n}{1+k_n}.
  \end{equation*}
  Observing that there are $\left\lfloor \frac{n}{2}\right\rfloor$ possible in-community connections
  and $\left\lceil \frac{n}{2} \right\rceil$ possible cross-community connections, we have 
  \begin{equation}
    \label{eq:20}
    \sum_{u \leq n}\left| R^{(n+1)}_{u\, n+1} - R^{(n)}_{u\, n+1}\right| \leq \left\lfloor
      \frac{n}{2}\right\rfloor + \left\lceil \frac{n}{2} \right\rceil \frac{k_n}{1+k_n}.
  \end{equation}
  We now consider the first sum in \eqref{eq:Dn_initial}. Corollary \ref{lemma_ad_npo}
  combined with lemma \ref{lemma:change_bounds} yield
  \begin{equation}
    \left| R^{(n+1)}_{u\, v} - R^{(n)}_{u\, v}\right| 
    \leq \left| \hR^{(n+1)}_{u\, v} - \hR^{(n)}_{u\, v}\right| 
    = \O{\frac{1}{\dbar^2}},
  \end{equation}
  which leads to the following bound on the first sum,
  \begin{equation}
    \label{eq:35}
    \sum_{u<v\leq n}\left| R^{(n+1)}_{u\, v} - R^{(n)}_{u\, v}\right| = \O{\frac{n^2}{\dbar^2}}.
  \end{equation}
  Combining \eqref{eq:20} and \eqref{eq:35} yields 
  \begin{equation}
    0 \leq D_n = \mspace{-12mu} \sum_{u < v \leq n+1}\left| R^{(n+1)}_{u\, v} - R^{(n)}_{u\, v}\right| \leq \left\lfloor
      \frac{n}{2}\right\rfloor + \left\lceil \frac{n}{2} \right\rceil \frac{k_n}{1+k_n} +
    \O{\frac{n^2}{\dbar^2}} = h(n,k_n) + \O{\frac{n^2}{\dbar^2}},
  \end{equation}
  which implies the advertised result.
\end{proof}
The following corollary, which is  an immediate consequence of the previous theorem provides the
appropriate renormalization of $D_n - h(n,k_n)$  under the null hypothesis $H_0$.
\begin{corollary}
  \label{normalized_growth}
  Let $G_{n+1}\sim\cG(n+1,p_n,q_n)$ be a stochastic blockmodel with the same conditions on $p_n$ and
  $q_n$ as in Theorem \ref{thm:upbound}, and let $G_{n}$ be the subgraph induced by the vertex set
  $[n]$.\\

  \noindent Suppose that the introduction of $n+1$ does not create additional cross-community edges,
  that is $k_n = k_{n+1}$, then 
  \begin{equation}
    0 \leq p_n^2 \left( D_n - h(n,k_n) \right) = \O{1}
  \end{equation}
  \label{D_n_H0}
\end{corollary}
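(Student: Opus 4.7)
The plan is to apply Theorem \ref{thm:upbound} directly and then absorb the prefactor $p_n^2$ into the asymptotic bound on $n^2/\dbar^2$. The entire content of the corollary should just be a restatement of the theorem in the natural normalization dictated by the parameters of the model.

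First I would invoke Theorem \ref{thm:upbound}, which under the null hypothesis $k_n = k_{n+1}$ gives
\begin{equation*}
0 \leq D_n - h(n,k_n) = \O{\frac{n^2}{\dbar^2}}.
\end{equation*}
Since $p_n^2 \geq 0$, multiplying by $p_n^2$ preserves the non-negativity, yielding the left inequality $0 \leq p_n^2 (D_n - h(n,k_n))$ immediately.

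The only substantive step is to verify that $p_n^2 \cdot n^2/\dbar^2 = \O{1}$. By Definition \ref{degree-sbm}, the in-community expected degree satisfies $\dbar = p_n n_i$ with $n_i = \lfloor n/2 \rfloor$ or $\lceil n/2 \rceil$, so $\dbar = \T{n p_n}$. Consequently,
\begin{equation*}
\frac{p_n^2 \cdot n^2}{\dbar^2} = \frac{p_n^2 n^2}{\T{n^2 p_n^2}} = \T{1},
\end{equation*}
which absorbs into the $\O{1}$ bound.

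I do not anticipate any real obstacle here: the corollary is essentially a bookkeeping exercise. The only thing to be slightly careful about is that the $\O{n^2/\dbar^2}$ bound in Theorem \ref{thm:upbound} holds with high probability (it inherits this from the estimates in Theorem \ref{thm:DRbound} and Lemma \ref{lemma_ad_npo}), and so the conclusion $p_n^2(D_n - h(n,k_n)) = \O{1}$ also holds with high probability in the same sense. No new probabilistic argument is needed; everything follows from rescaling.
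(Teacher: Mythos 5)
Your proposal is correct and is essentially the paper's own proof: both invoke Theorem \ref{thm:upbound} to get $0 \leq D_n - h(n,k_n) = \O{n^2/\dbar^2}$ and then use $\dbar = \T{n p_n}$ (from Definition \ref{degree-sbm}) to conclude that $p_n^2 n^2/\dbar^2 = \O{1}$. The paper simply carries out the same rescaling a bit more explicitly, assuming $n$ even and writing the prefactor as $4(1 + \O{1/n})$.
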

\begin{proof}
  As explained in lemma \ref{n1_eq_n2}, we assume without loss of generality that $n$ is even. We have then
  \begin{equation}
    \dbar^2 = \frac{(n/2-1)^2}{p_n^2}
  \end{equation}
  and thus
  \begin{equation}
    \frac{n^2}{\dbar^2} = \frac{n^2}{p^2_n(n/2 -1)^2} = \frac{4}{p_n^2}\left(1 + \O{\frac{1}{n}}\right),
  \end{equation}
  which leads to 
  \begin{equation}
    \frac{n^2p_n^2}{\dbar^2} = 4\left(1 + \O{\frac{1}{n}}\right).
  \end{equation}
  We recall that theorem \ref{thm:upbound} gives us the following bound on $\left( D_n - h(n,k_n)
  \right)$ under the null hypothesis,
  \begin{equation}
    \left( D_n - h(n,k_n) \right) = \O{\frac{n^2}{\dbar^2}},
  \end{equation}
  we conclude that 
  \begin{equation}
    p_n^2 \left( D_n - h(n,k_n) \right) = \O{1}.
  \end{equation}
\end{proof}
\subsection{The Distance $D_n$ Under the Alternate Hypothesis}
We now consider the case where the addition of node $n+1$ leads to an increase in the number of
cross-community edges. Loosening the bottleneck between the two communities creates a significant
change in the normalized effective resistance distance between $G_n$ and $G_{n+1}$.
\begin{Theorem}
  \label{thm:lowerbd}
  Let $G_{n+1}\sim\cG(n+1,p_n,q_n)$ be a stochastic blockmodel with $p_n=\om{\log n/n}$ ,
  $q_n = \om{1/n^2}$, and $q_n = \o{p_n/n}$. Let $G_{n}$ be the subgraph induced by the vertex set
  $[n]$, and let $D_n = \RD\left(G_n,G_{n+1}\right)$ be the normalized effective resistance
  distance, $\RD$, defined in \eqref{eq:metric-defn}.\\

  \noindent Suppose that the introduction of $n+1$ creates additional cross-community edges, that is
  $k_{n+1} > k_n$, then
  \begin{equation}
    \label{eq:6116}
    0 \leq \frac{1}{16}\left\{\frac{n^2}{k_n^2} + \frac{n^2}{\dbar}\O{\frac{1}{k_n}} \right\}
    \leq  D_n - h(n,k_n), 
  \end{equation}
  where $h(n,k_n)$ is defined in (\ref{h_n_kn}).
\end{Theorem}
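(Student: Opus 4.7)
The plan is to mirror the decomposition used in Theorem \ref{thm:upbound}, splitting
\[
D_n \;=\; \sum_{u<v\le n} \bigl|R^{(n+1)}_{u\,v} - R^{(n)}_{u\,v}\bigr| \;+\; \sum_{u\le n} \bigl|R^{(n+1)}_{u\,n+1} - R^{(n)}_{u\,n+1}\bigr|,
\]
and to build a lower bound for each piece separately. Under $H_1$, the new cross-community edge lowers every cross-community effective resistance by a universal amount, which will produce the $n^2/k_n^2$ term in the first sum, while the second sum will recover $h(n,k_n)$ almost exactly. The in-community pairs in $[n]$ contribute non-negatively and can simply be dropped for a lower bound.

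The bulk of the lower bound comes from the cross-community pairs in $[n]$. For any such pair $(u,v)$, Lemma \ref{lemma_ad_npo} supplies $\hR^{(n)}_{u\,v}-\hR^{(n+1)}_{u\,v} \ge \frac{1}{2k_n^2} + \frac{1}{\dbar}\O{1/k_n}$. I would then apply the lower bound in Lemma \ref{lemma:change_bounds}. Theorem \ref{thm:cross-comm} shows that $\hR_{u\,v} \le 1/k_n + 2/\dbar + \O{1/\dbar^2} \le 1 + o(1)$ whenever $k_n \ge 1$, so the denominator $(\hR^{(n)}_{u\,v}+1)(\hR^{(n+1)}_{u\,v}+1)$ is at most $4 + o(1)$ with high probability, yielding $\bigl|R^{(n+1)}_{u\,v} - R^{(n)}_{u\,v}\bigr| \ge \frac{1}{8k_n^2} + \frac{1}{\dbar}\O{1/k_n}$. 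Summing over the $n_1 n_2 \sim n^2/4$ cross-community pairs produces the lower bound $\frac{n^2}{16k_n^2} + \frac{n^2}{\dbar}\O{1/k_n}$ on the first sum.

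For the second sum, $R^{(n)}_{u\,n+1}=1$ because $n+1$ is isolated in $G_n$, while under $H_1$ the graph $G_{n+1}$ is connected w.h.p., so Theorem \ref{thm:cross-comm} applied to the new vertex yields $\hR^{(n+1)}_{u\,n+1}= 2/\dbar + \O{1/\dbar^2}$ when $u$ shares the community of $n+1$ and $\hR^{(n+1)}_{u\,n+1}= 1/k_{n+1}+\O{1/\dbar}$ otherwise. These translate via renormalization to per-pair contributions of $1-\O{1/\dbar}$ and $k_{n+1}/(1+k_{n+1}) - \O{1/\dbar}$ respectively. Summing and using $k_{n+1}/(1+k_{n+1}) \ge k_n/(1+k_n)$ gives the lower bound $h(n,k_n) - \O{n/\dbar}$ on the second sum. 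Since the hypothesis $q_n = \o{p_n/n}$ forces $k_n = \o{\dbar} = \o{n}$, the slack $\O{n/\dbar}$ is absorbed into $\frac{n^2}{\dbar}\O{1/k_n}$, and combining the two pieces proves the claim.

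The most delicate point, and the main obstacle I anticipate, is the small-$k_n$ regime: when $k_n$ is close to $1$, the cross-community effective resistance approaches $1$, so the denominator $(\hR^{(n)}_{u\,v}+1)(\hR^{(n+1)}_{u\,v}+1)$ sits near $4$ and the multiplicative loss in Lemma \ref{lemma:change_bounds} is tightest; one must verify that the uniform bound of $4+o(1)$ holds with high probability even there, so that the universal constant $1/16$ in the stated inequality is attained. A secondary subtlety is the careful bookkeeping around the pairs involving $n+1$, where one must ensure that the renormalization errors and the coarse use of $k_{n+1}\ge k_n+1$ do not contaminate the leading $n^2/k_n^2$ signal before being folded into the advertised error term.
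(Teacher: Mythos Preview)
Your proposal is correct and follows the paper's own proof essentially step for step: the same two-sum decomposition, the same use of Lemma~\ref{lemma_ad_npo} together with Lemma~\ref{lemma:change_bounds} (bounding the denominator by $4+o(1)$) for the cross-community pairs in $[n]$, the same recovery of $h(n,k_n)$ from the terms involving vertex $n+1$ via Theorem~\ref{thm:cross-comm}/Lemma~\ref{lemma:cross-comm}, and the same absorption of the $\O{n/\dbar}$ slack into $\frac{n^2}{\dbar}\O{1/k_n}$. The only discrepancy is a harmless factor of two in your per-pair constant (you get $n^2/(32k_n^2)$ rather than $n^2/(16k_n^2)$), which stems from the paper's own inconsistent constants in the statement versus the proof of Lemma~\ref{lemma_ad_npo}.
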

\begin{proof} As before, we split the distance $D_n$ into two terms,
  \begin{equation}
    D_n  = \sum_{u < v \leq n+1}\left| R^{(n+1)}_{u\, v} - R^{(n)}_{u\, v}\right| 
    =  \sum_{u < v \leq n}\left| R^{(n+1)}_{u\, v} - R^{(n)}_{u\, v}\right| + 
    \sum_{1\leq i \leq n}\left| R^{(n+1)}_{u\, n+1} - R^{(n)}_{u\, n+1}\right|. 
    \label{eq:17}
  \end{equation}
  Again, we analyze the second sum, which will generate the same linear contribution,
  \begin{equation*}
    \label{eq:45}
    \left| R^{(n+1)}_{u\, n+1} - R^{(n)}_{u\, n+1}\right| = 1-R^{(n+1)}_{u\, n+1} = \frac{1}{1+ \hR^{(n+1)}_{u\, n+1}}. 
  \end{equation*}
  Because we seek an \emph{upper} bound on $R^{(n+1)}_{u\, n+1}$ to obtain an \emph{lower} bound on the
  change $\left| R^{(n+1)}_{u\, n+1} - R^{(n)}_{u\, n+1}\right|$ we have a more refined analysis of
  $R^{(n+1)}_{u\, n+1}$.\\

  \noindent In the case where $u$ and $n+1$ are in the same community, Theorem \ref{thm:cross-comm} tells us that
  \begin{equation*}
    \label{eq:47}
    \hR^{(n+1)}_{u\, n+1} = \frac{2}{\dbarp} + \O{\frac{1}{\dbar^2}},
  \end{equation*}
  We use the inequality $\frac{1}{1+x} \ge 1-x,$ which is valid for all $x>-1$, to get a
  lower bound,
  \begin{equation}
    \left| R^{(n+1)}_{u\, n+1} - R^{(n)}_{u\, n+1}\right| = \frac{1}{1+ \hR^{(n+1)}_{u\, n+1}}
    \ge 1 - \frac{2}{\dbarp} + \O{\frac{1}{\dbar^2}} 
    = 1 + \O{\frac{1}{\dbar}}.
    \label{eq:same_com_l}
  \end{equation}
  We also use the inequality $\frac{1}{1+x} \le 1-x/2$, which is valid for all $x \in [0,1]$, to get an
  upper bound,
  \begin{equation}
    \left| R^{(n+1)}_{u\, n+1} - R^{(n)}_{u\, n+1}\right| = \frac{1}{1+ \hR^{(n+1)}_{u\, n+1}}
    \le 1 - \frac{1}{\dbarp} + \O{\frac{1}{\dbar^2}} 
    = 1 + \O{\frac{1}{\dbar}}.
    \label{eq:same_com_u}
  \end{equation}
  \noindent   If $u$ and $n+1$ are in separate communities, Lemma \ref{lemma:cross-comm} tells us that
  \begin{equation}
    \frac{1}{k_{n+1}} \leq    \hR^{(n+1)}_{u\, n+1}\leq \frac{1}{k_{n+1}} + \frac{4}{\dbarp} + \O{\frac{1}{\dbarp^2}} \leq
    \frac{1}{1+k_n} + \frac{4}{\dbarp} + \O{\frac{1}{\dbar^2}}.
  \end{equation}
  Using again the inequality $\frac{1}{1+x} \ge 1-x,$ we get a  lower bound,
  \begin{equation}
    \left| R^{(n+1)}_{u\, n+1} - R^{(n)}_{u\, n+1}\right| 
    \ge 1 - \frac{1}{1+k_n} - \frac{4}{\dbarp} + \O{\frac{1}{\dbar^2}}
    = \frac{k_n}{1+k_n} + \O{\frac{1}{\dbar}},
    \label{eq:diff_com_l}
  \end{equation}
  and  using the inequality $\frac{1}{1+x} \le 1-x/2$ we get an  upper bound,
  \begin{equation}
    \left| R^{(n+1)}_{u\, n+1} - R^{(n)}_{u\, n+1}\right| \leq 1 - \frac{1}{2k_{n+1}} \leq 1.
    \label{eq:diff_com_u}
  \end{equation}
  Combining \eqref{eq:same_com_l}, \eqref{eq:same_com_u},\eqref{eq:diff_com_l}, and \eqref{eq:diff_com_u}, we get
  \begin{equation}
    \left\lfloor
      \frac{n}{2}\right\rfloor + \left\lceil \frac{n}{2} \right\rceil \frac{k_n}{1+k_n} +
    \O{\frac{n}{\dbar}} 
    \leq \sum_{u \leq n}\left| R^{(n+1)}_{u\, n+1} - R^{(n)}_{u\, n+1}\right| \le 
    n +   \O{\frac{n}{\dbar}}. 
    \label{eq:sum1}
  \end{equation}
  We now consider the first sum in \eqref{eq:17}. To get lower and upper bounds on
  $\left| \hR^{(n+1)}_{u\, v} - \hR^{(n)}_{u\, v}\right|$ we use lemma \ref{lemma:change_bounds}.

  We first observe that for $n$ sufficiently large, we have $\hR^{(n)}_{u\, v} \leq 1$, and thus
  $\hR^{(n+1)}_{u\, v} \leq 1$. Combining this upper bound on the effective resistance with lemma
  \ref{lemma:change_bounds} we get
  \begin{equation}
    \text{if} \quad  C(u,v) \le \left| \hR^{(n+1)}_{u\, v} - \hR^{(n)}_{u\, v}\right| 
    \quad \text{then} \quad 
    \frac{C(u,v)}{4} \le \frac{C(u,v)}{(1 + \hR^{(n)}_{u\, v})(1 + \hR^{(n+1)}_{u\, v})} 
    \le \left| \hR^{(n+1)}_{u\, v} - \hR^{(n)}_{u\, v}\right|. 
  \end{equation}
  From corollary \ref{lemma_ad_npo} we have
  \begin{equation}
    \begin{cases}
      C(u,v) = \O{\frac{1}{\dbar^2}} & \text{if $u$ and $v$ are in the same community,}\\
      C(u,v) \ge \displaystyle \frac{1}{k_n^2} + \frac{1}{\dbar}\O{\frac{1}{k_n}}  
      & \text{otherwise,}
    \end{cases}
  \end{equation}
  and therefore
  \begin{equation}
    4 \mspace {-12mu} \sum_{u < v \leq n}\left| R^{(n+1)}_{u\, v} - R^{(n)}_{u\, v}\right| \ge 
    \mspace{-12mu} \sum_{\stackrel{\scriptstyle u,v \in \text{different}}{\scriptstyle
        \text{communities}} } \mspace{-24mu} C(u,v) \mspace{8mu} + \mspace{-4mu}
    \sum_{\stackrel{\scriptstyle u,v \in \text{same}}{\scriptstyle \text{community}} }
    \mspace{-12mu} C(u,v) 
    \ge 
    \frac{n^2}{4k_n^2} + \frac{n^2}{4\dbar}\O{\frac{1}{k_n}} + \frac{n^2}{4} \O{\frac{1}{\dbar^2}},
  \end{equation}
  where the differences between $n/2$ and the exact size of $C_1$ or $C_2$ are absorbed in
  the error terms. Also, we have
  \begin{equation}
    \frac{1}{\dbar^2} = \frac{1}{\dbar}\O{\frac{1}{k_n}}, 
  \end{equation}
  and thus
  \begin{equation}
    \sum_{u < v \leq n}\left| R^{(n+1)}_{u\, v} - R^{(n)}_{u\, v}\right| \ge 
    \frac{1}{16}\left\{\frac{n^2}{k_n^2} + \frac{n^2}{\dbar}\O{\frac{1}{k_n}}\right\}.
    \label{eq:sum2}
  \end{equation}
  Finally, we note that 
  \begin{equation}
    \frac{n}{\dbar} = \frac{n^2}{\dbar k_n}\frac{k_n}{n} = \frac{n^2}{\dbar k_n}\frac{k_n}{n} 
  \end{equation}
  Because of lemma \ref{k_n_concentrates}, we have asymptotically with high probability,
  \begin{equation}
    \frac{n}{\dbar} =  \frac{n^2}{\dbar k_n}\O{\frac{\E{k_n}}{n}} = \frac{n^2}{\dbar k_n} \O{nq} = \frac{n^2}{\dbar} \O{\frac{1}{k_n}},
  \end{equation}
  and thus we conclude that 
  \begin{equation}
    \O{\frac{n}{\dbar}} =  \frac{n^2}{\dbar} \O{\frac{1}{k_n}}.
  \end{equation}
  Lastly, we add the two sums \eqref{eq:sum1} and \eqref{eq:sum2} to get
  \begin{equation}
    \left\lfloor
      \frac{n}{2}\right\rfloor + \left\lceil \frac{n}{2} \right\rceil \frac{k_n}{1+k_n} +
    \frac{1}{16}\left\{\frac{n^2}{k_n^2} + \frac{n^2}{\dbar}\O{\frac{1}{k_n}} \right\}
    \le \sum_{u < v \leq n+1}\left| R^{(n+1)}_{u\, v} - R^{(n)}_{u\, v}\right|. 
    \label{almost-there}
  \end{equation}
  The leading term linear term, $h(n,k_n)$, in (\ref{almost-there}) can be subtracted to arrive at
  the advertised result.
\end{proof}
\noindent Using the same normalization described in corollary \ref{normalized_growth} we obtain a very
different growth for $p_n^2(D_n - h(n,k_n))$ in the case of the alternate hypothesis.
\begin{corollary}
  Let $G_{n+1}\sim\cG(n+1,p_n,q_n)$ be a stochastic blockmodel with the same conditions on $p_n$ and
  $q_n$ as in Theorem \ref{thm:upbound}, and let $G_{n}$ be the subgraph induced by the vertex set
  $[n]$.\\

  \noindent Suppose that the introduction of $n+1$ creates additional cross-community edges, that is
  $k_{n+1} > k_n$, then 
  \begin{equation}
    0 \leq p_n \left( D_n - h(n,k_n) \right) \rightarrow \infty \quad \text{with high probability}.
  \end{equation}
  \label{D_n_H1}
\end{corollary}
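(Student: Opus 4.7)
The plan is to combine the high-probability lower bound of Theorem~\ref{thm:lowerbd} with the concentration of $k_n$ furnished by Lemma~\ref{k_n_concentrates}, and then verify that the dominating term $p_n n^2/k_n^2$ diverges under the standing hypotheses $q_n = \om{1/n^2}$, $q_n = \o{p_n/n}$, and $p_n = \om{\log n/n}$. Non-negativity is already part of Theorem~\ref{thm:lowerbd}, so only the divergence needs an argument.

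The first step is to check that the main term of the lower bound in Theorem~\ref{thm:lowerbd} dominates its error term. The error is $(n^2/\dbar)\O{1/k_n}$ while the main term is $n^2/k_n^2$, so their ratio is $\O{k_n/\dbar}$. Because $\dbar = \T{p_n n}$ and, by Lemma~\ref{k_n_concentrates}, $k_n = \T{q_n n^2}$ with high probability, this ratio equals $\T{q_n n/p_n}$, which is $\o{1}$ by the hypothesis $q_n = \o{p_n/n}$. Hence on an event of probability tending to $1$,
$$D_n - h(n,k_n) \ge \frac{1}{32}\cdot\frac{n^2}{k_n^2}.$$

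Next I would multiply by $p_n$ and use $k_n = \T{q_n n^2}$ once more; on the same high-probability event,
$$p_n \bigl( D_n - h(n,k_n) \bigr) \ge \frac{p_n n^2}{32 k_n^2},$$
and the right-hand side is of order
$$\frac{p_n}{q_n^2 n^2} \;=\; \frac{1}{p_n}\left(\frac{p_n}{q_n n}\right)^{\!2}.$$
The factor $p_n/(q_n n)$ tends to $+\infty$ by $q_n=\o{p_n/n}$, while $1/p_n \ge 1$ since $p_n \le 1$, so the product diverges. (Equivalently, since $q_n = \o{p_n/n} \le \o{1/n}$ yields $q_n n \to 0$, the factor $1/(q_n n)$ already diverges on its own.)

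The only real obstacle is bookkeeping of the several high-probability events involved: Lemma~\ref{k_n_concentrates}, the estimates feeding into Theorem~\ref{thm:lowerbd}, and the hidden $\O{1}$ constant in its error term each hold only with high probability. I would intersect these events to obtain a single event of probability tending to $1$ on which the chain of inequalities above is valid. No new technical machinery is required beyond what is already developed earlier in the appendix.
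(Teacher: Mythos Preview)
Your proposal is correct and follows essentially the same route as the paper: invoke the lower bound of Theorem~\ref{thm:lowerbd}, use Lemma~\ref{k_n_concentrates} to replace $k_n$ by $\T{q_n n^2}$, and then verify divergence via $q_n=\o{p_n/n}$. The only organizational difference is that you first absorb the error term $(n^2/\dbar)\O{1/k_n}$ into the main term $n^2/k_n^2$ by checking the ratio $k_n/\dbar=\o{1}$, whereas the paper treats the two terms separately (arguing the second is $\om{1}\O{1}$ and the first diverges); your bookkeeping is slightly cleaner but the substance is identical.
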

\begin{proof}
  As explained in lemma \ref{n1_eq_n2}, we assume without loss of generality that $n$ is even. From
  \eqref{eq:6116} we have 
  \begin{equation}
    p_n\left(D_n - h(n,k_n)\right) \ge \frac{1}{16}\left\{\frac{(np_n)^2}{k_n^2} +
      \frac{(np_n)^2}{\dbar}\O{\frac{1}{k_n}} \right\}
    \label{pnDn}
  \end{equation}
  Without loss of generality we assume $n$ even, and we have
  \begin{equation}
    \frac{(np_n)^2}{\dbar}\O{\frac{1}{k_n}} = \frac{(np_n)^2}{p_n(n/2-1)q_n (n^2/4)}\O{1}
    = \frac{p_n}{nq_n}\O{1} = \om{1}\O{1}.
  \end{equation}
  Therefore the second term in (\ref{pnDn}) is either bounded, or goes to infinity. We will prove
  that the first term goes to infinity.  We have
  \begin{equation}
    \frac{np_n}{k_n} = \frac{np_n}{\E{k_n}} \frac{\E{k_n}}{k_n} 
    = \frac{np_n}{q_n(n/2)^2} \frac{\E{k_n}}{k_n}  = \frac{4p_n}{nq_n}\frac{\E{k_n}}{k_n}. 
  \end{equation}
  From lemma \ref{k_n_concentrates} we know that asymptotically $\E{k_n}/k_n =
  \T{1}$ with high probability. Also, we have $p_n/(nq_n) = \om{1}$. This concludes the proof.
\end{proof}
The quantity $p_n^2(D_n - h(n,k_n))$ could provide a statistic to test the null hypothesis $k_n =
k_{n+1}$ against the alternate hypothesis $k_n < k_{n+1}$. Unfortunately, computing $p_n^2(D_n -
h(n,k_n))$ requires the knowledge of the unknown parameter $p_n$, and unknown variable $h(n,k_n)$. We
therefore propose two estimates that converge to these unknowns. A simple estimate of  $h(n,k_n)$ is
provided by $n$. Since we assume that there are much fewer cross-community edges than edges within
each community, we can estimate $p_n$ from the total number of edges. 

We start with two technical lemmas. The first lemma shows that can replace $h(n,k_n)$ with $n$.
\begin{lemma}
  \label{h_equal_n}
  Let $G_n\sim\cG(n,p_n,q_n)$ be a stochastic blockmodel with the same conditions on $p_n$ and
  $q_n$ as in Theorem \ref{thm:upbound}. If $p_n = \O{1/\sqrt{n}}$, then we have
  \begin{equation}
    \lim_{n\rightarrow \infty} p_n^2 \left(n -h(n,k_n)\right) = 0 \quad \text{with high probability}.
  \end{equation}
\end{lemma}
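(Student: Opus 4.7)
The plan is to compute $n - h(n,k_n)$ in closed form and then show the dominant term vanishes because $k_n \to \infty$ with high probability while $np_n^2$ stays bounded.

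First I would simplify the expression. Using the definition $h(n,k_n) = \lfloor n/2 \rfloor + \lceil n/2 \rceil \cdot k_n/(1+k_n)$ and the identity $n = \lfloor n/2 \rfloor + \lceil n/2 \rceil$, one gets
\begin{equation*}
  n - h(n,k_n) = \lceil n/2 \rceil \left( 1 - \frac{k_n}{1+k_n}\right) = \frac{\lceil n/2 \rceil}{1+k_n}.
\end{equation*}
Multiplying by $p_n^2$ yields
\begin{equation*}
  p_n^2 \bigl( n - h(n,k_n) \bigr) = \frac{\lceil n/2 \rceil p_n^2}{1+k_n}.
\end{equation*}

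Next I would bound the numerator and control the denominator. The hypothesis $p_n = \O{1/\sqrt{n}}$ gives $n p_n^2 = \O{1}$, so the numerator $\lceil n/2 \rceil p_n^2$ is bounded by a constant for all sufficiently large $n$. For the denominator, I invoke Lemma \ref{k_n_concentrates}: since $q_n = \om{1/n^2}$, we have $\E{k_n} = n_1 n_2 q_n = \om{1}$ and therefore $\E{k_n} \to \infty$. Combined with the concentration bound $3/4 < \E{k_n}/k_n < 3/2$ with probability at least $0.9$ for $n$ large enough, this forces $k_n \to \infty$ with high probability, and hence $1/(1+k_n) \to 0$ with high probability.

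Putting these two pieces together gives the result: the numerator is bounded while the reciprocal of the denominator tends to zero with high probability, so the product tends to zero with high probability. There is no serious obstacle in this proof; the only mild subtlety is being careful that the ``with high probability'' qualifier tracks through the product, which is immediate because the event $\{k_n \ge M\}$ holds with probability tending to $1$ for any fixed $M$ (a consequence of $\E{k_n} \to \infty$ combined with Lemma \ref{k_n_concentrates}).
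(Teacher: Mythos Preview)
Your proof is correct and follows essentially the same route as the paper: both compute $n-h(n,k_n)=\lceil n/2\rceil/(1+k_n)$, use $p_n=\O{1/\sqrt{n}}$ to bound $np_n^2=\O{1}$, and invoke the concentration of $k_n$ together with $\E{k_n}\to\infty$ to send the denominator to infinity with high probability. Your final remark about the event $\{k_n\ge M\}$ is a nice touch, since Lemma~\ref{k_n_concentrates} as stated only gives probability $>0.9$; the paper glosses over this point.
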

\begin{proof}
  We have 
  \begin{equation}
    n - h(n,k_n) = \left\lceil \frac{n}{2} \right\rceil \frac{1}{k_n+1}.
  \end{equation}
  Because $\lceil n/2\rceil = (n/2) \T{1}$, we have
  \begin{equation}
    p_n^2(n - h(n,k_n)) = \T{1} \frac{np_n^2}{2(k_n+1)} = \frac{np_n^2}{2\E{k_n}}
    \frac{\E{k_n}}{k_n}\frac{k_n}{k_n+1} \T{1}.
  \end{equation}
  Now, we have $k_n/(k_n + 1) < 1$,  $n p_n^2 = \O{1}$, and $\E{k_n} = \om{1}$, therefore 
  \begin{equation}
    \lim_{n\rightarrow \infty} \frac{np_n^2}{2\E{k_n}}\frac{k_n}{k_n+1} \T{1} = 0.
  \end{equation}
  Finally, we recall that $\E{k_n}/k_n = \T{1}$ with high probability, which concludes the proof.
\end{proof}
We now consider the estimation of $p_n$.
\begin{lemma}
  \label{m_n_equal_p_n}
  Let $G_n\sim\cG(n,p_n,q_n)$ be a stochastic blockmodel with the same conditions on $p_n$ and
  $q_n$ as in Theorem \ref{thm:upbound}. Let $m_n$ be the total number of edges in $G_n$. Then the
  probability $p_n$ can be estimated asymptotically from $m_n$ and $n$,
  \begin{equation}
    \frac{4m_n}{n^2} = p_n \left(1 + \O{1/n}\right),
    \quad \text{with high probability}.
  \end{equation}
\end{lemma}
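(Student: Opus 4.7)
The plan is to split the argument into a deterministic moment computation and a concentration step. I would first decompose the total edge count as $m_n = m_1 + m_2 + k_n$, where $m_1$ counts the edges inside community $C_1$, $m_2$ counts the edges inside $C_2$, and $k_n$ counts the cross-community edges. By Definition \ref{defn:DSB} these are independent binomials $m_1 \sim B(\binom{n_1}{2}, p_n)$, $m_2 \sim B(\binom{n_2}{2}, p_n)$, and $k_n \sim B(n_1 n_2, q_n)$, so
\begin{equation*}
  \E{m_n} = \left(\binom{n_1}{2} + \binom{n_2}{2}\right)p_n + n_1 n_2\, q_n.
\end{equation*}

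Next I would evaluate the geometric factors using $n_1 = \lceil n/2\rceil$ and $n_2 = \lfloor n/2\rfloor$, which gives $\binom{n_1}{2} + \binom{n_2}{2} = n^2/4 + \O{n}$ and $n_1 n_2 = n^2/4 + \O{1}$. Therefore
\begin{equation*}
  \frac{4\E{m_n}}{n^2} = p_n\left(1 + \O{1/n}\right) + q_n\left(1 + \O{1/n}\right).
\end{equation*}
Invoking the hypothesis $q_n = \o{p_n/n}$ of Theorem \ref{thm:upbound}, the second summand is absorbed into $p_n \cdot \O{1/n}$, so the expected mean edge density already matches the target expression.

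The remaining task is to show that $m_n$ concentrates tightly enough around $\E{m_n}$ that $|m_n - \E{m_n}|/\E{m_n} = \O{1/n}$ with high probability. Since $m_n$ is a sum of independent Bernoullis, I would apply the same multiplicative Chernoff bound already used in Lemma \ref{k_n_concentrates},
\begin{equation*}
  \prob{|m_n - \E{m_n}| > \delta\, \E{m_n}} \le 2\exp\!\left(-\delta^2 \E{m_n}/3\right),
\end{equation*}
choose $\delta$ of order $1/n$, and use $\E{m_n} = \T{n^2 p_n}$ combined with $p_n = \om{\log n/n}$ to drive the right-hand side to $0$. Combining the concentration with the mean estimate then gives $4m_n/n^2 = \E{4m_n/n^2}(1+\O{1/n}) = p_n(1+\O{1/n})$ w.h.p., as claimed.

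The main obstacle is the concentration step: the naive standard-deviation of $m_n$ is of order $n\sqrt{p_n}$, whereas the target error $\O{n p_n}$ is of a different order when $p_n$ vanishes. Consequently one must justify carefully that choosing $\delta \sim 1/n$ in Chernoff actually yields an exponent $\delta^2 \E{m_n}\gg 1$; this is exactly where the hypothesis $p_n = \om{\log n/n}$ will be essential, since it is precisely the threshold guaranteeing that the multiplicative Chernoff bound at scale $\delta = 1/n$ produces a vanishing failure probability. Everything else is routine asymptotic bookkeeping.
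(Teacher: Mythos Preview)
Your decomposition $m_n=m_1+m_2+k_n$ and the mean computation match the paper's proof exactly; the issue is the concentration step. With $\delta=c/n$ the Chernoff exponent is
\[
\frac{\delta^{2}\,\E{m_n}}{3}=\frac{c^{2}}{3n^{2}}\cdot\T{n^{2}p_n}=\T{p_n},
\]
and since $p_n\le 1$ (indeed $p_n=\O{1/\sqrt{n}}$ in Theorem~\ref{main_theorem}) this quantity does \emph{not} tend to infinity, so the bound $2\exp(-\delta^{2}\E{m_n}/3)$ does not vanish. The hypothesis $p_n=\om{\log n/n}$ gives $\E{m_n}=\om{n\log n}$, which is enough for Chernoff to yield relative error $\o{1}$, but certainly not relative error $\O{1/n}$. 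Your own ``main obstacle'' paragraph identifies the tension correctly---the standard deviation $\T{n\sqrt{p_n}}$ dominates the target error $\T{np_n}$ whenever $p_n=\o{1}$---but then asserts the wrong resolution.

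The paper follows the identical route (same decomposition, Chernoff on the three binomials via a union bound), but in its concentration step it only claims the weaker $m_n/\E{m_n}=\T{1}$ with high probability, and then combines this with the mean estimate $4\E{m_n}/n^{2}=p_n(1+\O{1/n})$. Strictly speaking that combination yields $4m_n/n^{2}=p_n\cdot\T{1}$ rather than the stated $p_n(1+\O{1/n})$; fortunately only the weaker form is used downstream in Theorem~\ref{D_n_boostrap}. If you want your argument to go through, claim only $m_n/\E{m_n}=1+\o{1}$ from Chernoff, not $1+\O{1/n}$.
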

\begin{proof}
  The proof proceeds in two steps. We first show that $k_n$ concentrates around its expectation
  $\E{k_n}$,  and then we argue that $\lim_{n\rightarrow \infty} 4\E{m_n}/n^2= p_n$.\\

  \noindent The total number of edges, $m_n$, in the graph $G_n$, can be decomposed as
  \begin{equation}
    m_n = m_{n_1} + m_{n_2} + k_n,
  \end{equation}
  where $m_{n_1}$ ($m_{n_2}$) is the number of edges in community $C_1$ ($C_2$). The three random
  variables are binomial (with different parameters), and they concentrate around their respective
  expectations. Consequently $m_n$ also concentrates around its expectation, and we can combine
  three Chernoff inequalities using a union bound to show that
  \begin{equation}
    \frac{m_n}{\E{m_n}} = \T{1}, \quad \text{with high probability.}
    \label{m_n_concentrates}
  \end{equation}

  \noindent A quick computation of $\E{m_n}$ shows that
  \begin{equation}
    \E{m_n} = p_n\frac{n^2}{4}\left(1 - \frac{2}{n} + \frac{q_n}{p_n} + \o{\frac{1}{n^2}}\right).
  \end{equation}
  Also, $q_n/p_n = \o{1/n}$, and thus
  \begin{equation}
    \E{m_n} = p_n\frac{n^2}{4}\left(1 + \O{\frac{1}{n}}\right).
    \label{Em_n}
  \end{equation}
  To conclude, we combine (\ref{m_n_concentrates}) and (\ref{Em_n}), to get
  \begin{equation}
    \frac{4m_n}{n^2} = \frac{m_n}{\E{m_n}} \frac{4\E{m_n}}{n^2} = p_n \left(1 - \O{\frac{1}{n}}\right),
  \end{equation}
  which concludes the proof.
\end{proof}
We define the following statistic that asymptotically converges toward $p_n^2 (D_n -
h(n.k_n))$ with high probability, as explained in the next theorem.
\begin{definition}
  Let $G_{n+1}\sim \cG(n,p_n,q_n)$ be a stochastic blockmodel with the same conditions on $p_n$ and
  $q_n$ as in Theorem \ref{thm:upbound}. Let $G_{n}$ be the subgraph induced by the vertex set
  $[n]$. Let $D_n = \RD\left(G_n,G_{n+1}\right)$ be the normalized effective resistance distance,
  $\RD$, defined in \eqref{eq:metric-defn}.\\

  \noindent We define the statistic
  \begin{equation}
    Z_n \eqdef \frac{16m_n^2}{n^4} \left(D_n - n \right).
    \label{the_statistic}
  \end{equation}
\end{definition}
\begin{Theorem}
  Let $G_n\sim\cG(n,p_n,q_n)$ be a stochastic blockmodel with the same conditions on $p_n$ and
  $q_n$ as in Theorem \ref{thm:upbound}. If $p_n = \O{1/\sqrt{n}}$, then we have
  \begin{equation}
    Z_n = p_n^2 \left(D_n - h(n,k_n)\right) \left (1 + \o{1}\right), \quad \text{ with high probability.}
  \end{equation}
  \label{D_n_boostrap}
\end{Theorem}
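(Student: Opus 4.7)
The plan is to directly manipulate the defining expression for $Z_n$ and substitute the asymptotic identities supplied by the two preceding lemmas. Writing
\begin{equation*}
Z_n = \left(\frac{4 m_n}{n^2}\right)^{\!2} \bigl(D_n - n\bigr),
\end{equation*}
I would first invoke Lemma \ref{m_n_equal_p_n} to replace the prefactor. Since $4 m_n / n^2 = p_n (1 + \O{1/n})$ with high probability, squaring preserves the form of the error and yields $\bigl(4 m_n/n^2\bigr)^2 = p_n^2 \bigl(1 + \O{1/n}\bigr)$, again with high probability (via a union bound on the finitely many events already shown to be high-probability).

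Next I would split the second factor around the target quantity $h(n,k_n)$ by writing $D_n - n = \bigl(D_n - h(n,k_n)\bigr) - \bigl(n - h(n,k_n)\bigr)$. Multiplying through,
\begin{equation*}
Z_n = p_n^2\bigl(1 + \O{1/n}\bigr)\bigl(D_n - h(n,k_n)\bigr) \;-\; p_n^2\bigl(1 + \O{1/n}\bigr)\bigl(n - h(n,k_n)\bigr).
\end{equation*}
Lemma \ref{h_equal_n} states that $p_n^2\bigl(n - h(n,k_n)\bigr) = \o{1}$ with high probability under the hypothesis $p_n = \O{1/\sqrt n}$, so the second term is an additive $\o{1}$. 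The first term is $p_n^2\bigl(D_n - h(n,k_n)\bigr)$ multiplied by $1 + \O{1/n}$, and absorbing $\O{1/n}$ into $\o{1}$ gives the advertised form $Z_n = p_n^2\bigl(D_n - h(n,k_n)\bigr)(1 + \o{1})$ up to an additive $\o{1}$ remainder.

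The only subtle point, and the main obstacle, is the interpretation of the error in the regime of the null hypothesis, where $p_n^2\bigl(D_n - h(n,k_n)\bigr)$ is only $\O{1}$ by Corollary \ref{D_n_H0} rather than divergent. In that regime, a purely multiplicative $(1+\o{1})$ statement is meaningless when the main term can be close to zero, so one must record that the residual is bounded by $\o{1}$ in an additive sense plus $\O{1/n}$ times the bounded main term, which is again $\o{1}$. Under the alternative, Corollary \ref{D_n_H1} gives $p_n^2\bigl(D_n - h(n,k_n)\bigr) \to \infty$ with high probability, so the additive $\o{1}$ remainder is genuinely negligible compared to the main term and the multiplicative $(1+\o{1})$ reading is exact. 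Thus in both relevant regimes the stated identity licenses using $Z_n$ as a proxy for $p_n^2\bigl(D_n - h(n,k_n)\bigr)$, which is what the hypothesis test in Theorem \ref{main_theorem} requires.
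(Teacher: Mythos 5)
Your proposal is correct and follows essentially the same route as the paper: decompose $D_n - n$ around $h(n,k_n)$, invoke Lemma~\ref{m_n_equal_p_n} for the prefactor $16m_n^2/n^4$ and Lemma~\ref{h_equal_n} for the residual $p_n^2(n - h(n,k_n))$, then combine. Your closing observation---that the multiplicative $(1+\o{1})$ reading is only genuinely meaningful when $p_n^2(D_n - h(n,k_n))$ stays bounded away from zero or diverges, and that under $H_0$ the correct interpretation is really an additive $\o{1}$ error on a bounded quantity---is exactly the subtlety the paper handles with the terse remark that the main term ``is either bounded, or goes to infinity,'' so you have made explicit a point the paper leaves implicit.
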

\begin{proof}
  The proof is an elementary consequence of the two lemmas \ref{h_equal_n} and \ref{m_n_equal_p_n}. We
  have
  \begin{equation}
    \frac{16m_n^2}{n^2} \left(D_n - n \right) 
    =  \frac{16m_n^2}{n^4} \left(D_n - h(n,k_n) \right) + \frac{16m_n^2}{n^4} \left(h(n,k_n) - n
    \right).
  \end{equation}
  Using lemma \ref{m_n_equal_p_n}, we have
  \begin{equation}
    \frac{16m_n^2}{n^2} \left(D_n - n \right) 
    =  p_n^2\left(1 + \O{1/n} \right)^2 \left(D_n - h(n,k_n)     \right)  
    + \left(1 + \O{1/n} \right)^2 p_n^2 \left(h(n,k_n) - n       \right).
  \end{equation}
  Lemma \ref{h_equal_n} shows that the second term can be neglected,
  \begin{equation}
    \begin{split}    
      \frac{16m_n^2}{n^2} \left(D_n - n \right) 
      &=  p_n^2\left(1 + \O{1/n}\right) \left(D_n - h(n,k_n) \right)
      + \left(1 + \O{1/n}  \right) \o{1}\\
      &=  p_n^2\left(D_n - h(n,k_n) \right)\left(1 + \O{1/n} \right)  + \o{1}.
    \end{split}
  \end{equation}
  Because $p_n^2\left(D_n - h(n,k_n) \right)$ is either bounded, or goes to infinity, we have
  \begin{equation}
    \frac{(16m_n^2/n^2)(D_n - n)}{p_n^2\left(D_n - h(n,k_n) \right)}
    = 1 + \o{1}, 
  \end{equation}
  which concludes the proof.
\end{proof}
We finally arrive at the main theorem.
\begin{Theorem}
  Let $G_{n+1}\sim \cG(n,p_n,q_n)$ be a stochastic blockmodel with the same conditions on $p_n$ and
  $q_n$ as in Theorem \ref{thm:upbound}. Let $G_{n}$ be the subgraph induced
  by the vertex set $[n]$.\\

  \noindent To test the hypothesis
  \begin{equation}
    H_0: \quad k_n = k_{n+1}
  \end{equation}
  versus 
  \begin{equation}
    H_1: \quad k_n < k_{n+1}
  \end{equation}
  we use the test based on the statistic $Z_n$ defined in (\ref{the_statistic}) where we accept
  $H_0$ if $Z_n < z_\varepsilon$ and accept $H_1$ otherwise. The threshold $z_\varepsilon$ for the rejection
  region satisfies
  \begin{equation}
    \proba_{H_0}\left(Z_n \ge z_\varepsilon\right) \leq \varepsilon \quad \text{as}\quad n \rightarrow \infty,
  \end{equation}
  and
  \begin{equation}
    \proba_{H_1}\left(Z_n \ge z_\varepsilon\right) \rightarrow 1 \quad \text{as}\quad n \rightarrow \infty.
  \end{equation}
  The test has therefore asymptotic level $\varepsilon$ and asymptotic power 1.
\end{Theorem}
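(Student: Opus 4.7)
The plan is to reduce the testing problem to the asymptotic control of $p_n^2(D_n - h(n,k_n))$, for which all the heavy work has already been carried out in the preceding results. Theorem \ref{D_n_boostrap} guarantees that, with high probability,
\begin{equation*}
Z_n = p_n^2 \left(D_n - h(n,k_n)\right)(1+\o{1}),
\end{equation*}
so the asymptotic behavior of $Z_n$ under either hypothesis is controlled by $p_n^2(D_n - h(n,k_n))$ up to a multiplicative factor that tends to $1$. Hence it suffices to analyze this latter quantity separately under $H_0$ and $H_1$.

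Under the null hypothesis, I would invoke Corollary \ref{D_n_H0}, which states that $0 \leq p_n^2(D_n - h(n,k_n)) = \O{1}$ with high probability. Combined with the asymptotic equivalence above, this yields the tightness of $\{Z_n\}$ under $H_0$: for every $\varepsilon>0$, there exists $z_\varepsilon$ (and $n_0$) such that $\proba_{H_0}(Z_n \ge z_\varepsilon) \leq \varepsilon$ for all $n\ge n_0$. This produces the asymptotic level $\varepsilon$.

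Under the alternate hypothesis, the lower bound of Theorem \ref{thm:lowerbd} gives
\begin{equation*}
p_n^2\left(D_n - h(n,k_n)\right) \ge \frac{1}{16}\frac{(np_n)^2}{k_n^2} + \frac{1}{16}\frac{(np_n)^2 p_n}{\dbar}\O{\frac{1}{k_n}}.
\end{equation*}
Using Lemma \ref{k_n_concentrates} to replace $k_n$ by $\E{k_n} = \T{n^2 q_n}$ with high probability, the leading term satisfies
\begin{equation*}
\frac{(np_n)^2}{k_n^2} = \T{1}\left(\frac{p_n}{n q_n}\right)^2 = \om{1},
\end{equation*}
since $q_n = \o{p_n/n}$. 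Hence $p_n^2(D_n - h(n,k_n)) \to \infty$ with high probability under $H_1$, and applying Theorem \ref{D_n_boostrap} again we obtain $Z_n \to \infty$ with high probability. Consequently, the threshold $z_\varepsilon$ chosen above satisfies $\proba_{H_1}(Z_n \ge z_\varepsilon) \to 1$, which gives asymptotic power $1$.

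The delicate point is purely bookkeeping, rather than a genuine mathematical obstacle: several of the building blocks (the equivalence in Theorem \ref{D_n_boostrap}, the concentration of $k_n$, the high-probability bounds on effective resistances) hold only on a high-probability event, and these events must be intersected via a union bound before the inequalities can be chained. Since each of these events occurs with probability tending to $1$, the union bound does not degrade the asymptotic statements and the test has, as advertised, asymptotic level $\varepsilon$ and asymptotic power $1$.
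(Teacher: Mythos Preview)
Your proposal is correct and follows essentially the same route as the paper: under $H_0$ you combine Corollary~\ref{D_n_H0} with Theorem~\ref{D_n_boostrap} to get tightness of $Z_n$, and under $H_1$ you combine the lower bound of Theorem~\ref{thm:lowerbd} with the concentration of $k_n$ to show $Z_n\to\infty$ with high probability. The only cosmetic difference is that the paper packages the $H_1$ step as Corollary~\ref{D_n_H1} and cites it, whereas you reproduce that corollary's computation inline; note also that your displayed lower bound carries a stray extra factor of $p_n$ in the second term (it should read $\frac{(np_n)^2}{\dbar}\O{1/k_n}$), though this does not affect the argument since only the first term is used to conclude divergence.
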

\begin{proof}
  Assume $H_0$ to be true. Because of corollary  \ref{D_n_H0} and Theorem \ref{D_n_boostrap},
  \begin{equation}
    Z_n = \O{1} , \quad \text{with high probability.}
  \end{equation}
  In other words, for every $0 < \varepsilon < 1$ there exists $z_\varepsilon$ such that
  \begin{equation}
    \prob{Z_n <z_\varepsilon} = 1 - \varepsilon, \quad \text{as} \quad n \rightarrow \infty,
  \end{equation}
  or
  \begin{equation}
    \prob{Z_n \ge z_\varepsilon} = \varepsilon, \quad \text{as} \quad n \rightarrow \infty.
  \end{equation}
  Assume now $H_1$ to be true. Because of corollary  \ref{D_n_H1} and Theorem \ref{D_n_boostrap},
  \begin{equation}
    Z_n = \om{1} , \quad \text{with high probability}
  \end{equation}
  Therefore, for every $0 < \gamma < 1$, there exists $n_0$ such that 
  \begin{equation}
    \forall n \ge n_0, \prob{Z_n > z_\varepsilon} = 1 - \gamma, \quad \text{as} \quad n \rightarrow \infty.
  \end{equation}
  In other words,
  \begin{equation}
    \proba_{H_1}\left(Z_n \ge z_\varepsilon\right) \rightarrow 1 \quad \text{as}\quad n \rightarrow \infty,
  \end{equation}
  which concludes the proof.
\end{proof}
\printbibliography
\end{document}